\definecolor{dkgreen}{rgb}{0,0.6,0}
\definecolor{gray}{rgb}{0.5,0.5,0.5}
\definecolor{mauve}{rgb}{0.58,0,0.82}
\newcommand{\jr}[1]{\todo{JR: #1}}
\newcommand{\sj}[1]{\todo[inline]{SJ: #1}}
\tiny\color{gray},
\newcommand{\Rom}[1]
    {\texorpdfstring{\MakeUppercase{\romannumeral #1}}{\romannumeral #1}}
\newcommand{\nat}{\mathbb{N}}
\newcommand{\natinf}{\mathbb{N}^{\infty}}
\newcommand{\real}{\mathbb{R}}
\newcommand{\nonegreal}{\real^{\infty}_{\geq 0 }}
\newcommand{\probinterval}{[0, 1]}
\newcommand{\settorf}{\mathbf{2}}
\newcommand{\sybsys}{S}
\newcommand{\sybspec}{R}
\newcommand{\sybprod}{S\otimes R}
\newcommand{\Frac}[2]{ {}^{#1} \! / \! {}_{#2} }
\newcommand{\lfpoint}[2][\cdot]{\llbracket #1 \rrbracket_{#2}}
\newcommand{\defeq}{\coloneq}
\newcommand{\cyset}[2][]{\mathrm{Cyl}^{#1}(#2)}
\newcommand{\pref}[1]{\mathrm{pre}(#1)}
\newcommand{\salg}[1]{\mathfrak{E}^{#1}}
\newcommand{\nset}[1]{[#1]}
\newcommand{\nsetbot}[1]{[#1]_{\bot}}
\newcommand{\supp}{\mathrm{supp}}
\newcommand{\dirac}[1]{\mathbf{1}_{#1}}
\newcommand\restr[2]{{
  \left.\kern-\nulldelimiterspace 
  #1 
  \vphantom{\big|} 
  \right|_{#2} 
  }}
\newcommand{\sets}{\mathbf{Set}}
\newcommand{\slice}[2]{\Frac{#1}{#2}}
\newcommand{\liftfunc}[2]{#1^{#2}}
\newcommand{\liftdist}[1]{\widetilde{#1}}
\newcommand{\predtran}[1]{\Phi_{#1}}
\newcommand{\id}{\mathrm{id}}
\newcommand{\fsys}[1]{#1_{\sybsys}}
\newcommand{\fspec}[1]{#1_{\sybspec}}
\newcommand{\fprod}[1]{#1_{\sybprod}}
\newcommand{\tvdsys}[1]{#1_{\sybsys}}
\newcommand{\tvdspec}[1]{#1_{\sybspec}}
\newcommand{\tvdprod}[1]{#1_{\sybprod}}
\newcommand{\modsys}[1]{#1_{\sybsys}}
\newcommand{\modspec}[1]{#1_{\sybspec}}
\newcommand{\modprod}[1]{#1_{\sybprod}}
\newcommand{\coalg}[1]{\mathrm{CoAlg}(#1)}
\newcommand{\accepting}{F}
\newcommand{\powersetfunc}{\mathcal{P}}
\newcommand{\fpowersetfunc}{\mathcal{P}^{f}}
\newcommand{\distributionfunc}{\mathcal{D}}
\newcommand{\subdistributionfunc}{\distributionfunc_{\leq 1}}
\newcommand{\suptlflagfunc}[2]{\powersetfunc\big((#1+\settorf)\times #2\big)}
\newcommand{\mcflagfunc}[1]{\distributionfunc\big(#1+\settorf\big)}
\newcommand{\dmcflagfunc}[1]{\distributionfunc\big(#1 + \{\star\}\big)}
\newcommand{\wmealyfunc}[1]{\big(\powersetfunc(#1\times \settorf\times \nat)\big)^A}
\newcommand{\ltsfuncsemiring}[3][A]{\powersetfunc\big((#2 + \{\star\})\times #1\times #3\big)}
\newcommand{\mcfunc}[2][A]{\distributionfunc(#2+\{\star\})\times{#1}}
\newcommand{\mcinffunc}[2][A]{\distributionfunc(#2)\times{#1}}
\newcommand{\dfautomata}[2][A]{(#2\times \settorf)^{#1}}
\newcommand{\exepnfautomata}[2][A]{\big(\powersetfunc(#2\times \settorf)\big)^{#1} }
\newcommand{\nempfaclang}[1][A]{\powersetfunc({#1}^{+})}
\newcommand{\faclang}[1][A]{\powersetfunc({#1}^{\ast})}
\newcommand{\fsublang}[1][A]{\subdistributionfunc({#1}^{+})}
\newcommand{\probmeasure}{\mathbb{P}}
\newcommand{\mrg}[2]{\restr{#1}{#2}}
\newcommand{\recharge}{\textsf{\textcolor{yellow!50!black}{recharge}}}
\newcommand{\arid}{\textcolor{brown!50!black}{\textsf{arid}}}
\newcommand{\lake}{\textcolor{blue!50!black}{\textsf{lake}}}
\newcommand{\volcano}{\textcolor{red!50!black}{\textsf{volcano}}}
\newcommand{\sand}{\textsf{sand}}
\newcommand{\yellowWord}[1]{\textcolor{orange}{#1}}
\newcommand{\cbquery}{q_{cb}}
\newif\ifdraft\drafttrue
\newif\iffull\fulltrue 
\newif\ifmaterial\materialfalse 
   \theoremstyle{acmdefinition}
   \newtheorem{remark}[theorem]{Remark}}
\renewcommand{\cref}[1]{\Cref{#1}}
\crefname{theorem}{Thm.}{Theorems}
\crefname{definition}{Def.}{Defs}
\crefname{proposition}{Prop.}{Props}
\Crefname{equation}{Eq.}{Eqs}
\crefname{equation}{Eq.}{Eqs}
\crefname{lemma}{Lem.}{Lemmas}
\crefname{remark}{Rem.}{Remarks}
\crefname{example}{Ex.}{Examples}
\crefname{proof}{Proof.}{Proofs}
\crefname{appendix}{Appendix}{Appendixes}
\crefname{figure}{Fig.}{Figs}
\Crefname{equation}{}{}
  \providecommand\BibTeX{{%
    \normalfont B\kern-0.5em{\scshape i\kern-0.25em b}\kern-0.8em\TeX}}}
\begin{document}

\title{A Unifying Approach to Product Constructions for Quantitative Temporal Inference}

\author{Kazuki Watanabe}
\email{kazukiwatanabe@nii.ac.jp}
\affiliation{%
   \institution{National Institute of Informatics}
   \state{Tokyo}
   \country{Japan}
}
\affiliation{%
   \institution{The Graduate University for Advanced Studies (SOKENDAI)}
   \state{Tokyo}
   \country{Japan}
}
\author{Sebastian Junges}
\email{sebastian.junges@ru.nl}
\affiliation{%
   \institution{Radboud University}
   \state{Nijmegen}
   \country{The Netherlands}
}
\author{Jurriaan Rot}
\email{jrot@cs.ru.nl }
\affiliation{%
   \institution{Radboud University}
   \state{Nijmegen}
   \country{The Netherlands}
}
\author{Ichiro Hasuo}
\email{i.hasuo@acm.org }
\affiliation{%
   \institution{National Institute of Informatics}
   \state{Tokyo}
   \country{Japan}
}
\affiliation{%
   \institution{The Graduate University for Advanced Studies (SOKENDAI)}
   \state{Tokyo}
   \country{Japan}
}
\begin{abstract}
Probabilistic programs are a powerful and convenient approach to formalise distributions over system executions. A classical verification problem for probabilistic programs is \emph{temporal inference}: to compute the likelihood that the execution traces satisfy a given temporal property. This paper presents a general framework for temporal inference, which applies to a rich variety of quantitative models including those that arise in the operational semantics of probabilistic and weighted programs. 

The key idea underlying our framework is that in a variety of existing approaches, the main construction that enables temporal inference is that of a \emph{product} between the system of interest and the temporal property. We provide a unifying mathematical definition of product constructions, enabled by the realisation that 1) both systems and temporal properties can be modelled as \emph{coalgebras} and 2) product constructions are \emph{distributive laws} in this context. Our categorical framework leads us to our main contribution: a sufficient condition for \emph{correctness}, which is precisely what enables to use the product construction for temporal inference. 

We show that our framework can be instantiated to naturally recover a number of disparate approaches from the literature including, e.g., partial expected rewards in Markov reward models, resource-sensitive reachability analysis, and weighted optimization problems.
Further, we demonstrate a product of weighted programs and weighted temporal properties as a new instance to show the scalability of our approach.

\end{abstract}


\begin{CCSXML}
<ccs2012>
   <concept>
       <concept_id>10011007.10011074.10011099.10011692</concept_id>
       <concept_desc>Software and its engineering~Formal software verification</concept_desc>
       <concept_significance>500</concept_significance>
       </concept>
 </ccs2012>
\end{CCSXML}

\ccsdesc[500]{Software and its engineering~Formal software verification}

\keywords{temporal inference, coalgebra, probabilistic programming}


\maketitle

\ifmaterial
\else

\section{Introduction}
Probabilistic programming is a powerful methodology that uses syntax and semantics from programming languages to describe probabilistic models and the distributions they induce. Probabilistic programming ecosystems make probabilistic reasoning accessible to a wide audience~\cite{DBLP:conf/icse/GordonHNR14} and make it easier to integrate probabilistic reasoning as a function inside a program~\cite{GoodmanMRBT08}. 

Classically, probabilistic programs describe distributions over the output of a program. Many probabilistic programming languages come with (semi-)automatic inference algorithms that effectively compute statistics of these distributions. 
A range of design decisions balance the expressivity of the languages with the performance of specific types of inference. 
 Many approaches execute the programs, efficiently sampling paths and collecting the outputs (see e.g.~\cite{abs-1809-10756,barthe2020foundations}). Exact alternatives typically reflect a weighted model counting approach whose performance relies on the effective aggregation of such paths~\cite{DBLP:conf/se/FilieriPV14,holtzen2020scaling,DBLP:journals/pacmpl/SusagLHR22}. 
 Program calculi for verification of probabilistic programs instead rely on marginalizing the paths out and reasoning using fuzzy state predicates~\cite{DBLP:journals/jcss/Kozen85,DBLP:series/mcs/McIverM05,DBLP:phd/dnb/Kaminski19}.

\begin{wrapfigure}[17]{r}{0pt}
    \centering
    \includegraphics[width=.6\textwidth]{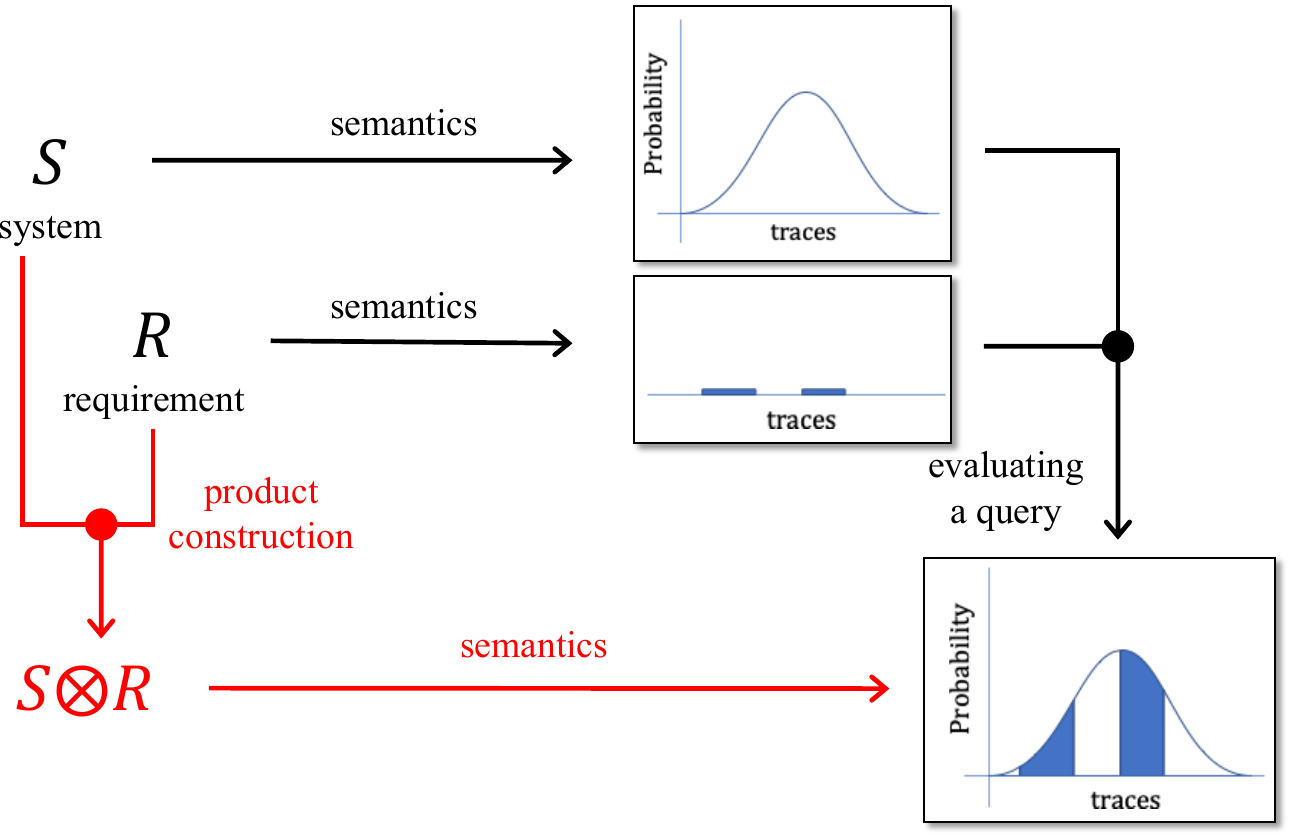}


        \caption{Temporal inference: its problem definition (black) and its tractable solution by product construction (red).}
    \label{fig:inference_prob_agent_operation}
\end{wrapfigure}

\paragraph{From output distributions to traces}
In many situations not only the output distribution is relevant, but also the observable behaviour during execution.
In these settings, probabilistic programs describe distributions over system \emph{traces}, which can be captured as formal languages or time series. 
Indeed, probabilistic programs are useful to describe the sequential executions of, e.g., network protocols~\cite{SmolkaKFK017,DBLP:conf/pldi/SmolkaKKFHK019}, computer hardware~\cite{DBLP:conf/fmics/RobertsLHBRCZ21}, and planning problems~\cite{sanner2010relational}.
Given a \emph{requirement}, which describes the desirable traces in a formal language, the problem of \emph{temporal inference} is then to determine the probability of program traces to meet the requirement (\cref{fig:inference_prob_agent_operation},  black).

It is common to specify the requirement as a state machine, which can then be used as a monitor for the requirement to solve the temporal inference problem. However, marginalization of paths to the requirement is not trivial here and a particular challenge of temporal inference in this setting is that naively, the set of system traces is gigantic due to the combinatorial explosion.

\paragraph{Product constructions}
Fortunately, even in this setting most temporal inference problems do not actually require analyzing (an approximation of) the full distribution of traces described by the program. Suppose, for instance, that we wish to infer the probability of two successive packets in a network being dropped. For this purpose it suffices to monitor whether the previous packet has been dropped and whether at some point two successive packets have been dropped. All other events that occur during traces can be ignored, leading to a massive reduction of the space of traces.

More generally, to answer a temporal inference query, it suffices to infer the (distribution of the) final status of the monitor, and use this to answer the original query.
This powerful idea of analyzing a synchronous composition or \emph{product} of a system and its monitor is illustrated in \cref{fig:inference_prob_agent_operation} (red); we illustrate an example in~\cref{sec:agent_operation}. It is  a fundamental idea in the verification of linear temporal logics~\cite{Pnueli77,DBLP:journals/jcss/VardiW86} and has been reapplied for a variety of other properties in quantitative verification, such as resource-bounded reachability~\cite{AndovaHK03,LaroussinieS05,DBLP:conf/nfm/BaierDDKK14,HartmannsJKQ20}, timed traces on continuous-time Markov chains~\cite{DBLP:conf/lics/ChenHKM09}, and conditional probabilities~\cite{DBLP:conf/tacas/BaierKKM14}.  
These constructions are derived on a case-by-case basis. Our aim is a generic framework that describes product constructions and sufficient criteria that state for what types of queries inference on these product constructions is correct.

In this work, we assume that operationally, systems are captured by a weighted or probabilistic transition system and the requirement by finite automata. All these types of transitions systems are instances of \emph{coalgebras}~\cite{Jacobs16,Rutten00}, which we therefore choose to represent both systems and requirements. At this level of generality,  we define a synchronous product of the operational models for (any) system $S$ and requirement $R$.
The key idea here is commutation: it does not matter whether we first take the semantics of $S$ and $R$ and adequately combine them, or first take the product of $S$ and $R$ and compute the semantics for this product. The main theorem in this paper formalizes the requirements for which these approaches indeed commute.

\paragraph{A broad framework}
Our framework applies to both well-understood and to less studied programs and queries. For probabilistic programs, we support almost-surely terminating programs that describe a distribution over finite traces, and regular specifications captured by DFAs. 
We also use this construction to support expected rewards, cost-bounded specifications that accumulate cost along every path.
 Importantly, the restriction to almost-surely terminating programs can be lifted, and for such programs we support safety properties.
Furthermore, our construction is not restricted to probabilistic programs. Indeed, temporal inference on a variety of weighted programs is supported and allows inference on travel costs and optimal resource utilization. 
We present the list of instances in~\cref{fig:list_of_instances}, where we highlight that we discuss different types of programs, types of requirements, and that these lead to different types of inferences on the product.

\begin{figure}
    \scalebox{0.7}{
    \begin{tabular}{lccc}\toprule
    Section & System & Requirement & Inference on Product\\\midrule
    \cref{sec:coalgebraic_inference},\cref{sec:coal_model_checking_regular_semantics}     &  Probabilistic Program that Almost-Surely Terminates                        & RSP              & Terminating Probability\\
    \cref{sec:rewards_prob_program}                                                &  Probabilistic Program with Reward               & RSP              & Partial Expected Reward \\
    \cref{sec:cost_bounded_consumption}                                            &  Probabilistic Program that Almost-Surely Terminates                          & Resource                             & Terminating Probability\\
    \cref{sec:MCsNeverTerminate}                                                   &  Probabilistic Program that Never Terminates     & RSP              & Terminating Probability\\
    \cref{sec:weighted_programming}                                                &  Weighted Program that Terminates                               & RSP              & Shortest Path\\
    \cref{sec:quantitativeInferenceWeightedSystemRequirement}                      &  Weighted Program that Terminates                                & Weighted RSP   & Shortest Path
    \\\bottomrule
    \end{tabular}
    }
    \caption{Instances of coalgebraic inference with coalgebraic product construction. RSP denotes the regular safety property.
    }
    \label{fig:list_of_instances}
\end{figure}

\paragraph{The framework as a stepping stone to \emph{product programs}.}
While the framework first and foremost provides a theoretical underpinning, we briefly reflect on its role in developing efficient inference engines.
The only approaches to temporal inference that do not rely on a (possibly ad-hoc) product construction are sampling-based: They sample traces from the system $S$ and then check whether that trace meets the requirement $R$. Such approaches cannot be employed for efficient exact inference. We firmly believe that efficient implementations will need to implement a product construction. 
Indeed, it is generally beneficial to have a small highly efficient and potentially formally verified core that enables fast inference. Reducing temporal inference to such efficient core computations via a product construction is, in our opinion, \emph{the} way to provide tool support for temporal inference in probabilistic and weighted programs. 
In fact, probabilistic inference tools such as  Psi~\cite{GehrMV16} or Dice~\cite{holtzen2020scaling} can be used for temporal inference; however, this typically requires human ingenuity, basically encoding the product construction into the input. As we show in this paper, the correctness of such product constructions is not always trivial and the manual construction of such products remains error prone.
The notable exception to this manual translation is the \textsc{Rubicon} transpiler~\cite{DBLP:conf/cav/HoltzenJVMSB20}, which exemplifies the creation of such product programs to infer finite-horizon reachability in Dice~\cite{holtzen2020scaling} programs. Indeed, the framework in this paper can been as a step toward generalizing this approach: The product construction in this paper is seen as providing an operational semantics of some \emph{product program}, i.e., a program that incorporates the specification and adjusts its return type (e.g.~\cite{DBLP:conf/cav/HoltzenJVMSB20}). These programs are amenable to any approximate or exact inference engine and the correctness of the construction is independent of the correctness of the inference engine.

\paragraph{Algorithmic implications of the framework.}
Finally, we want to remark that the framework provides two \emph{algorithmic} approaches for temporal inference.
On one hand, it provides, as a byproduct of our general product construction,  a least fixed point characterization of the solution of a temporal inference problem. This characterization provides value-iteration-like~\cite{DBLP:books/wi/Puterman94} iterative algorithms to approximate the solution. On the other hand, this fixed point characterization is a stepping stone to apply ideas such as inductive invariants~\cite{DBLP:conf/cav/KoriUKSH22,DBLP:conf/cav/ChatterjeeGMZ22,DBLP:conf/tacas/BatzCJKKM23} to answer a variety of queries.

\paragraph{Contributions.}
To summarize, this paper presents a unifying framework for probabilistic and weighted temporal inference with temporal properties. Concretely, we contribute
\begin{itemize}
    \item a generic definition of temporal inference queries (in~\cref{sec:coalgebraic_inference}),
    \item a generic approach to performing this type of inference, based on a coalgebraic product construction (in~\cref{sec:coal_model_checking_regular_semantics}), 
 together with a correctness criterion  for this product construction, and
    \item case studies of coalgebraic quantitative inference of variable complexity (in~\cref{sec:cost_bounded_consumption,sec:rewards_prob_program,sec:weighted_programming,sec:quantitativeInferenceWeightedSystemRequirement,sec:MCsNeverTerminate}). 
\end{itemize}
We use the correctness criterion (and a weaker correctness criterion) to recover results on established (\cref{sec:cost_bounded_consumption,sec:rewards_prob_program,sec:weighted_programming,sec:MCsNeverTerminate}) and new (\cref{sec:quantitativeInferenceWeightedSystemRequirement}) temporal inference queries. We start the paper with an overview that illustrates the various types of temporal inference and the product construction. We present related work in \cref{sec:relatedwork}.


\paragraph{Notation}
We write $\settorf$ for the set $\{\top, \bot\}$ of Boolean values. We write $\bot$ for the least element of a partial order, assuming it exists. 
Let $X$ be a set. The set $\powersetfunc(X)$ contains the subsets of $X$ and the set $\fpowersetfunc(X)$ the finite subsets of $X$. 
We write $\distributionfunc(X)$ for the set of distributions on $X$ whose support is at most countable,
and $\subdistributionfunc(X)$ for the set of subdistributions on $X$ with countable support.
For a subdistribution $\nu$, we write $\supp(\nu)$ for the set of supports of $\nu$. We write $\natinf$ for the set of integers and the positive infinity $\infty$.  For a natural number $n\in \nat$, we let $\nset{n} := \{1, \dots, n\}$. We write $(X \rightarrow Y)$ or $Y^X$ for the set of functions from $X$ to $Y$. For functions of the form $f \colon X \rightarrow (Y \rightarrow Z)$ we often write $f(x,y)$ instead of $f(x)(y)$, and similarly for functions of the form $P \colon X \rightarrow \distributionfunc(Y)$ we write $P(x,y)$ instead of $P(x)(y)$.
We write $\dirac{f=g}$ for the value $1$ if $f = g$ holds, $0$ otherwise. 
We write $X+Y$ or $X\uplus Y$ for the disjoint union of the sets $X$ and $Y$.

\section{Overview}
\label{sec:overview}

The typical inference task in probabilistic programming languages is to infer (statistics of) the posterior distribution described by the program. 
In this overview, we highlight probabilistic and weighted programs where we perform inference not on the outcome of the program, but on the executions themselves, which we refer to as \emph{temporal inference}, and demonstrate how we can use \emph{product constructions} to reduce the problem to  inference on the posterior.

These examples of temporal inference problems motivate the overall challenge addressed in this paper: to give a formal framework that captures the mathematical essence of such product constructions and their use for temporal inference. Moreover, this framework should be flexible and general enough to easily show the correctness of product constructions for a broad range of systems and temporal properties, including the various examples in this overview section.

\subsection{Probabilistic Temporal Inference in Probabilistic Programs}
\label{sec:agent_operation}

\begin{figure}
    \centering
    \begin{minipage}{4cm}
        \vspace{-1em}
        \includegraphics[scale=0.23]{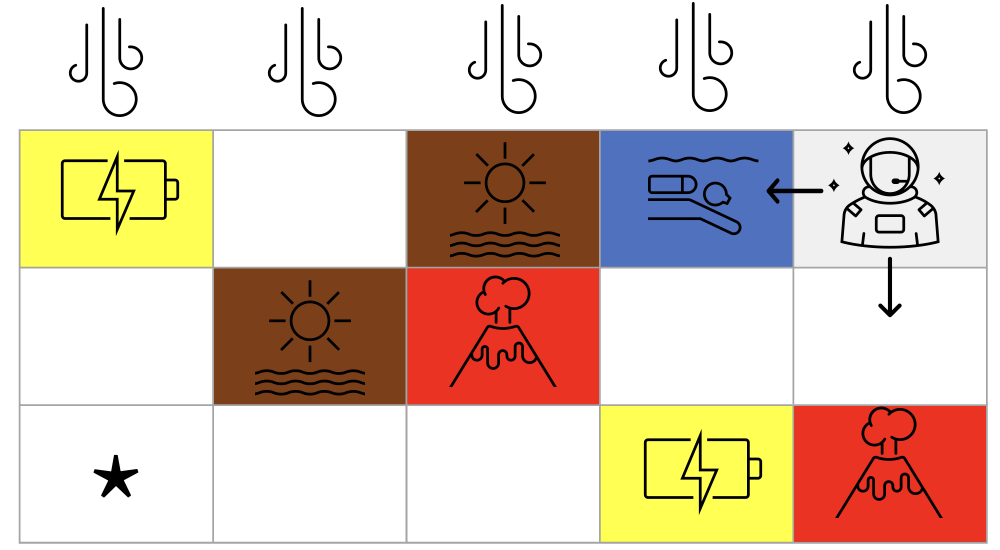}
    \end{minipage}
    \hspace{5pt}
    \begin{minipage}{5cm}
   
    \begin{lstlisting}
i = 5; j = 3; a = sand;
while (i > 1 or j > 1) {
  {i <- max(i-1, 1)} [4/5] 
  {j <- max(j-1, 1)} 
  a <- label(i,j);
}
    \end{lstlisting}
    
    \end{minipage}
    \begin{minipage}[t]{0.3\textwidth}
  \centering
  \raisebox{-20pt}{
    \scalebox{0.7}{
    \begin{tikzpicture}
        \node[state] (q0) at (0, 0) {$y_0$};
        \node[state, accepting] (q1) at (2.5, 0) {$y_1$};
        \node[state] (q2) at (-2.5, 0) {$y_2$};
        \node[state] (q3) at (0, -2) {$y_3$};
        \draw[->] (q0) edge[above] node{\recharge} (q1);
        \draw[->] (q0) edge[below,bend left,] node{\lake} (q2);
         \draw[<-] (q0) edge[above right,bend right] node{\arid} (q2);
         \draw[->] (q1) edge[above, bend right] node{\lake} (q2.north);
         \draw[->] (q2) edge[below left, bend right] node[align=center]{\recharge,\\\volcano} (q3);
         \draw[->] (q0) edge[right] node{\volcano} (q3);
         \draw[->] (q1) edge[below right, bend left] node{\volcano} (q3);
    \end{tikzpicture}  
    }
    }
  \end{minipage}
        \caption{Robot moves in the $5\times 3$ grid world. The  requirement of the robot is given by the DFA. The state $y_0$ is the initial state, and $y_1$ is the unique accepting state in the DFA. We omit all self-loops for readability. 
        For instance, we omit the self-loops on the states $y_2$ and $y_3$ whose assigned character is $\lake$.} 
    \label{fig:agent_operating}
\end{figure}

We start with a classical reach-avoid inference task, which we concretize with a robot in a small grid world, inspired by~\cite{Vazquez-Chanlatte18,Vazquez-Chanlatte21}. The grid world consists of cells,  as displayed in \cref{fig:agent_operating}. A cell contains either \recharge{} stations (yellow), \lake{s} (blue), \arid{} areas (brown), or \volcano{s} (red). All other cells contain nothing but \sand{} (white). As the robot moves, we observe the labels of the cell, which we denote $A = \{ \sand, \recharge, \lake, \arid, \volcano \}$. Any finite sequence of areas is thus representable in $A^{+}$. We consider the case where the robot follows a fixed plan that describes in every cell what direction it aims to go. However, the actual outcome of trying to move in that direction is probabilistic, due to, e.g., wind gusts from the north.\footnote{More realistic examples in probabilistic robots show that the outcome of an actuator action is often uncertain.}

The (simplified) probabilistic program described in \cref{fig:agent_operating} is a possible implementation of the robot movement. The robot starts at $\langle i=5, j=3 \rangle$, which is a sand cell. 
As long as the robot has not reached the lower left corner, it will move left with probability $\frac{4}{5}$ and move down otherwise. We explicitly store the label of the current cell with a function $\texttt{label}$ which returns this label. We remark that this program almost-surely terminates --- we consider programs that do not necessarily almost-surely terminate later in this section.  The program then induces a distribution $\nu\in \distributionfunc(A^{+})$ over traces which represent paths that the robot takes. \jr{how does this distribution look? maybe give the probability of one or two traces. I guess we should also say that it terminates in the bottom-left more explicitly}

The aim of the robot is to reach a charging station, but this is subject to some rules: the robot should never travel over volcanos; a wet robot should not recharge; a robot becomes wet when travelling through lakes; and a robot becomes dry when travelling through arid areas. 
Formally, these rules can be captured as a set $L \subseteq A^{+}$ of safe paths, which we refer to as the \emph{requirement}. Such requirements can be encoded using regular expressions or using temporal logics such as (fragments of) (finitary) LTL.
In this example, we assume that the requirement $L$ is given by a deterministic finite automaton (DFA) as in~\cref{fig:agent_operating}. 

Generally, we want to find the probability that a trace generated by the program belongs to the set of traces specified in the requirement. More precisely, let $\nu \in \distributionfunc(A^{+})$ be the trace distribution generated by the program and $L \subseteq A^+$ (or equivalently, $L \in \nempfaclang$) be the language represented by the DFA for the requirement. 
Formally, our aim is to answer a \emph{(temporal) inference query}, namely to compute $q_\mathsf{prob}(\nu, L)$ in:
\[ q_\mathsf{prob}\colon  \distributionfunc(A^{+})\times \nempfaclang\rightarrow \probinterval, \quad\text{where}\quad\! q_\mathsf{prob}(\nu, L) \defeq \sum_{w\in L} \nu(w). \]
Such inference queries are heavily studied in the area of probabilistic model checking~\cite{DBLP:reference/mc/BaierAFK18}.\footnote{Model checking would typically define a \emph{models relation} $\models$ such that, e.g., $\nu \models L$ if $q(\nu, L) \geq \frac{1}{2}$.} \jr{do we need this footnote?}
Towards efficient inference, the key aim is to avoid computing $\nu$ and $L$ first, as the set $A^{+}$ is (countably) infinite. Instead, the standard approach is to construct a \emph{product of the operational semantics of the program and the requirement}, given as a Markov chain (MC) and a DFA, respectively.

We give concrete (but simplified) operational semantics\footnote{The operational semantics of probabilistic programs can be defined as a Markov chain that includes the assignment to the variables but also the program counter, see e.g., \cite{DBLP:journals/pe/GretzKM14}.
We omit the program counter for simplicity.} for the program in Fig.~\ref{fig:agent_operating}. The program halts in a dedicated state $\star$. 
The operational semantics is an MC with state labels, referred to below as the \emph{system}. Formally, it is defined using \emph{(regular) states} $X = \{ ( i, j ) \mid  i \in \nset{1, 5}, j \in \nset{1, 3}, (i > 1\text{ or } j > 1) \}$ where $\nset{1, m}$ is the set $\{1, \dots, m\}$, \emph{terminating state} $\star$, \emph{labels} $A = \{ \sand, \recharge, \lake, \arid, \volcano \}$ assigned to the states by the \emph{labelling function} $l\colon X\rightarrow A$, \emph{initial state} $(5, 3)$, and the \emph{transition probability} $P\colon X \rightarrow \distributionfunc(X + \{ \star\})$ with for $i > 1$ or $j > 1$,   
\begin{align*}
    P\big( (i, j) \big)\big((i', j') \big) &\defeq  \begin{cases}\frac{4}{5} &\text{ if $i' = \mathrm{max}(i - 1,\, 1)$ and $j' =j $,}\\
     \frac{1}{5} &\text{ if $i'=i$ and $j' = \mathrm{max}(j - 1,\, 1)$,}\\
     0 &\text{ otherwise.}
     \end{cases}
\end{align*}
Here, we regard the terminating state $\star$ as the grid $(1, 1)$ in order to make the definition simpler. 

Similarly, the DFA shown in~\cref{fig:agent_operating}, consists of states $Y = \{ y_0, y_1, y_2, y_3 \}$, a set of accepting states $\accepting = \{ y_3 \}$ and a labelled transition relation $T \subseteq Y \times A \times Y$ as indicated by the arrows in~\cref{fig:agent_operating}.

From the system and the  requirement, the product construction yields  a transition system whose states are pairs of states of the system and the  requirement, and whose transitions are ``synchronized'' transitions. In this product, the labels do not explicitly appear anymore. 
Concretely, the product of the MC and the DFA given above is again a Markov chain, with two semantically different terminal states, i.e., the state space is $X \times Y + \settorf$. States are thus either a tuple of an MC state and DFA state, or a Boolean flag,
which is used to indicate whether upon reaching $\star$ the execution is accepted by the DFA. For the original transition matrix $P$, we can define the new transition matrix $P'$, as follows:
\begin{align*}
    P'\big( (x, y),\,(x',  y')\big) &\defeq \begin{cases}
       P\big(x,\, x' \big) &\text{ if } (y, l(x),y') \in T,\\
       0 &\text{ otherwise,}
    \end{cases}\\
    P'\big((x, y),\, [y' \in F] \big) &\defeq 
       P\big(x,\,\star\big),\text{ where $(y, l(x), y')\in T$.}
\end{align*}
Here $[y' \in F]$ is an instance of the following notation: for a predicate $\mathrm{pred}\colon Y\rightarrow \settorf$, we let $[\mathrm{pred}(y)] = \top$ iff $\mathrm{pred}(y)$ evaluates to true.
	   
The temporal inference problem for the robot can now be answered by asking the probability of terminating while accepting, i.e., of reaching the state $\top$ or equivalently of returning $\top$. 
This probability coincides with $q(\nu, L)$ by construction as it keeps track of the path for reaching $\star$ is in $L$. Thus, the product construction enables \emph{temporal} inference by applying inference methods that analyze the distribution of outputs from the product.

\vspace{3mm}
\begin{mdframed}
The essence of tractable solutions to (exact) temporal inference is to construct a synchronous product of the program and the requirement. 
\end{mdframed}
\vspace{3mm}

\subsection{Extended Probabilistic Temporal Inference Queries}\label{ssec:ext-inference}
There exist plenty of inference queries where the requirement or query changes.

\paragraph{Partial expected rewards}
 Almost classical are \emph{partial expected rewards} (rewards are also often referred to as costs)~\cite{Baier08,Baier0KW17}. They have been studied, for instance, to determine the expected runtime of randomized algorithms~\cite{KaminskiKMO18}. 
 In terms of the earlier example, we may annotate every state with additional movement time that depends, e.g., on the type of surface. 
 We describe the system as an MC with integer movement cost. 
A natural query is to ask what the \emph{expected arrival time} is upon reaching a recharge station. 
 In that case, the requirement remains the same DFA as above, which accepts the safe paths including a recharge station. 
As before, a query takes the system and the requirement, but is now defined to compute the \emph{weighted} sum of probability times reward. We continue this discussion in \cref{sec:rewards_prob_program}.

\paragraph{Reward-bounded reachability}
Rather than inferring the \emph{expected} time to reach a cell, it is often relevant to determine the probability to reach a cell before some deadline. 
This is particularly relevant for limited resources, such as battery levels, where we want to compute the probability to reach a recharge station before the battery is empty or, in a different scenario, to compute the probability to reach the airport \emph{before} a flight departs. These scenarios  are instances of \emph{cost-bounded reachability} problems~\cite{AndovaHK03,DBLP:conf/nfm/BaierDDKK14,HartmannsJKQ20}.
 Intuitively, every path gets a finite resource cost, and we are interested in analyzing the probability that we obtain a path that meets a threshold on the cost of this path. Here, the main observation that is necessary is to realize that this requirement can be captured by finite automata. 

\paragraph{Conditional probabilities}
We want to highlight that the provided queries (e.g. $q_\mathsf{prob}$ from~\cref{sec:agent_operation}) can be building blocks towards even more generic queries. 
For example, conditional probability queries help to restrict a model to the behavior that corresponds to a set of observations. They can be supported using Bayes' rule. Formally, the following query supports analyzing the probability of satisfying a  requirement $L$ conditioned on satisfying a conditional observation $C$.
\[ q_{\textsf{cond}}\colon  \distributionfunc(A^{+})\times \nempfaclang \times \nempfaclang \rightarrow \probinterval, \quad q_{\textsf{cond}}(\nu, L, C) \defeq   \frac{q_\mathsf{prob}(\nu, L \cap C)}{q_\mathsf{prob}(\nu, C)}  = \frac{\sum_{w\in L \cap C} \nu(w)}{\sum_{w\in C} \nu(w)}. \] 
The query can be used to define the probability that we reach a charging station conditioned on reaching an arid field. The partial expected rewards can also be used to compute conditional rewards, which will be clarified in \cref{sec:rewards_prob_program}.

\begin{remark}
We assume here that the conditioning is given separately from the system. Many probabilistic programs allow to explicitly formulate the conditioning inside the program using \texttt{observe} statements~\cite{NoriHRS14,HurNRS14,DBLP:journals/toplas/OlmedoGJKKM18}\todo{should be better citations here}. However, such programs can be seen as formulating both a distribution over paths (without conditioning) and a set of paths to use for conditioning. 
\end{remark}

\subsection{Probabilistic Temporal Inference for Programs that do not Terminate}

\begin{figure}
    \centering
    \begin{minipage}{8cm}
   
    \begin{lstlisting}
x = 5; y = 3; a = sand;
while (true) {
  {x <- max(x-1,0)}   [1/4] {y <- max(y-1,0)} [1/4]
    {x <- min(x+1,5)} [1/4] {y <- min(y+1,3)};
  a <- label(x,y);
}
    \end{lstlisting}
    
    \end{minipage}
    \begin{minipage}[t]{0.3\textwidth}
  \centering
  \raisebox{-20pt}{
    \scalebox{0.7}{
    \begin{tikzpicture}
        \node[state] (q0) at (0, 0) {$y_0$};
        \node[state, accepting] (q1) at (3, 0) {$y_1$};
        \node[state] (q2) at (3, -2) {$y_2$};
        
        \draw[->] (q0) edge[above] node{\recharge} (q1);
        
        \draw[->] (q0) edge[above] node[rotate=325] {\volcano} (q2);
        
        \draw[->] (q0) edge[loop above] node {$A \setminus \{ \recharge, \volcano \}$} (q0);
        
        \draw[->] (q1) edge[loop right] node{$A$} (q1);
        
        \draw[->] (q2) edge[loop right] node{$A$} (q2);
    \end{tikzpicture}  
    }
    }
  \end{minipage}
        \caption{Reactive gridworld: Robot moves in the $5\times 3$ grid world in any direction, the program never halts. The  requirement expresses that we want to reach a charging station without reaching a volcano. }
    \label{fig:reactive_agent_operating}
\end{figure}

Above, we considered programs that almost-surely terminate. However, it is often useful to describe models that do not terminate. For instance, the program in~\cref{fig:reactive_agent_operating} never halts: The robot continuously moves throughout the grid.  Still, it is natural to ask for the probability of eventually reaching a charging station without reaching a volcano before.
 We give this  requirement using a simple DFA that captures the (finite, unbounded) executions that satisfy the  requirement. 
 We note that we are thus interested in paths that have a finite (but unbounded) \emph{prefix} in the requirement, i.e., we consider regular safety properties. 
 
 Since the traces of the robot are now infinite, we need (limited) measure-theoretic tools, in particular, the cylinder set construction~\cite{Baier08}. Consequentially, the query no longer takes a distribution over finite paths, but a specific $\sigma$-algebra as first argument. However, while the ingredients become more involved, the generic framework from before still applies. We formalize the construction in~\cref{sec:MCsNeverTerminate}. We also show how we can express the inference in programs that almost-surely terminate to programs that never halt.

\subsection{Inference Queries for Weighted Programs} 
\label{sec:ski_snowboard}

We now consider a type of quantitative temporal inference different from the probabilistic examples.
\emph{Weighted Programming}~\cite{CohenSS11,BrunelGMZ14,GaboardiKOS21,BatzGKKW22,BelleR20} is an emerging paradigm that aims to model mathematical problems using syntax and semantics from programming languages. 
Probabilistic programs are a special instance of \emph{weighted programs} where  we \emph{multiply} probabilities along paths and  then \emph{sum} over all paths. Weighted programs allow to parameterize the operations along paths and over all paths. 

 One instance of weighted programs is formed by optimization problems that attempt to minimize the costs expressed as weights along traces. Formally, these program use the tropical semiring $(\natinf = \nat + \{\infty\}, \min, +)$ to combine minimization and addition over the extended natural numbers: the program minimizes over all traces while summing the costs along the traces. 
We are interested in temporal inference that encodes a constrained optimization variant, in which we minimize only over traces that satisfy a given requirement. 

\begin{figure}
    \centering
    \begin{subfigure}[b]{0.6\textwidth}
    \centering
        \begin{minipage}{7cm}
        \begin{lstlisting}[mathescape=true]
        t: town; nt: town, a: action;
        t = "hometown";
        $\yellowWord{\text{min}}${
          while (t != "destination") { 
            (nt, a) in possible_choice(t);
            $\yellowWord{\text{add}}${t <- move(nt, a)};
          }
        }
        \end{lstlisting}
        \end{minipage}
        \caption{Traveler's move. }
        \label{fig:tranfer}
    \end{subfigure}
    \begin{subfigure}[b]{0.35\textwidth}
    \centering
    \begin{minipage}{5cm}
    \raisebox{30pt}{
    \adjustbox{scale=0.8,center}{
    \begin{tikzpicture}
        \node[state] (q0) at (0, 0) {$y_0$};
        \node[state,accepting] (q1) at (3, 0) {$y_1$};
        \draw[->] (q0) edge[above] node{T} (q1);
        \draw[->] (q0) edge[loop above] node{P,\, B,\, T} ();
    \end{tikzpicture}
    }
    }
    \end{minipage}
    \caption{Requirement $(\dagger)$.}
    \label{fig:spec_transfer}
    \end{subfigure}
    \caption{Travelling problem. The functions $\yellowWord{\mathrm{min}}$ and $\yellowWord{\mathrm{add}}$ in the weighted program are the primary operations on the costs (or weights in general). The state $y_0$ is the initial state, and $y_1$ is the unique accepting state in the NFA $d$. The characters $\mathrm{P},\mathrm{B},\mathrm{T}$ correspond to taking a plane, a bus, and a train, respectively.}
    \label{fig:travelling_problem}
\end{figure}

We exemplify the problem statement with a simple planning problem: the \emph{Travelling Problem}. A scientist is about to leave home to travel to a conference using a variety of potential transportation means. The weighted program in~\cref{fig:tranfer} models the transfer of the scientist; traces of the program correspond to transfer histories, and fares along traces are accumulated as total costs. 
The scientist can provide a requirement with personal preferences. The scientist loves taking trains and wants to arrive by train at the destination  $(\dagger)$. The nondeterministic finite automaton (NFA) in~\cref{fig:travelling_problem} captures this simple requirement. 

The travelling problem can be formulated as the following temporal inference query: let (i) $T\in \powersetfunc(A^{+}\times \nat)$ be the set of pairs $(w, m)$ of a trace $w\in A^{+}$ and a possible accumulated cost $m\in \nat$ along the trace $w$, and (ii) $L\in \nempfaclang$ be the recognized language of the NFA.
The inference query 
\begin{align*}
    q\colon \powersetfunc(A^{+}\times \nat)\times \nempfaclang\rightarrow  \natinf, \quad q(T, L)\defeq  \min \{m \mid w\in L,\, (w, m)\in T \}.
\end{align*}
naturally formalizes the requested quantity. However, computing the value  $q(T, L)$ is not trivial as the sets $A^{+}$ and $\nat$ are countably infinite. 

The key to solving the Travelling Problem again lies in the construction of a product of the operational semantics of the program and the requirement, in a similar style to~\cref{sec:agent_operation}. The operational semantics of the weighted program shown in~\cref{fig:tranfer} is a weighted transition system, 
equipped with (i) \emph{states} $X$ for the set of towns that can be used to transfer; (ii) the \emph{destination} $\star$; (iii) \emph{labels} $A$ for the set of means of transport; and (iv) \emph{weights} $\nset{M} =\{1,\dots, M\}$ for the set of costs. 
For each town $x\in X$, the non-deterministic transition from $x$ specifies a set of triples $(x', a, m)$ where $x'$ is the next town or the destination $\star$, $a$ is its means of transport, and $m$ is the associated cost.
Requirements are captured by NFAs.

Then, the product of the operational semantics of the program and the requirement is the ``synchronized'' transition system.
The Travelling Problem can then be solved as a simple shortest path problem on the product. Again, the product construction leads to an efficient solution by disregarding the individual steps. In~\cref{sec:weighted_programming} we show how weighted temporal inference fits into our framework.

In the Traveling Problem above, the costs are completely determined by the system. In fact, it is possible to add additional costs via the requirement. That can be used to penalize certain travel options and allows for additional flexibility for the user to specify constraints. We show such a generalization in \cref{sec:quantitativeInferenceWeightedSystemRequirement}.

\subsection{Towards a Unifying Theory of Product Constructions}

The range of quantitative temporal inference problems discussed above are broad, but share a common pattern: there is an underlying product construction which we may use to efficiently solve the original temporal inference problem. The challenge we address is to identify the mathematical essence of product constructions and their role in practical solutions for temporal inference.
More precisely, let $\lfpoint[\cdot]{c}$ be the semantics of the system $c$ (e.g., an MC with labels), $\lfpoint[\cdot]{d}$ the semantics of the requirement $d$ (e.g., a DFA or an NFA),
and let $\lfpoint[\cdot]{c\otimes d}$ of the product $c\otimes d$ (e.g., an MC without labels); and let $q$ be the temporal inference query of interest.
We aim for the following equality that states the ``correctness'' of the product for temporal inference:
\begin{equation}
\label{eq:correct_overview}
    \lfpoint[x,y]{c\otimes d} = q(\lfpoint[x]{c}\times \lfpoint[y]{d}).
\end{equation}
This equality says that solving the temporal inference problem at hand is equivalent to solving the product $c\otimes d$ of the system $c$ and the requirement $d$.
The left hand side of the equation is the semantics of this product. 
Importantly, the semantics of the product itself should be independent of the query $q$, and this is the reason why we can directly compute the semantics of the product for the temporal inference, without computing the semantics of the system $c$ and the requirement $d$.
The equality~\eqref{eq:correct_overview} is the starting point for our main question:
\vspace{3mm}
\begin{mdframed}
How can we uniformly construct a product $c\otimes d$ from a system $c$ and a requirement $d$ and ensure that Eq.~\cref{eq:correct_overview} holds?  
\end{mdframed}
\vspace{3mm}
Towards an answer, we use \emph{coalgebras}~(see, e.g.,~\cite{Jacobs16}) as the foundation for our theory of temporal inference and product constructions. The theory of coalgebras provides us with a generic expression of transition systems that covers a wide variety of system types (including MCs, DFAs, NFAs, Mealy machines, MCs with costs, and weighted systems) and a way of defining their semantics as least fixed points of suitable predicate transformers.

The structural approach offered by coalgebrals enables the general treatment of quantitative temporal inference. 
In fact, we observe that a product between system and requirement is induced by a \emph{distributive law} between functors---these have been widely studied in, for instance, the theory of programming languages (e.g.~\cite{TuriP97,AguirreB23,GoncharovMSTU23}), and automata theory (e.g.~\cite{Jacobs06,DBLP:journals/corr/abs-1302-1046,Jacobs0S15,KlinR16}). 
This formulation naturally leads us to a definition of correctness, formalising Eq.~\cref{eq:correct_overview}.
Our main result is a sufficient condition for correctness, referred to as our \emph{correctness criterion} (\cref{thm:correctness_product}) and its weaker variant (\cref{prop:weaker_correctness_criterion}). 
As we show in the second part of this paper, our framework is powerful enough to cover a variety of temporal inference problems, including all examples discussed above.

\section{Coalgebraic Inference}
\label{sec:coalgebraic_inference}

In this section, we introduce a generic framework, called \emph{coalgebraic inference}, which  unifies the temporal inference problems shown in the previous section.
This framework relies on the generality of coalgebras as a mathematical notion of state-based system: we model both systems and requirements as coalgebras,
and define the notion of \emph{query} in this context.

\subsection{Generic Semantics of Coalgebras}
Intuitively, a \emph{coalgebra} maps \emph{states} $X$ to $F(X)$, which is a set that specifies (i)~the type of transition, e.g., probabilistic or non-deterministic, and (ii)~the information of a state $x \in X$, e.g., a label $a$ or a weight $m\in \nat$ assigned to $x$.
For instance, in this paper $F(X)$ is $\distributionfunc{(X+\{\star\})}\times A$ for a labeled MC with the target state $\star$, and $F(X)$ is $\dfautomata{Y}$ for a DFA.
Explanation on how coalgebras of these types then correspond to labeled MCs and DFAs follows below in Example~\ref{ex:mc_moore_mealy}.
We require that $F$ is a \emph{functor} (on the category of sets). 
\begin{definition}[functor]
\label{def:functor}
An \emph{endofunctor} $F$ (on the category of sets) maps sets $X$ to sets $F(X)$ and functions $f\colon X\rightarrow Y$ to functions $F(f)\colon F(X)\rightarrow F(Y)$, such that: 
\begin{itemize}
    \item $F$ preserves identities, i.e., for any set $X$, $F(\id_X) = \id_{F(X)}$,
    \item $F$ preserves composition, i.e., for functions $f\colon X\rightarrow Y$ and $g\colon Y\rightarrow Z$, $F(g\circ f) = F(g)\circ F(f)$.  
\end{itemize}
Functors are a more general concept in category theory~\cite{MacLane2}, but we only need endofunctors on the category of sets. We therefore refer to endofunctors on
the category of sets, as defined above, as \emph{functors} for simplicity. 
\end{definition}

\begin{definition}[coalgebra]
\label{def:coalgebra}
A \emph{coalgebra} for a functor $F$ and states $X$ is a function $c\colon X\rightarrow F(X)$. 
\end{definition}

\begin{example}[coalgebras]
\label{ex:mc_moore_mealy}
We list some coalgebras discussed in this paper:
    \begin{enumerate}
        \item A \emph{(labeled) Markov chain (MC) with a target state $\star$} is a pair of functions $(P,\, l)$, where $P\colon X \rightarrow \distributionfunc{(X+\{\star\})}$ captures the probabilistic transitions, and $l\colon X\rightarrow A$ is a labelling of states. A Markov chain $(P,\, l)$ is indeed a coalgebra $c\colon  X\rightarrow\mcfunc{X}$ given by $c(x) \defeq \big(P(x),\, l(x)\big)$ for each $x\in X$.
        
        The underlying functor $F$ is defined on a set $X$ by 
        $F(X) = \mcfunc{X}$. For a function $f \colon X \rightarrow Y$, the function $F(f) \colon F(X) \rightarrow F(Y)$ informally replaces all occurrences of elements $x \in X$ by $f(x)$, yielding again a distribution. More precisely, 
        $F(f)(\nu_X,a) = (\nu_Y,a) \in \mcfunc{Y}$ where $\nu_Y(y) \defeq \sum_{x\in f^{-1}(y)} \nu_X(x)$, and $\nu_Y(\star) \defeq \nu_X(\star)$.
    \item  A \emph{deterministic finite automaton (DFA)} is a coalgebra $d\colon Y\rightarrow \dfautomata{Y}$, where $Y$ is the set of states, and $A$ are the labels. Note that output is on transitions, as in Mealy machines; for each state $y\in Y$ and label $a\in A$, the value $b\in \settorf$ given by $(y', b) \defeq d(y)(a)$ is $\top$ if $y'$ is an accepting state, and $\bot$ otherwise. 
    \item A \emph{Markov chain without the target state $\star$} is a coalgebra  $c\colon X\rightarrow \mcinffunc{X}$.
    \item A \emph{weighted transition system} is a coalgebra $c\colon X\rightarrow \ltsfuncsemiring{X}{\nat}$, where each non-deterministic transition provides a pair $(a, m)\in A\times \nat$ of the label $a$ and the weight $m$. 
    \end{enumerate}
 
\end{example}

We use a generic notion of semantics of coalgebras, characterised by a \emph{semantic domain} and a \emph{modality}. 
Examples of semantic domains in this paper are sub-distributions over traces, and languages over a fixed alphabet. We illustrate the modalities in examples below.
\begin{definition}[semantic structure, predicate transformer]
\label{def:sem_coal}
A \emph{semantic structure} for a functor $F$ is a tuple of  an $\omega$-complete partial order ($\omega$-cpo)\footnote{ See~\cref{sec:omitted_definitions} for the definition of $\omega$-cpos. It is standard to assume $\omega$-cpos for coalgebraic semantics; see e.g.~\cite{HasuoJS07}.} $\mathbf{\Omega} \defeq (\Omega, \preceq)$ and a \emph{modality} $\tau\colon F(\Omega)\rightarrow \Omega$.\footnote{Modalities are defined on the category of sets, hence we implicitly apply the forgetful functor to $\Omega$.} We  refer to an element $v\in \Omega$ as a \emph{semantic value}.
Given a coalgebra $c \colon X \rightarrow F(X)$, the induced \emph{predicate transformer} $\Phi_{\tau,c} \colon \Omega^{X}\rightarrow \Omega^{X}$ is given by $\Phi_{\tau,c}(u) \defeq \tau\circ F(u)\circ c$.
If $\tau$ and $c$ are clear from the context we write $\Phi$ instead of $\Phi_{\tau,c}$. 
\end{definition}

Inducing predicate transformers by modalities is a standard technique in coalgebraic modal logic (e.g.~\cite{KupkeP11}). 
Informally, modalities explain how to combine semantic values of successor states into a single semantic value of the current state. 
For instance, for an unlabeled MC, we can take a semantic domain $\Omega$ as $\probinterval$. 
The reachability probability from the current state is computed by a weighted sum over the reachability probabilities from the successor states; this is modeled by the modality $\tau$. 
In a labeled MC, the semantic domain is the set $\fsublang$ of subdistributions over traces; we show this in~\cref{ex:tau-semanticsMC}.  
Throughout the paper, we only consider $\omega$-continuous predicate transformers $\Phi$. A sufficient condition for the $\omega$-continuity of $\Phi$ is the continuity of the modality $\tau$, that is, $\tau\circ F(\bigvee u_n) = \bigvee \tau\circ  F(u_n)$ for any $\omega$-chain $(u_n)_{n\in \nat}$ in $\Omega^X$ with a suitable $\omega$-cpo structure on $\Omega^{FX}$, and the continuity of the sequential composition with the underlying coalgebra $c$. 

\begin{definition}[semantics of coalgebras]
Let $\Phi_{\tau,c}$ as above be an ($\omega$-continuous) predicate transformer.
The \emph{semantics of the coalgebra $c$}
is the least fixed point $\lfpoint{c}\colon X\rightarrow \Omega$ of $\Phi_{\tau,c}$ in Eq.~\cref{eq:semantics}. 
\begin{equation}
\label{eq:semantics}
\begin{tikzcd}
        F(X)  \arrow[rr, "F(\lfpoint{c})"] \arrow[drr,phantom,description,"=_{\mu}"] & & F(\Omega) \arrow[d, "\tau"]  \\
        X \arrow[u, "c"] \arrow[rr, "\lfpoint{c}"]& & \Omega 
\end{tikzcd}
\end{equation}
\end{definition}

As $\Phi$ is $\omega$-continuous on $\omega$-cpo $\mathbf{\Omega}$, the least fixed point $\lfpoint{c}$ exists, and can be computed as the join $\bigvee_{n\in \nat}\predtran{}^{n}(\bot)$, by the Kleene fixed point theorem~\cite{cousot1979constructive,Baranga91}. 

\begin{figure}
    \centering
    \begin{minipage}{9cm}
    \centering
    \begin{tikzcd}
        \mcfunc{X}  \arrow[rrr, "\mcfunc{\lfpoint{c}}"] \arrow[drrr,phantom,description,"=_{\mu}"]  &&& \mcfunc{\fsublang} \arrow[d, "\tau_{S}"] \\
        X \arrow[u, "c"] \arrow[rrr, "\lfpoint{c}"]& && \fsublang
    \end{tikzcd}
    \end{minipage}

    \caption{The semantics $\lfpoint{c}$ of a Markov chain (MC) $c\colon  X\rightarrow \mcfunc{X}$, 
    where $\Phi(u)  \defeq  \tau_{S}\circ \mcfunc{u} \circ c$. }
    \label{fig:coalgebraicTSwithfibration}
\end{figure}

\begin{figure}
    \centering
    \begin{minipage}[t]{0.65\textwidth}
    \scalebox{0.7}{
    \begin{tikzpicture}
        \node[draw, rectangle, rounded corners, initial, initial text=] (s0) at (0, 0) {$x_0, \sand$};
        \node[draw, rectangle, rounded corners] (s1) at (2, 1) {$x_1, \lake$};
        \node[draw, rectangle, rounded corners] (s2) at (2, -1) {$x_2, \sand$};
        \node[draw, rectangle, rounded corners] (s3) at (4.5, 1) {$x_3, \recharge$};
        \node[draw, rectangle, rounded corners] (s4) at (4.5, -1) {$x_4,\volcano$};
        \node[state] (s5) at (7, 0) {$\star$};
        \draw[->] (s0) edge[above] node{$\frac{4}{5}$} (s1);
        \draw[->] (s0) edge[above] node{$\frac{1}{5}$} (s2);
        \draw[->] (s1) edge[above] node{$1$} (s3);
        \draw[->] (s2) edge[above] node{$\frac{4}{5}$} (s3);
        \draw[->] (s2) edge[above] node{$\frac{1}{5}$} (s4);
        \draw[->] (s3) edge[above] node{$1$} (s5);
        \draw[->] (s4) edge[above] node{$1$} (s5);
    \end{tikzpicture}  
    }
    \end{minipage}
    \begin{minipage}[t]{0.3\textwidth}
        \centering
          \scalebox{0.7}{
          \begin{tikzpicture}
              \node[state] (q0) at (0, 0) {$y_0$};
              \node[state, accepting] (q1) at (2.5, 0) {$y_1$};
              \node[state] (q2) at (-2.5, 0) {$y_2$};
              \node[state] (q3) at (0, -2) {$y_3$};
              \draw[->] (q0) edge[above] node{\recharge} (q1);
              \draw[->] (q0) edge[below,bend left,] node{\lake} (q2);
               \draw[<-] (q0) edge[above right,bend right] node{\arid} (q2);
               \draw[->] (q1) edge[above, bend right] node{\lake} (q2.north);
               \draw[->] (q2) edge[below left, bend right] node[align=center]{\recharge,\\\volcano} (q3);
               \draw[->] (q0) edge[right] node{\volcano} (q3);
               \draw[->] (q1) edge[below right, bend left] node{\volcano} (q3);
          \end{tikzpicture}  
          }
    \end{minipage}
    \caption{A (small) MC (on the left) and the DFA shown in~\cref{fig:agent_operating} (on the right).}
    \label{fig:runningExMC}
\end{figure}

    We exemplify such semantics for MCs and for DFAs. As running examples, we 
    use (i) the MC presented in~\cref{fig:runningExMC}, which is a small variant of the MC in~\cref{fig:agent_operating} with the same transition probabilities, 
    and (ii) the DFA shown in~\cref{fig:runningExMC} ( and also in~\cref{fig:agent_operating}).

    \begin{example}[MC]
        \label{ex:tau-semanticsMC}
        We expect that the semantics from any state $s_1$ is the distribution $\sigma\in \distributionfunc(A^{+})$ over the traces that eventually reach the target state $\star$.
        For instance, the semantics $\sigma$ of the MC in~\cref{fig:runningExMC}, from the initial state, should be the distribution given by \[ \sand \cdot \lake \cdot \recharge \mapsto \nicefrac{4}{5},\quad \sand \cdot \sand \cdot \recharge \mapsto \nicefrac{4}{25}, \quad \sand \cdot \sand \cdot \volcano \mapsto \nicefrac{1}{25}.\]
        We show how to present this semantics as coalgebraic semantics.
        Let $c\colon X\rightarrow \mcfunc{X}$, and $c(x) = \big(P(x), l(x)\big)$ for $x\in X$.
        Formally, for each $x\in X$ and $w\in A^{+}$, the semantics $\sigma(x)(w)$ is inductively given by 
        \begin{align*}
            \sigma(x)(w) \defeq \begin{cases}
                    P(x, \star) &\text{ if $w = l(x)$},\\ 
                    \sum_{x'\in X}P(x, x') \cdot \sigma(x')(w') &\text{ if $w = l(x)\cdot w'$},\\
                    0 &\text{ otherwise. }
            \end{cases}
        \end{align*}
            
    This semantics can indeed be obtained as an instance of~\cref{def:sem_coal} with the predicate transformer $\Phi$.
        Intutively, the predicate transformer $\Phi$ provides a one-step update of reachability. 
        This idea of the one-step update is common in dynamic programming. In fact, the predicate transformer $\Phi$ for MCs is very similar to the Bellman operator, which has been widely studied in probabilistic model checking (see e.g.~\cite{Baier08}). 
    Consider the semantic structure $S = (\mathbf{\Omega}_S, \tau_S)$, where the semantic domain $\mathbf{\Omega}_S$ is given by $\Omega_S\defeq \fsublang$, ordered pointwise,\footnote{$\nu_1\preceq \nu_2$ if $\nu_1(w)\leq \nu_2(w)$ for any $w \in A^{+}$. We note that the least element exists in the set of subdistributions.} and the modality $\tau_S\colon \mcfunc{\fsublang}\rightarrow \fsublang $ is given by
        \begin{equation*}
        \tau_S(\nu, a)(w) \defeq \begin{cases}
            \nu(\star) &\text{ if } w = a,\\
            \sum_{\nu'\in \supp(\nu)\cap \fsublang}\nu(\nu')\cdot \nu'(w') &\text{ if } w = a\cdot w',\\
            0       &\text{ otherwise. }
        \end{cases}
        \end{equation*}
    This modality $\tau_S$ can be exemplified as follows. Consider two distribution over traces, \begin{align*}
& \nu_1\colon \lake \cdot \recharge\mapsto 1,
\qquad \nu_2\colon \sand\cdot\recharge \mapsto \nicefrac{4}{5}, \quad \sand\cdot\volcano\mapsto \nicefrac{1}{5}
 \end{align*}
 which correspond to the semantics of $x_1, x_2$, respectively. Now consider $\nu\colon \nu_1 \mapsto \nicefrac{4}{5}, \nu_2 \mapsto \nicefrac{1}{5}$, which can be intuitively mapped to state $x_0$. The modality helps us to construct the semantics for $x_0$ from $\nu$, i.e., from the semantics for the successors of $x_0$, via the induced predicate transformer  
    $\predtran{}$ on $\fsublang^X$.
    For instance, let $u\colon X\rightarrow \fsublang$ such that $u(x_1) = \nu_1$ and $u(x_2) = \nu_2$, and $w = \sand\cdot \lake \cdot \recharge$. 
    With the modality $\tau_S$, the predicate transformer $\predtran{}(u)(x_0)(w)$ is computed as follows:
    \begin{align*}
        \predtran{}(u)(x_0)(w) = \sum_{x'\in X} P(x_0, x')\cdot u(x')(w') = P(x_0, x_1)\cdot \nu_1(w') = 4/5, 
    \end{align*}
    where $w' \defeq \lake \cdot \recharge$. 
    Formally, the predicate transformer is given by
        \begin{equation*}
        \predtran{}(u)(x)(w) \defeq \begin{cases}
            P(x,\star) &\text{ if $w = l(x)$, }\\
            \sum_{x'\in X} P(x, x')\cdot u(x',w') &\text{ if  $w = l(x)\cdot w'$,}\\
            0 &\text{ otherwise, }
        \end{cases}
        \end{equation*}
    for each $u\colon X\rightarrow \fsublang$, $x\in X$ and $w\in A^{+}$.
    Then, for each $x\in X$, the coalgebraic semantics $\lfpoint[x]{c}$ shown in~\cref{fig:coalgebraicTSwithfibration} is exactly the semantics $\sigma(x)$ defined concretely above. 
        
    \end{example}

    \begin{example}[DFA]
        \label{ex:tau-semanticsDFA}
        We expect that the semantics of a state of a DFA is its recognized language $L\in \nempfaclang$ excluding the empty string $\epsilon$. 
    Concretely, the semantics of the DFA shown in~\cref{fig:agent_operating} represents the requirement for the robot, discussed before. 
    For example, the semantics includes the trace $w_2 = \sand\cdot \sand \cdot \recharge$, 
    but does not include the trace $w_1 =  \sand \cdot \lake\cdot \recharge$. 
    We formulate this semantics for (coalgebraic) DFAs. 
    Let $d\colon Y\rightarrow \dfautomata{Y}$ be a DFA.
    Formally, for each $y\in Y$, the semantics $L(y)\colon A^{+}\rightarrow \settorf (\cong \nempfaclang)$ is inductively given by 
        \begin{align*}
            L(y)(w) \defeq \begin{cases}
                \top &\text{ if $w = a$ and $d(y,a) = (\_, \top)$},\\
                 &\text{ or $w = a\cdot w'$, $d(y,a) = (y', \_)$, and  $L(y')(w') = \top$, }\\
                 \bot &\text{ otherwise.}
            \end{cases}
        \end{align*}
        
    This is recovered through the semantic structure $R=(\mathbf{\Omega}_R, \tau_R)$, where the semantic domain $\mathbf{\Omega}_R$ is given by $\Omega_R\defeq \nempfaclang$, and $T_1\preceq T_2$ if $T_1\subseteq T_2$ and the modality $\tau_R\colon \dfautomata{\nempfaclang}\rightarrow \nempfaclang $,
    \begin{equation*}
        \tau_R(\delta)\defeq \{ a \mid a\in A,\, \delta(a) = (\_, \top) \} \cup \{a\cdot w \mid a\in A,\, \delta(a) = (T, \_),\, w\in T\}.
    \end{equation*}
	The first part says that a single letter word $a$ is accepted if the transition output is $\top$, and a word of the form $a \cdot w \in A \cdot A^+$ is
	accepted if, after an $a$-transition, the word $w$ is accepted.
	
    The induced predicate transformer $\predtran{}$ on $(\settorf^{A^{+}})^Y$  provides a one-step update of recognized languages. 
    Given a state and a subset of the recognized language of each successor state, the predicate transformer updates the subset of the current state by these successor states in the backward manner.
    Formally, the induced predicate transformer $\predtran{}$ on $(\settorf^{A^{+}})^Y$ 
    is given by 
    \begin{equation*}
        \predtran{}(u)(y)(w) \defeq \begin{cases}
                \top &\text{ if $w = a$ and $d(y)(a) = (\_, \top)$},\\
                 &\text{ or $w = a\cdot w'$, $d(y)(a) = (y', \_)$, and  $u(y')(w') = \top$, }\\
                 \bot &\text{ otherwise,}
            \end{cases}
    \end{equation*}
    for each $u\colon Y\rightarrow 2^{A^+}$, $y\in Y$ and $w\in A^{+}$.
    Again, for each $y\in Y$, the semantics $\lfpoint[y]{d}$ coincides with the recognized language $L(y)$. 
    
    \end{example}

\subsection{Coalgebraic Inference}\label{ssec:coalg-inference-queries}

The semantics of coalgebras can be used to define the observable behaviour of both the system and the requirements, coalgebras that typically have different functors. 
To formulate the inference problem, we need a last ingredient: \emph{queries}.
Queries define the semantics of the system and the requirement \emph{combined}. Formally, a query is a  function $q\colon \tvdsys{\Omega} \times \tvdspec{\Omega} \rightarrow \Omega$, taking semantic values for system and requirement respectively, and returning a semantic value in a third domain $\Omega$. If we infer a probability, then we may expect $\Omega = [0,1]$.

\begin{definition}[coalgebraic inference]\label{def:coalg-inference}
	Consider (i)~a system  $c \colon X \rightarrow \fsys{F}(X)$ with semantic domain $\tvdsys{\Omega}$, (ii)~a requirement $d\colon Y \rightarrow \fspec{F}(Y)$ with semantic domain $\tvdspec{\Omega}$, and (iii)~a query $q\colon \tvdsys{\Omega} \times \tvdspec{\Omega} \rightarrow \Omega$.
	The \emph{coalgebraic inference map} is the composite
	\[
	q \circ\big(\lfpoint{c}\times \lfpoint{d}\big)\colon X\times Y\rightarrow \Omega \,.
	\]
\end{definition}
We refer to the computation of this function as \emph{(coalgebraic) inference}. In particular, fixing an initial state $y \in Y$ of the requirement, the type of the function $q \circ\big( \lfpoint{c}\times \lfpoint[y]{d}\big)\colon X \rightarrow \Omega$ matches the notion of expectations as is common in verification calculi for probabilistic programs~\cite{DBLP:journals/jcss/Kozen85,DBLP:series/mcs/McIverM05,DBLP:phd/dnb/Kaminski19}.

\begin{example}[probabilistic inference]
\label{ex:probinf_coalinf}
For the MC and the DFA presented in~\cref{ex:tau-semanticsMC,ex:tau-semanticsDFA}, the only accepting trace is $w_2 = \sand \cdot \sand \cdot \recharge$. 
Consequently, we expect that the result of probabilistic inference for the MC and the DFA is $\sigma(w_2)\defeq \frac{1}{5}\cdot \frac{4}{5}$. 
We see that this probabilistic inference is an instance of the coalgebraic inference defined in~\cref{def:coalg-inference}.
We have (i)~system $c \colon X\rightarrow \mcfunc{X}$ with $\tvdsys{\Omega} = \fsublang$, and (ii)~requirement $d \colon Y\rightarrow \dfautomata{Y}$ with $\tvdspec{\Omega} = \nempfaclang$ as before. 
For the inference problem from \cref{sec:agent_operation}, $\Omega = [0,1]$ and  we define the query as:
\begin{align*}
    q\colon \fsublang\times\nempfaclang\rightarrow \probinterval\,, \qquad q(\nu, L) \defeq \sum_{w\in L} \nu(w).
\end{align*}
Then, for each state $y\in Y$, coalgebraic inference as given in \cref{def:coalg-inference} is the map
\[q\big( \lfpoint{c},\,\lfpoint[y]{d}\big)\colon X\rightarrow \probinterval, \quad q(\lfpoint[x]{c}, \lfpoint[y]{d}) \defeq \sum_{w\in \lfpoint[y]{d}}\lfpoint[x]{c}(w).\]

\end{example}

\section{Product Constructions for Coalgebraic Inference}
\label{sec:coal_model_checking_regular_semantics}
 In this section, we introduce a \emph{coalgebraic product construction}.
Critically, this allows us to reduce the problem of coalgebraic inference in~\cref{ssec:coalg-inference-queries} to that of computing the semantics on the product.

\subsection{Coalgebraic Product}\label{ssec:coalgProd}

To define coalgebraic product constructions formally, we assume a functor $\fsys{F}$ representing the type of systems, a functor $\fspec{F}$ for requirements, and a third functor $\fprod{F}$ which represents the type of coalgebras resulting from a product construction between systems and requirements. 
Examples of this setup are summarized in \cref{fig:list_of_coalgebraic_inference}, with pointers to sections where they are discussed in detail.

\begin{figure}
    \scalebox{0.7}{
    \begin{tabular}{lcccccc}\toprule
    & \multicolumn{2}{c}{System} &\multicolumn{2}{c}{Requirement}  &\multicolumn{2}{c}{Product} 
    \\\cmidrule(lr){2-3}\cmidrule(lr){4-5}\cmidrule(lr){6-7}
    Section& $\fsys{F}$ & $\tvdsys{\Omega}$ & $\fspec{F}$ & $\tvdspec{\Omega}$ & $\fprod{F}$ & $\tvdprod{\Omega}$ \\\midrule
    \cref{sec:coalgebraic_inference},\cref{sec:coal_model_checking_regular_semantics},\cref{subsec:cost-boundedReach} & $\mcfunc{X}$ & $\fsublang$ & $\dfautomata{Y}$ & $\nempfaclang$ &  $\mcflagfunc{X\times Y}$ & $\probinterval$ \\
    \cref{sec:rewards_prob_program} & $\mcfunc[\nat\times A]{X}$ & $\subdistributionfunc(A^{+}\times \nat)$ & $\dfautomata{Y}$ & $\nempfaclang$ &  $\distributionfunc(X\times Y+\settorf) \times \nat$ & $\probinterval\times \nonegreal$ \\
    \cref{subsect:costReq}  & $\mcfunc{X}$ & $\fsublang$ & $\dfautomata[A]{Y\times \nsetbot{N}}$ & $\nempfaclang$ &  $\mcflagfunc{X\times Y\times \nsetbot{N}}$ & $\probinterval$ \\
    \cref{sec:MCsNeverTerminate} & $ \mcinffunc{X}$ & $\probmeasure(\salg{A})\cup \bigcup_{n\in \nat} \distributionfunc(A^n)$ & $\dfautomata{Y}$ & $\nempfaclang$ &  $\dmcflagfunc{X\times Y }$ & $\probinterval$ \\
    \cref{sec:weighted_programming} & $\ltsfuncsemiring{X}{\nat}$ & $\powersetfunc(A^{+}\times \nat)$ & $\exepnfautomata{Y}$ & $\nempfaclang$ &  $\suptlflagfunc{X\times Y}{\nat}$ & $\natinf$ \\
    \cref{sec:quantitativeInferenceWeightedSystemRequirement}    & $\ltsfuncsemiring{X}{\nat}$ & $\powersetfunc(A^{+}\times \nat)$ & $\wmealyfunc{Y}$ & $\powersetfunc(A^{+}\times \nat)$ &  $\suptlflagfunc{X\times Y}{\nat}$ & $\natinf$ \\   
    \bottomrule
    \end{tabular}
    }
    \caption{Coalgebraic inference and product. We write $\probmeasure(\salg{A})$ for the set of probability measures on the smallest $\sigma$-algebra that contains all cylinder sets; see~\cref{def:sigma_algebra} for the definition. 
    }
    \label{fig:list_of_coalgebraic_inference}
\end{figure}

\begin{figure}
    \begin{minipage}[b]{0.45\textwidth}
        \begin{tikzcd}
            \fsys{F}(X)\times \fspec{F}(Y)  \arrow[d, "\lambda_{X, Y}"] \arrow[rr, "\fsys{F}(f)\times \fspec{F}(g)"] & & \fsys{F}(U)\times \fspec{F}(V) \arrow[d, "\lambda_{U, V}"]  \\
            \fprod{F}(X\times Y) \arrow[rr, "\fprod{F}(f\times g)"]& & \fprod{F}(U\times V)
            \end{tikzcd}
            \subcaption{The naturality of $\lambda$}
    \end{minipage}
    \hfill
        \begin{minipage}[b]{0.45\textwidth}
            \vspace{-1cm}
        \scalebox{0.8}{
            \begin{tikzpicture}
                \node[state] (s0) at (-1.5, 0) {$x_0, y_0$};
                \node[state] (s1) at (0, 0.75) {$x_1, y_0$};
                \node[state] (s2) at (0, -0.75) {$x_2, y_0$};
                \node[state] (s3) at (1.5, 0.75) {$x_3, y_2$};
                \node[state] (s4) at (1.5, -0.75) {$x_3, y_0$};
                \node[state] (s5) at (3, 0.75) {$\bot$};
                \node[state] (s6) at (3, -0.75) {$\top$};
                \node[inner sep=0,rotate=150] at (0.65cm, -1.2cm) (cdots1) {$\cdots$};
                \draw[->] (s0) edge[above] node{$\frac{4}{5}$} (s1);
                \draw[->] (s1) edge[above] node{$1$} (s3);
                \draw[->] (s0) edge[above] node{$\frac{1}{5}$} (s2);
                \draw[->] (s2) edge[above] node{$\frac{4}{5}$} (s4);
                \draw[->] (s3) edge[above] node{$1$} (s5);
                \draw[->] (s4) edge[above] node{$1$} (s6);
            \end{tikzpicture}  
            }
            \subcaption{A fragment of the product}
        \end{minipage}
    \caption{Illustrations of (a) the naturality of a distributive law $\lambda$, and (b) a fragment of the product of the MC and the DFA shown in~\cref{fig:runningExMC}}
    \label{fig:productExMC}
\end{figure}

\begin{definition}[coalgebraic product]
    Let $\fsys{F}, \fspec{F}$ and $\fprod{F}$ be functors. Let $c\colon X\rightarrow  \fsys{F}(X)$ and $d\colon Y\rightarrow \fspec{F}(Y)$  be coalgebras, and assume a map $\lambda_{X,Y} \colon  \fsys{F}(X) \times \fspec{F}(Y) \rightarrow \fprod{F}(X \times Y)$. 
    The \emph{coalgebraic product} $c\otimes_{\lambda} d$ is the coalgebra $c\otimes_{\lambda} d\colon X\times Y\rightarrow \fprod{F}(X\times Y)$,  given by $c\otimes_{\lambda} d\defeq \lambda_{X, Y} \circ (c\times d)$:
    \begin{equation*}
\begin{tikzcd}
       X\times Y \arrow[r,"c\times d"] & \fsys{F}(X)\times \fspec{F}(Y) \arrow[r,"\lambda_{X, Y}"] & \fprod{F}(X\times Y).
    \end{tikzcd}
\end{equation*}
\end{definition}
The essence of the product construction is captured by the map $\lambda_{X,Y}$, which explains how to move from ``behaviours'' in $\fsys{F}$ and $\fspec{F}$ to a behaviour in the product type $F_{S \otimes R}$. It is independent of the coalgebras at hand, and in fact it should be defined uniformly for all sets $X$ and $Y$, and consequently for all systems and requirements (with the same functors). We capture this uniformity by assuming that $\lambda_{X,Y}$ extends to a \emph{natural transformation}, which we refer to as a distributive law:

\begin{definition}[distributive law $\lambda$]
\label{def:distributive_law}
Let $\fsys{F},\fspec{F}, \fprod{F}$ be functors. A \emph{distributive law} $\lambda$ 
 from $\fsys{F}$ and $\fspec{F}$ to $\fprod{F}$ is a collection of maps $\lambda_{X,Y} \colon \fsys{F}(X) \times \fspec{F}(Y) \rightarrow \fprod{F} (X \times Y)$, one for each pair of sets $X$ and $Y$, which is \emph{natural} in $X$ and $Y$. The \emph{naturality of $\lambda$} means that for any two functions $f\colon X\rightarrow U$ and $g\colon Y\rightarrow V$ the diagram that is shown in~\cref{fig:productExMC} commutes.
\end{definition}

In the construction of coalgebraic products, we do not use the naturality of distributive laws. However, the naturality of distributive laws is essential for the proof of the main theorem (\cref{thm:correctness_product}), which ensures the ``correctness'' of the coalgebraic product construction.

\begin{example}[product of MCs and DFAs]
    \label{ex:dis_mc_dfa}
Consider the running examples shown in~\cref{ex:tau-semanticsMC} and~\cref{ex:tau-semanticsDFA}.  
A fragment of the traditional product~\cite{Baier08} of the MC and the DFA is illustrated in~\cref{fig:productExMC}.
Its states are pairs of a state in the original MC and a state in the DFA, the probability transition is the synchronization of the MC and the DFA by reading the labels on the MC.
There are two sink states $\bot$ and $\top$ in the product, 
and the reachability probability to the sink state $\top$ coincides with the probabilistic inference of the MC and the DFA.  

We formally recall the traditional product of MCs and DFAs.  
Let $c\colon X\rightarrow \mcfunc{X}$ be an MC, and $d\colon Y\rightarrow \dfautomata{Y}$ be a DFA. The \emph{product} of $c$ and $d$ is the MC $f\colon X\times Y \rightarrow \mcflagfunc{X\times Y}$ without labels defined as follows: for each $(x_1, y_1)\in X\times Y$, $f(x_1, y_1) \defeq \nu$ such that 
\begin{align*}
    \nu(x_2, y_2) &\defeq \begin{cases}
        P(x_1, x_2) &\text{ if $c(x_1) = \big(P(x_1,\_), l(x_1)\big)$, and  $d(y_1)(l(x_1)) = (y_2, \_)$},\\
        0 &\text{ otherwise,}
    \end{cases}\\
    \nu(b) &\defeq \begin{cases}
        P(x_1, \star) &\text{ if $c(x_1) = \big(P(x_1,\_), l(x_1)\big)$, and  $d(y_1)(l(x_1)) = (\_, b)$},\\
        0 &\text{ otherwise,}
    \end{cases}
\end{align*}
for each $(x_2, y_2)\in X\times Y$. 
This product $f$ is precisely the coalgebraic product $c \otimes_{\lambda} d$ with the distributive law $\lambda$ from the functors $F_S\defeq \mcfunc{\_}$ and $F_R\defeq \dfautomata{\_}$ to $F_{S\otimes R}\defeq \mcflagfunc{\_}$  below. 
    The \emph{distributive law} $\lambda_{X, Y}\colon \mcfunc{X}\times \dfautomata{Y} \rightarrow \mcflagfunc{X\times Y}$ for MCs and DFAs is given by
\begin{align*}
    \lambda_{X, Y}(\nu, a, \delta)(x, y) &\defeq \begin{cases}
        \nu(x) &\text{ if }\delta(a) = (y, \_),\\
        0 &\text{ otherwise,}
    \end{cases}
    &\lambda_{X, Y}(\nu, a, \delta)(b) \defeq \begin{cases}
        \nu(\star) &\text{ if }\delta(a) = (\_, b),\\
        0 &\text{ otherwise.}
    \end{cases}
\end{align*}
\end{example}

\begin{remark}[map that is not natural]
\label{rem:non_natural}
We show a map $\lambda_{X,Y} \colon \subdistributionfunc(X)\times \powersetfunc(Y)\rightarrow \subdistributionfunc(X \times Y)$ that is not natural:
for each $\nu\in \subdistributionfunc(X)$, $S\in \powersetfunc(Y)$, and $(x, y)\in X\times Y$, the map $\lambda_{X,Y}(\nu, S)(x, y)$ is defined by 
$\lambda(\nu, S)(x, y) \defeq 1/|S|\cdot \nu(x) $ if $S\not = \emptyset$ and $y\in S$, and $0$ otherwise.
Given this fact, we conjecture that finding product MCs of MCs and NFAs for the quantitative temporal inference (without the determinisation of NFAs) is challenging (probably impossible).

We also note that there are product constructions for MCs and \emph{separating automata}~\cite{CouvreurSS03} or \emph{unambiguous automata}~\cite{BaierK00023} without determinization.
The product construction in~\cite{BaierK00023}, which involves three different types of automata, can be formulated in our framework. 
The standard semantics of for such automata, however, is not the sought-after solution on our product without making the additional assumptions regarding ambiguity and we would leave this as future work.
\end{remark}
\subsection{Correctness of Product Constructions}

The purpose of the product construction is to answer inference queries for a system $c$ and a requirement $d$ by computing semantics on the product $c \otimes_{\lambda} d$.
We formulate the \emph{correctness} of this approach, i.e., that  computing the semantics on the product indeed solves the actual inference problem. 
For instance, the reachability probability to the state $\top$ on the product of an MC and a DFA coincides with the probabilistic inference $q\big( \lfpoint{c},\,\lfpoint{d}\big)$ of the MC and the DFA defined in~\cref{ex:probinf_coalinf}; 
this is the correctness of the product for MCs and DFAs w.r.t. the query $q$.

The key observation here is that the coincidence of the semantics of products and inferences can be captured by an equality between semantics of products and inferences: we formulate this property as \emph{correctness} of the product construction.

\begin{definition}[correctness]\label{def:correctness}
Assume the following: 
\begin{itemize}
    \item  a (system) functor $F_S$ and semantic structure $(\tvdsys{\mathbf{\Omega}}, \modsys{\tau})$, a (requirement) functor $F_R$ and semantic structure $(\tvdspec{\mathbf{\Omega}}, \modspec{\tau})$, and a query $q\colon\tvdsys{\Omega}\times \tvdspec{\Omega}\rightarrow \Omega_{S\otimes R}$ (as in \cref{def:coalg-inference}),
    \item a (coalgebraic product) functor $F_{S\otimes R}$ and semantic structure $(\mathbf{\Omega}_{S \otimes R}, \tau_{S \otimes R})$,
    \item a distributive law $\lambda$ from $\fsys{F}$ and $\fspec{F}$ to $F_{S\otimes R}$,
\end{itemize}
The product $c \otimes_{\lambda} d$ with the modality $\tau_{S\otimes R}$ is \emph{correct w.r.t.\ the query $q$} if 
\begin{equation}
        \lfpoint{c\otimes_{\lambda} d} \quad=\quad q\circ (\lfpoint{c} \times \lfpoint{d})
         \label{eq:def_correctness}
    \end{equation}
for all coalgebras $c \colon X\rightarrow F_S(X)$ and $d \colon Y \rightarrow F_R(Y)$.
\end{definition}

\begin{example}[semantic structure for the product of MCs and DFAs]
    \label{ex:ssProdMCDFA}
    Recall the distributive law $\lambda_{X, Y}\colon \mcfunc{X}\times \dfautomata{Y}\rightarrow \mcflagfunc{X\times Y}$ given in~\cref{ex:dis_mc_dfa} for the coalgebraic product of MCs and DFAs. 
    We define the semantic structure $\big((\probinterval, \leq), \tau_{S\otimes R}\big)$ for the functor $F_{S \otimes R}\defeq \mcflagfunc{-}$ as follows. 
    The modality $\tau_{S\otimes R}\colon \mcflagfunc{\probinterval}\rightarrow \probinterval$ is given by 
    \begin{align*}
        \tau_{S\otimes R}(\nu) &\defeq \nu(\top) + \sum_{p\in \probinterval} \nu(p)\cdot p, \text{ where the support $\supp(\nu)$ is at most countable. }
    \end{align*}
    Similar to \cref{ex:tau-semanticsMC}, the informal intuition of the modality $\tau_{S\otimes R}$ is the following: the reachability probability from the current state to the state $\top$ is the sum of (i)~the transiton probability to the state $\top$, 
    and (ii)~a weighted sum of  reachability probabilities from the successor states $(p)$, weighted by the transition probabilities $\nu(p)$ to the successor states. Note that the successors are represented by their reachability probability $p$.
    
    The predicate transformer $\predtran{}$ on $\probinterval^{X\times Y}$ is now given by 
    \begin{align*}
        \predtran{}(u)(x, y) & \defeq \lambda_{X, Y}\big(c(x), d(y)\big)(\top) + \sum_{(x',y')\in X\times Y} \lambda_{X, Y}\big(c(x), d(y)\big)(x', y')\cdot u(x', y')\\
        &= P(x, \star) \cdot \dirac{b_{y, l(x)} = \top} + \sum_{x'\in X} P(x, x') \cdot u(x', n_{y, l(x)}),
    \end{align*}
        where $\big(P(x), l(x)\big) \defeq c(x)$, and $(n_{y, l(x)},\, b_{y, l(x)}) \defeq d(y)\big(l(x)\big)$. 
    Here, the semantics $\lfpoint{c\otimes_{\lambda} d}$ induced by the predicate transformer $\Phi$ coincides with the inference.
\end{example}

As we illustrated in~\cref{sec:overview} for a range of examples, computing $\lfpoint{c\otimes_{\lambda} d}$ is a practical solution for inference $ q\circ( \lfpoint{c} \times \lfpoint{d})$ if Eq.~\cref{eq:def_correctness} holds. 
We now turn to a sufficient condition for correctness. 
In fact, there is a simple but powerful correctness criterion---the main theorem of the paper---that roughly is given as follows: a query $q$ (i) preserves the $\omega$-cpo structures, and (ii) makes a diagram, given below, commute. We observe that the diagram involves a distributive law $\lambda$ and the modalities. In the proof, the naturality of $\lambda$ is essential.

\begin{theorem}[correctness criterion]
\label{thm:correctness_product}
Consider the data from \cref{def:correctness}, and assume the following:
\begin{itemize}
    \item $q$ preserves the least elements, i.e.,\ $q(\bot, \bot) = \bot$,
    \item $q$ is $\omega$-continuous, i.e.,\ for all $\omega$-chains $(t_i)_{i\in \nat}$ and $(t'_j)_{j\in \nat}$, the following equality holds:
    \begin{equation*}
        q\big(\bigvee_{i\in \nat} t_i,\, \bigvee_{j\in \nat} t'_j\big) = \bigvee_{k\in \nat} q(t_k, t'_k),
    \end{equation*}
    \item the following diagram commutes:
    \begin{equation}
    \label{eq:coherence_evaluation}
    \begin{tikzcd}
            \fsys{F}(\tvdsys{\Omega})\times \fspec{F}(\tvdspec{\Omega})\arrow[rr,"\modsys{\tau}\times \modspec{\tau}"] \arrow[d,"\lambda_{\tvdsys{\Omega}, \tvdspec{\Omega}}"]
            & 
            &\tvdsys{\Omega}\times \tvdspec{\Omega} \arrow[d,"q"]\\
F_{S\otimes R}(\tvdsys{\Omega}\times\tvdspec{\Omega}) \arrow[r,"F_{S\otimes R}(q)"]
            & F_{S\otimes R}(\Omega_{S\otimes R}) \arrow[r, "\tau_{S\otimes R}"]
            & \Omega_{S\otimes R}
    \end{tikzcd}
    \end{equation}
\end{itemize}
Then, the product construction with $\tau_{S\otimes R}$ is correct w.r.t.\ $q$ (in the sense of \cref{def:correctness}). 
\end{theorem}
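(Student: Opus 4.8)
The plan is to prove the two semantics functions in \eqref{eq:def_correctness} coincide by showing both arise as least fixed points, or — more directly — by showing that $q \circ (\lfpoint{c} \times \lfpoint{d})$ is a fixed point of $\predtran{\tau_{S\otimes R}, c\otimes_\lambda d}$ and that it is moreover the \emph{least} one, i.e.\ it equals the Kleene iteration $\bigvee_n \predtran{}^n(\bot)$. Since both $\lfpoint{c}$ and $\lfpoint{d}$ are themselves Kleene joins $\bigvee_i \predtran{\modsys\tau, c}^i(\bot)$ and $\bigvee_j \predtran{\modspec\tau, d}^j(\bot)$, and $q$ is $\omega$-continuous and strict by hypothesis, I expect the cleanest route is an induction on the Kleene approximants: prove that for every $n\in\nat$,
\[
  \predtran{\tau_{S\otimes R},\, c\otimes_\lambda d}^{\,n}(\bot) \;=\; q\circ\big(\predtran{\modsys\tau,\,c}^{\,n}(\bot)\,\times\,\predtran{\modspec\tau,\,d}^{\,n}(\bot)\big),
\]
and then take joins on both sides, using $\omega$-continuity of $q$ on the right and $\omega$-continuity of $\predtran{}$ (hence of the join) on the left. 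The base case $n=0$ is exactly the strictness assumption $q(\bot,\bot)=\bot$.

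For the inductive step I would unfold all three predicate transformers via \cref{def:sem_coal}: writing $u \defeq \predtran{\modsys\tau,c}^n(\bot)$, $v\defeq \predtran{\modspec\tau,d}^n(\bot)$, and using the induction hypothesis $u' \defeq q\circ(u\times v) = \predtran{\tau_{S\otimes R},\,c\otimes_\lambda d}^n(\bot)$, I must show
\[
  \tau_{S\otimes R}\circ F_{S\otimes R}\big(q\circ(u\times v)\big)\circ(c\otimes_\lambda d)
  \;=\;
  q\circ\big((\modsys\tau\circ \fsys F(u)\circ c)\,\times\,(\modspec\tau\circ \fspec F(v)\circ d)\big).
\]
Evaluated at a state $(x,y)\in X\times Y$, the left side runs through $c\otimes_\lambda d = \lambda_{X,Y}\circ(c\times d)$, then $F_{S\otimes R}(u\times v)$, then $F_{S\otimes R}(q)$, then $\tau_{S\otimes R}$; the right side runs through $(c\times d)$, then $\fsys F(u)\times\fspec F(v)$, then $\modsys\tau\times\modspec\tau$, then $q$. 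After cancelling the common prefix $(c\times d)$, what remains to prove is precisely that
\[
  \tau_{S\otimes R}\circ F_{S\otimes R}(q)\circ F_{S\otimes R}(u\times v)\circ \lambda_{X,Y}
  \;=\;
  q\circ(\modsys\tau\times\modspec\tau)\circ(\fsys F(u)\times \fspec F(v)).
\]
Here I use \emph{naturality} of $\lambda$ in the shape of \cref{fig:productExMC}(a) with $f\defeq u\colon X\to\tvdsys\Omega$ and $g\defeq v\colon Y\to\tvdspec\Omega$: that rewrites $F_{S\otimes R}(u\times v)\circ\lambda_{X,Y}$ as $\lambda_{\tvdsys\Omega,\tvdspec\Omega}\circ(\fsys F(u)\times\fspec F(v))$. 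Substituting, the claim collapses to
\[
  \tau_{S\otimes R}\circ F_{S\otimes R}(q)\circ \lambda_{\tvdsys\Omega,\tvdspec\Omega}
  \;=\;
  q\circ(\modsys\tau\times\modspec\tau),
\]
which is exactly the commuting coherence square \eqref{eq:coherence_evaluation}. This finishes the induction.

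Taking joins: the right-hand side $\bigvee_n q\circ(\predtran{\modsys\tau,c}^n(\bot)\times\predtran{\modspec\tau,d}^n(\bot))$ equals $q\big(\bigvee_n \predtran{\modsys\tau,c}^n(\bot),\, \bigvee_n\predtran{\modspec\tau,d}^n(\bot)\big) = q\circ(\lfpoint{c}\times\lfpoint{d})$ by $\omega$-continuity of $q$ and the Kleene fixed point theorem applied to the (assumed $\omega$-continuous) predicate transformers for $c$ and $d$; the left-hand side equals $\lfpoint{c\otimes_\lambda d}$ again by the Kleene fixed point theorem, provided $\predtran{\tau_{S\otimes R},\,c\otimes_\lambda d}$ is $\omega$-continuous — which I take from the standing assumption in the paper that all predicate transformers under consideration are $\omega$-continuous. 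The main obstacle, and the place requiring care, is bookkeeping the induction cleanly: making sure the common prefix $(c\times d)$ genuinely cancels (it does, since both sides are defined as composites starting with $c\times d$ after the naturality rewrite), and that the join of approximants on the product side is legitimately the least fixed point (this needs $\omega$-continuity of the product predicate transformer, not merely of $q$). Everything else is a diagram chase combining naturality of $\lambda$ with the single coherence square \eqref{eq:coherence_evaluation}.
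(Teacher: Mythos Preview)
Your proposal is correct and follows essentially the same route as the paper's proof: induction on the Kleene approximants with base case given by strictness of $q$, inductive step by combining naturality of $\lambda$ with the coherence square~\eqref{eq:coherence_evaluation}, and then passing to the join via $\omega$-continuity of $q$ and of the predicate transformers. The paper organises the inductive step as a single commuting diagram starting from $\fsys{F}(X)\times\fspec{F}(Y)$ rather than speaking of ``cancelling'' the prefix $(c\times d)$, but the content is identical.
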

In addition to the data that are needed to define the problem of coalgebraic inference, 
the correctness criterion requires us to specify (i) a product functor $F_{S\otimes R}$; (ii)  a distributive law $\lambda$; and (iii) a modality $\tau_{S\otimes R}$,
and prove that the associated diagram commutes. 
These ingredients are precisely what is needed for our framework to form the basis of efficient inference.

\begin{example}[correctness for probabilistic inference]
We show that the query $q\colon\fsublang\times \nempfaclang\rightarrow \probinterval$ from \cref{ex:probinf_coalinf} satisfies the correctness criterion. Recall that 
\begin{equation*}
    q(\nu, L) \defeq \sum_{w\in L}\nu(w).
\end{equation*}
It is straightforward to prove that $q$ is $\omega$-continuous by the monotone convergence theorem 
\iffull
(see~\cref{prop:omega_cpo_e_mc_dfas} for details). 
\else
(see~\cite[Appendix A]{Watanabe24} for details). 
\fi
The diagram shown in Eq.~\cref{eq:coherence_evaluation} for the query $q$ is the following: 
\begin{equation*}
\begin{tikzcd}
            \mcfunc{\fsublang}\times \dfautomata{\nempfaclang}\arrow[rr,"\modsys{\tau}\times \modspec{\tau}"] \arrow[d,"\lambda_{\fsublang, \nempfaclang}"]
            & 
            &\fsublang\times \nempfaclang \arrow[d,"q"]\\
\mcflagfunc{\fsublang\times \nempfaclang} \arrow[r,"\distributionfunc(q + \settorf)"]
            & \mcflagfunc{\probinterval} \arrow[r, "\modprod{\tau}"]
            & \probinterval
\end{tikzcd}
\end{equation*}
where for $\nu\in \mcflagfunc{\fsublang\times \nempfaclang}$, the value $\distributionfunc(q + \settorf)(\nu)$ can be concretely described by 
\begin{align*}
    &\distributionfunc(q + \settorf)(\nu)(p)\defeq \sum_{q(\nu', L) = p} \nu(\nu', L)\, \text{ (for $p\in \probinterval$)}, & \distributionfunc(q + \settorf)(\nu)(b)\defeq \nu(b)\, \text{ (for $b\in \settorf$).}
\end{align*}
A direct calculation shows that the diagram actually commutes 
\iffull 
(see~\cref{lem:commute_mc_dfas}),
\else
(see~\cite[Appendix A]{Watanabe24}), 
\fi
thus the coalgebraic product construction given by the distributive law~\cref{ex:dis_mc_dfa} is correct. 

\end{example}

It is often convenient to restrict Eq.~\cref{eq:coherence_evaluation} to the Kleene iterations induced by predicate transformers. 
This gives a weaker correctness criterion than~\cref{thm:correctness_product}, but is still sufficient to entail the correctness of product constructions. 
\begin{proposition}
    \label{prop:weaker_correctness_criterion}
    Consider the data from \cref{def:correctness}, and assume the following:
    \begin{itemize}
       \item $q$ preserves the least elements and is $\omega$-continuous (as in~\cref{thm:correctness_product}),
       \item  for any coalgebras $c \colon X\rightarrow F_S(X)$ and $d \colon Y \rightarrow F_R(Y)$, and $k\in \nat$ , 
       the following diagram commutes with $u_{S, k} \defeq (\predtran{\modsys{\tau}, c})^{k}(\bot)$ and $u_{R, k} \defeq (\predtran{\modspec{\tau}, d})^{k}(\bot)$:
       \begin{equation}
       \label{eq:coherence_evaluation_2}
       \begin{tikzcd}
        F_S(X)\times F_R(Y)\arrow[rr,"F_S(u_{S, k})\times F_R(u_{R, k})"]\arrow[d,"F_S(u_{S, k})\times F_R(u_{R, k})"]&&\fsys{F}(\tvdsys{\Omega})\times \fspec{F}(\tvdspec{\Omega})\arrow[rr,"\modsys{\tau}\times \modspec{\tau}"] 
               & 
               &\tvdsys{\Omega}\times \tvdspec{\Omega} \arrow[d,"q"]\\
   F_S(\tvdsys{\Omega})\times F_R(\tvdspec{\Omega})\arrow[rr,"\lambda_{\tvdsys{\Omega}, \tvdspec{\Omega}}"]&&F_{S\otimes R}(\tvdsys{\Omega}\times\tvdspec{\Omega}) \arrow[r,"F_{S\otimes R}(q)"]
               & F_{S\otimes R}(\Omega_{S\otimes R}) \arrow[r, "\tau_{S\otimes R}"]
               & \Omega_{S\otimes R}
       \end{tikzcd}
       \end{equation}
    \end{itemize}
    Then, the product construction  with $\tau_{S\otimes R}$ is correct w.r.t.\ $q$.
\end{proposition}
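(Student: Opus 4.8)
The plan is to show that the weaker hypothesis in \cref{prop:weaker_correctness_criterion} already suffices to run the argument for \cref{thm:correctness_product}, by noting that the full commuting diagram \cref{eq:coherence_evaluation} is only ever instantiated, in the proof of \cref{thm:correctness_product}, along the Kleene iterations $u_{S,k}$ and $u_{R,k}$. So I would first recall the structure of the correctness argument: by the Kleene fixed point theorem, $\lfpoint{c} = \bigvee_k u_{S,k}$ with $u_{S,k} = (\predtran{\modsys\tau,c})^k(\bot)$, and similarly $\lfpoint{d} = \bigvee_k u_{R,k}$, and also $\lfpoint{c\otimes_\lambda d} = \bigvee_k (\predtran{\tau_{S\otimes R}, c\otimes_\lambda d})^k(\bot)$. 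Using $\omega$-continuity of $q$ (second hypothesis) we get $q\circ(\lfpoint{c}\times\lfpoint{d}) = \bigvee_k q\circ(u_{S,k}\times u_{R,k})$. Thus it is enough to prove, by induction on $k$, the pointwise identity $(\predtran{\tau_{S\otimes R},\, c\otimes_\lambda d})^k(\bot) = q\circ(u_{S,k}\times u_{R,k})$ on $X\times Y$.

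The base case $k=0$ is exactly $q(\bot,\bot)=\bot$, the first hypothesis. For the inductive step, I would unfold both sides. The left side is $\tau_{S\otimes R}\circ F_{S\otimes R}\big((\predtran{\tau_{S\otimes R}})^k(\bot)\big)\circ(c\otimes_\lambda d)$, which by the induction hypothesis equals $\tau_{S\otimes R}\circ F_{S\otimes R}\big(q\circ(u_{S,k}\times u_{R,k})\big)\circ\lambda_{X,Y}\circ(c\times d)$; using functoriality of $F_{S\otimes R}$ this is $\tau_{S\otimes R}\circ F_{S\otimes R}(q)\circ F_{S\otimes R}(u_{S,k}\times u_{R,k})\circ\lambda_{X,Y}\circ(c\times d)$. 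The right side is $q\circ\big((\predtran{\modsys\tau,c})^{k+1}(\bot)\times(\predtran{\modspec\tau,d})^{k+1}(\bot)\big) = q\circ\big((\modsys\tau\circ F_S(u_{S,k})\circ c)\times(\modspec\tau\circ F_R(u_{R,k})\circ d)\big) = q\circ(\modsys\tau\times\modspec\tau)\circ\big(F_S(u_{S,k})\times F_R(u_{R,k})\big)\circ(c\times d)$. Now the diagram \cref{eq:coherence_evaluation_2} in the hypothesis is precisely the statement that $q\circ(\modsys\tau\times\modspec\tau)\circ\big(F_S(u_{S,k})\times F_R(u_{R,k})\big)$ equals $\tau_{S\otimes R}\circ F_{S\otimes R}(q)\circ F_{S\otimes R}(u_{S,k}\times u_{R,k})\circ\lambda_{\tvdsys\Omega,\tvdspec\Omega}\circ\big(F_S(u_{S,k})\times F_R(u_{R,k})\big)$, and by naturality of $\lambda$ (the square in \cref{fig:productExMC}, applied to $f = u_{S,k}$ and $g = u_{R,k}$) we have $\lambda_{\tvdsys\Omega,\tvdspec\Omega}\circ\big(F_S(u_{S,k})\times F_R(u_{R,k})\big) = F_{S\otimes R}(u_{S,k}\times u_{R,k})\circ\lambda_{X,Y}$. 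Composing everything with $(c\times d)$ on the right matches the two unfolded expressions, closing the induction.

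Finally, taking joins over $k$ and invoking $\omega$-continuity of $q$ on both sides yields $\lfpoint{c\otimes_\lambda d} = \bigvee_k (\predtran{\tau_{S\otimes R}})^k(\bot) = \bigvee_k q\circ(u_{S,k}\times u_{R,k}) = q\circ\big((\bigvee_k u_{S,k})\times(\bigvee_k u_{R,k})\big) = q\circ(\lfpoint{c}\times\lfpoint{d})$, which is \cref{eq:def_correctness}. The main subtlety — not a deep obstacle, but the point that requires care — is bookkeeping the naturality square of $\lambda$ at the right instance: \cref{eq:coherence_evaluation_2} is stated with the "extra" prefix $F_S(u_{S,k})\times F_R(u_{R,k})$ precisely so that naturality of $\lambda$ can be used to commute this prefix past $\lambda$, turning $\lambda_{\tvdsys\Omega,\tvdspec\Omega}$ applied after the iterates into $F_{S\otimes R}$ of the iterates applied after $\lambda_{X,Y}$; getting the variance and the direction of this square right is the crux. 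One should also confirm that $\predtran{\tau_{S\otimes R},\, c\otimes_\lambda d} = \tau_{S\otimes R}\circ F_{S\otimes R}(\cdot)\circ\lambda_{X,Y}\circ(c\times d)$ is $\omega$-continuous so that Kleene applies to the product, which follows from the standing assumption that all predicate transformers in play are $\omega$-continuous.
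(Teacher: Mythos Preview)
Your proof is correct and follows exactly the paper's approach (the paper simply says the proof is the same as that of \cref{thm:correctness_product}): Kleene iteration, induction on $k$ using the restricted commuting square together with naturality of $\lambda$, and $\omega$-continuity of $q$ to pass to the join. One minor bookkeeping slip: in your transcription of the bottom path of \cref{eq:coherence_evaluation_2} you insert a spurious extra factor $F_{S\otimes R}(u_{S,k}\times u_{R,k})$; the diagram actually reads $\tau_{S\otimes R}\circ F_{S\otimes R}(q)\circ\lambda_{\tvdsys{\Omega},\tvdspec{\Omega}}\circ\big(F_S(u_{S,k})\times F_R(u_{R,k})\big)$, and it is \emph{only} the naturality square that then trades $\lambda_{\tvdsys{\Omega},\tvdspec{\Omega}}\circ\big(F_S(u_{S,k})\times F_R(u_{R,k})\big)$ for $F_{S\otimes R}(u_{S,k}\times u_{R,k})\circ\lambda_{X,Y}$, after which both sides match your unfolded left-hand side without any doubling.
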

The proof is the same as~\cref{thm:correctness_product}.
We apply~\cref{prop:weaker_correctness_criterion} only for the probabilistic temporal inference for programs that never terminate in~\cref{sec:MCsNeverTerminate}.

\section{Case Study \Rom{1}: Partial Expected Rewards in Probabilistic Programs}
\label{sec:rewards_prob_program}

We show that the coalgebraic product construction also works for \emph{Markov reward models}, where the inference problem is to compute \emph{partial expected rewards}~\cite{Baier08,Baier0KW17}  over the accepting traces determined by a DFA.  
This extension is useful, for example, for modelling of the computation time of randomized algorithms~\cite{DBLP:phd/dnb/Kaminski19,KaminskiKMO18}. 


 A \emph{Markov reward model} is a coalgebra $c\colon X\rightarrow \mcfunc[\nat\times A]{X}$. 

The semantic domain for Markov reward models is given by $\big(\subdistributionfunc(A^{+}\times \nat) , \preceq\big)$, where $\preceq$ is defined pointwise.
The semantics of a coalgebra $c$ thus is a map $\lfpoint{c} \colon X \rightarrow \subdistributionfunc(A^{+}\times \nat)$, where the underlying set $A^{+}\times \nat$ represents a pair of a trace $w\in A^{+}$, 
and one of the possible accumulated rewards $r\in \nat$ along the corresponding trace $w$.
We note that in general there are multiple different paths whose traces are the same and whose accumulated rewards are different.
An informal description of the partial expected reward is to collect paths that reach the target state $\star$, and take the sum of the multiplications of the path probability of each such path by its accumulated sum of rewards. 
Importantly, unlike the standard expected reward, the partial expected reward is finite even if the probability of reaching the target state is strictly less than $1$. 
See~\cite{Baier08,Baier0KW17}, for instance, for the definition of the partial expected reward.
The semantics is equivalently defined in the appendix (\cref{subsec:MRM}) in terms of a semantic structure.

We assume that the requirement for the Markov reward machine is given by the DFA $d\colon Y\rightarrow \dfautomata{Y}$ defined in~\cref{ex:mc_moore_mealy}. 
We then define the query $q \colon \subdistributionfunc(A^{+}\times \nat)\times \nempfaclang\rightarrow \probinterval\times \nonegreal$ for Markov reward models, where the semantic domain $\probinterval\times \nonegreal$ is a pair of the reachability probability and the partial expected reward to the target state; we have to compute reachability probabilities since we do not assume reachability probabilities are always $1$. 
\begin{definition}
\label{def:query_rewards}
The \emph{query $q:\subdistributionfunc(A^{+}\times \nat)\times \nempfaclang\rightarrow \probinterval\times \nonegreal $ for Markov reward models} is given by $q(\nu, L) \defeq (p, r)$, where 
    \begin{align*}
        p &\defeq \sum_{n\in \nat}\nu\Big(\big\{(w, n) \mid w\in L \big\}\Big)\,,
        &r \defeq \sum_{n\in \nat} n\cdot \nu\Big(\big\{(w, n) \mid w\in L \big\}\Big).
    \end{align*}
\end{definition}

We define the distributive law for Markov reward models with in a similar way as given  in~\cref{ex:dis_mc_dfa}. 

\begin{definition}
\label{def:dist_rewards}
    The \emph{distributive law $\lambda$ for Markov reward models} is given by $\lambda_{X, Y}: \mcfunc[\nat\times A]{X}\times \dfautomata{Y} \rightarrow \distributionfunc(X\times Y+\settorf) \times \nat$, where $\lambda_{X, Y}(\nu_1, m, a, \delta)\defeq (\nu_2, m)$ such that 
    \begin{align*}
    \nu_2(x, y) &\defeq \begin{cases}
        \nu_1(x) &\text{ if }\delta(a) = (y, \_),\\
        0 &\text{ otherwise,}
    \end{cases}
    &\nu_2(b) \defeq \begin{cases}
        \nu_1(\star) &\text{ if }\delta(a) = (\_, b),\\
        0 &\text{ otherwise.}
    \end{cases}
\end{align*}
\end{definition}

The product with the modality $\tau_{S\otimes R}$ for the partial expected reward that is given in~\cref{subsec:MRM} satisfies the correctness criterion (\cref{thm:correctness_product}).  See~\cref{subsec:MRM} for the proof.
\begin{proposition}
 The product induced by $\lambda$ defined in ~\cref{def:dist_rewards} with the modality $\tau_{S\otimes R}$  is correct with respect to the query $q$. 
\end{proposition}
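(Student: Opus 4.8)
The plan is to derive the claim from the correctness criterion \cref{thm:correctness_product}, instantiated with the system functor $\fsys{F}(X) = \mcfunc[\nat\times A]{X}$ together with its reward semantics (the semantic structure $\big(\subdistributionfunc(A^{+}\times\nat),\modsys{\tau}\big)$ recalled in \cref{subsec:MRM}), the requirement functor $\fspec{F}(Y) = \dfautomata{Y}$ with the language semantics of \cref{ex:tau-semanticsDFA}, the product functor $F_{S\otimes R}(Z) = \distributionfunc(Z+\settorf)\times\nat$ with the partial-expected-reward modality $\tau_{S\otimes R}$, and the distributive law $\lambda$ of \cref{def:dist_rewards}. By \cref{thm:correctness_product} it then suffices to establish three facts about the query $q$ of \cref{def:query_rewards}: that it preserves least elements, that it is $\omega$-continuous, and that the coherence square \cref{eq:coherence_evaluation} commutes for this data.

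The first fact is immediate: the least element of $\subdistributionfunc(A^{+}\times\nat)$ is the everywhere-zero subdistribution and that of $\nempfaclang$ is $\emptyset$, so $q$ sends $(\bot,\bot)$ to $(0,0)$, the least element of $\probinterval\times\nonegreal$ under the componentwise order. For $\omega$-continuity I would argue exactly as for the probabilistic query in \cref{ex:probinf_coalinf}: given $\omega$-chains $(\nu_i)_i$ in $\subdistributionfunc(A^{+}\times\nat)$ and $(L_j)_j$ in $\nempfaclang$, both coordinates of $q\big(\bigvee_i\nu_i,\bigvee_j L_j\big)$ are countable sums of nonnegative terms that are monotone in the chain index, so the monotone convergence theorem lets us interchange these sums with the suprema; since $\nonegreal$ contains $\infty$ the relevant suprema always exist, and combining the two chains componentwise yields $q\big(\bigvee_i\nu_i,\bigvee_j L_j\big) = \bigvee_k q(\nu_k,L_k)$.

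The substance of the argument is the commutation of \cref{eq:coherence_evaluation}, which I would verify by unfolding both legs on an element $(\nu_1,m,a,\delta)$ of $\mcfunc[\nat\times A]{\tvdsys{\Omega}}\times\dfautomata{\tvdspec{\Omega}}$ with $\tvdsys{\Omega}=\subdistributionfunc(A^{+}\times\nat)$ and $\tvdspec{\Omega}=\nempfaclang$. The top-right leg applies $\modsys{\tau}$ and the DFA modality (which yields $\modspec{\tau}(\delta)=\{a\mid\delta(a)=(\_,\top)\}\cup\{a\cdot w\mid\delta(a)=(T,\_),\,w\in T\}$) and then $q$; the left-bottom leg applies $\lambda$, which routes $\nu_1$ along the DFA transition determined by $a$, then $F_{S\otimes R}(q)=\distributionfunc(q+\settorf)\times\id_{\nat}$, then $\tau_{S\otimes R}$. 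On both sides the computation splits according to whether an accepting trace is the single letter $a$ (which happens exactly when $\delta(a)=(\_,\top)$, with mass $\nu_1(\star)$ and accumulated reward $m$) or has the form $a\cdot w'$ (which happens exactly when $\delta(a)=(T,\_)$ and $w'\in T$, handled by the successor bookkeeping) --- precisely the case split already carried out for MCs and DFAs in \cref{lem:commute_mc_dfas}. The genuinely new ingredient is the reward coordinate, and this is where I expect the real work: one must check that summing the accumulated reward $n=m+n'$ over the routed subdistribution on the left matches the modality rule on the right, i.e.\ that its reward output equals $m$ times the reach-to-$\top$ probability plus the $\nu$-weighted sum of the successor reward values. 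This is a rearrangement of (possibly infinite) sums of nonnegative reals, so it needs care to keep the probability and reward components aligned but raises no convergence subtlety beyond nonnegativity. Once the diagram commutes, \cref{thm:correctness_product} immediately delivers correctness of $c\otimes_{\lambda} d$ with respect to $q$.
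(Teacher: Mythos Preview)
Your proposal is correct and follows essentially the same route as the paper's proof: instantiate \cref{thm:correctness_product}, verify that $q$ preserves least elements and is $\omega$-continuous via monotone convergence (treating the probability and reward coordinates separately), and then check the coherence square by a direct unfolding of both legs, with the reward bookkeeping being the only new ingredient beyond the MC/DFA case. The paper's appendix simply carries out in full the calculations you sketch here.
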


\begin{remark}
As the query in \cref{def:query_rewards} computes the probability \emph{and} the partial expected reward, it is straightforward to use the construction also for conditional rewards (see \cref{ssec:ext-inference}).
\end{remark}

\section{Case Study \Rom{2}: Resource-Sensitive Reachability of MCs}
\label{sec:cost_bounded_consumption}
In this section, we present resource-sensitive reachability analyses of MCs. 
Firstly, we show that reachability analysis to the target state for resource-sensitive MCs---\emph{MCs with costs}~\cite{SteinmetzHB16,HartmannsJKQ20}---is an instance of our coalgebraic inference (in~\cref{subsec:cost-boundedReach}). 
We then introduce a resource-sensitive reachability analysis for MCs, 
where resources are induced by requirements (in~\cref{subsect:costReq}); the resource-sensitive inference is again an instance of coalgebraic inference. 
We show that all of these instances meet the correctness criterion (\cref{thm:correctness_product}), implying their product constructions are correct (by~\cref{prop:correctCostBound,prop:correctCostReq}).

\subsection{Cost-Bounded Reachability}
\label{subsec:cost-boundedReach}

MCs with costs assign to every state a cost of entering that state, or alternatively, to every transition a cost of taking that transition.
Given $M \in \nat$, an \emph{MC with (bounded) costs and a target state} is a coalgebra $c\colon X\rightarrow \mcfunc[\nset{M}]{X}$. 
The requirement for cost-bounded reachability \sj{why thus? Did we even clarify what cost-bounded reach is? Shouldnt we introduce the coalgebra for the requirement?} tracks the sequences of the costs, where the specified bound for the accumulated sum of the sequences is given by a natural number $N$.
Concretely, let $\nu\in \subdistributionfunc(\nset{M}^{+})$ be the subdistribution of the sequence of costs, where the paths eventually reach the target state $\star$. 
The cost-bounded reachability is the inference \[\cbquery\colon \subdistributionfunc(\nset{M}^{+}) \times \nat \rightarrow \probinterval\quad\text{ with }\quad\cbquery(\nu, N)\defeq \sum_{w\in \supp(\nu), \sum w  < N} \nu(w),\] where $\sum w$ denotes the accumulated sum of the costs on $w$. 

We formulate the inference as the instance of our coalgebraic inference (\cref{def:coalg-inference}). 
The requirement is now given by the following DFA: 
\sj{I dont understand why we formulate this as a DFA. Why not a as a coalgebra?}
\begin{definition}[requirement for MCs with costs]
    \label{def:dfas_for_costs}
     Let $M,N \in \nat$. The \emph{requirement} $d$ is the coalgebra $d\colon\nsetbot{N}\rightarrow \dfautomata[\nset{M}]{\nsetbot{N}}$ given, 
    for each $i\in \nset{N}$ and $j\in \nset{M}$, by
    \begin{align*}
        d(i)(j) &\defeq \begin{cases}
            \big(i - j, \top \big) &\text{ if } i - j > 0,\\
            \big(\bot, \bot\big) &\text{ otherwise,} 
        \end{cases}\,
        &d(\bot)(j) \defeq (\bot, \bot).
    \end{align*}    
\end{definition}
Given a state $m\in \nset{N}$ of the requirement $d$, the semantics $\lfpoint[m]{d}$ defined in~\cref{ex:tau-semanticsDFA} is the recognized language of the requirement $d$ from the state $m$, that is, 
the set $\{ w \mid w\in \nset{M}^{+} \text{ such that }\sum w < m\} \in \nempfaclang[\nset{M}]$ of the sequences of costs.  

\begin{definition}[coalgebraic inference of cost-bounded reachability]
    \label{def:coinfCostBound}
    The coalgebraic inference of cost-bounded reachability is given by 
    \begin{itemize}
        \item The  system $c\colon X\rightarrow \mcfunc[\nset{M}]{X}$ with semantic domain $\subdistributionfunc([M]^{+})$ 
        \item  the requirement $d\colon\nsetbot{N}\rightarrow \dfautomata[\nset{M}]{\nsetbot{N}}$ defined in~\cref{def:dfas_for_costs}  with semantic domain $\nempfaclang[\nset{M}]$(defined in~\cref{ex:tau-semanticsDFA})
        \item The query $q\colon \subdistributionfunc([M]^{+})\times \nempfaclang[\nset{M}] \rightarrow \probinterval$ given by $q(\nu, L) \defeq \sum_{w\in L}\nu(w)$. 
    \end{itemize}
\end{definition}

Since~\cref{def:coinfCostBound} is the instance of the probabilistic inference defined in~\cref{ex:probinf_coalinf}, 
we obtain the coalgebraic product with the distributive law $\lambda_{X, Y}\colon \mcfunc[\nset{M}]{X}\times\dfautomata[\nset{M}]{Y}\rightarrow  \mcflagfunc{X\times Y}$ shown in~\cref{ex:dis_mc_dfa}, and the same semantic structure shown in~\cref{ex:ssProdMCDFA}. 
\begin{proposition}
    \label{prop:correctCostBound}
    The coalgebraic product $c\otimes_{\lambda} d$ of the cost-bounded reachability with $\tau_{S\otimes R}$ defined in~\cref{ex:ssProdMCDFA} meets the correctness criterion (\cref{thm:correctness_product}); 
    thus $c\otimes_{\lambda} d$ is correct w.r.t. the query $q$, and $\lfpoint[(x, N)]{c\otimes_{\lambda} d} = q(\lfpoint[x]{c}, \lfpoint[N]{d}) = \cbquery(\lfpoint[x]{c}, N)$. 
\end{proposition}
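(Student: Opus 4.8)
The plan is to obtain the claim directly from the correctness criterion \cref{thm:correctness_product}, reusing almost verbatim the verification already carried out for probabilistic inference. By \cref{def:coinfCostBound}, the coalgebraic inference of cost-bounded reachability is an instance of the probabilistic inference of \cref{ex:probinf_coalinf}, obtained simply by replacing the alphabet $A$ with the finite cost set $\nset{M}$: the system functor is $\mcfunc[\nset{M}]{-}$, the requirement functor $\dfautomata[\nset{M}]{-}$, the product functor $\mcflagfunc{-}$, the distributive law $\lambda$ is the one of \cref{ex:dis_mc_dfa}, and the semantic structure on the product is $\bigl((\probinterval,\leq),\tau_{S\otimes R}\bigr)$ from \cref{ex:ssProdMCDFA}. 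It therefore suffices to check that the three hypotheses of \cref{thm:correctness_product} hold for the query $q(\nu,L)\defeq\sum_{w\in L}\nu(w)$.

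I would verify these exactly as for probabilistic inference. First, $q(\bot,\bot)=\sum_{w\in\emptyset}\bot(w)=0=\bot$, so $q$ preserves least elements. Second, $q$ is $\omega$-continuous by the monotone convergence theorem, the argument being insensitive to the choice of alphabet (it is the content of \cref{prop:omega_cpo_e_mc_dfas}). Third, instantiating the coherence square of Eq.~\cref{eq:coherence_evaluation} to these functors, this $\lambda$, and $\tau_{S\otimes R}$ yields precisely the square displayed in the correctness discussion following \cref{thm:correctness_product} (again with $A$ replaced by $\nset{M}$), which commutes by the direct calculation of \cref{lem:commute_mc_dfas}. Hence \cref{thm:correctness_product} applies: the product $c\otimes_{\lambda} d$ with $\tau_{S\otimes R}$ is correct w.r.t.\ $q$, i.e.\ $\lfpoint{c\otimes_{\lambda} d}=q\circ(\lfpoint{c}\times\lfpoint{d})$, and evaluating at a pair $(x,N)$ gives $\lfpoint[(x,N)]{c\otimes_{\lambda} d}=q(\lfpoint[x]{c},\lfpoint[N]{d})$.

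It then only remains to match $q(\lfpoint[x]{c},\lfpoint[N]{d})$ with $\cbquery(\lfpoint[x]{c},N)$. As noted after \cref{def:dfas_for_costs} (and provable by a one-line induction on word length, using that all costs in $\nset{M}$ are positive so that partial sums are strictly increasing), the semantics $\lfpoint[N]{d}$ of the requirement from the budget state $N$ is exactly the language $\{w\in\nset{M}^{+}\mid\sum w<N\}$. Substituting into $q$ and using that $\lfpoint[x]{c}$ is a subdistribution (so $\lfpoint[x]{c}(w)=0$ off its support) gives $q(\lfpoint[x]{c},\lfpoint[N]{d})=\sum_{w\in\nset{M}^{+},\,\sum w<N}\lfpoint[x]{c}(w)=\sum_{w\in\supp(\lfpoint[x]{c}),\,\sum w<N}\lfpoint[x]{c}(w)=\cbquery(\lfpoint[x]{c},N)$, which is the definition of $\cbquery$.

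The only work genuinely specific to this proposition, beyond re-instantiating the probabilistic-inference case, is the identification of $\lfpoint[N]{d}$ with the cost-bounded language; this is routine, and I do not anticipate any real obstacle, since $\omega$-continuity of $q$ and commutativity of the coherence square are alphabet-agnostic and were already established.
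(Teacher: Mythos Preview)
Your proposal is correct and follows essentially the same approach as the paper: the paper observes just before the proposition that \cref{def:coinfCostBound} is literally an instance of the probabilistic inference of \cref{ex:probinf_coalinf} with alphabet $\nset{M}$, so the correctness criterion holds by the same verification (\cref{prop:omega_cpo_e_mc_dfas} and \cref{lem:commute_mc_dfas}), and the identification $\lfpoint[N]{d}=\{w\in\nset{M}^{+}\mid\sum w<N\}$ is stated right after \cref{def:dfas_for_costs}. Your write-up simply spells out these steps in slightly more detail than the paper does.
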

In fact, the known method of \emph{unfolding}~\cite{AndovaHK03,LaroussinieS05} for computing cost-bounded reachability coincides with the coalgebraic product construction.

\subsection{Costs Induced by Requirements}
\label{subsect:costReq}
We consider another cost-bounded reachability problem: costs are induced by requirements. 
Given an MC, the inference problem is given by 
\[ q_{cr}\colon   \distributionfunc(A^{+})\times {(\nat^{+})}^{A^{+}} \times \nat \rightarrow \probinterval, \quad q_{cr}(\nu, f, N) \defeq \sum_{w\in \supp(\nu),\, \big(\sum f(w)\big) < N} \nu(w). \]
Here, the requirement gives an assignment $f$ and a bound $N$, and asks that the accumulated sum $\sum f(w) \defeq f(w)_1+f(w)_2+\cdots $ of costs satisfies $\sum f(w) < N$ for each trace $w$. 
It is natural to assume that the assignment $f$ is finitely represented; here we define \emph{reward machines}:
    a reward machine is a coalgebra $d\colon Y\rightarrow (Y\times \nset{M})^A$.
Our approach to this inference can be summarised as follows: We create a new product DFA as the requirement from the reward machine $f$ and the bound $N$, and we take the coalgebraic product of the MC and the DFA.

The semantic domain for the reward machine $f$ is given by ${(\nsetbot{M}^{+})}^{A^{+}}$, where $\nsetbot{M}^+$ is the set $\nset{M}^+\uplus \{\bot\}$ with the ordering $\preceq$, defined, for any $z, z'\in \nsetbot{M}^+$, by $z\preceq z'$ if $z = z'$ or $z = \bot$.
We then define the semantics of the reward machines as the assignment of sequences of costs for each trace from each state $y$,
where the partial order $\preceq$ on ${(\nsetbot{M}^{+})}^{A^{+}}$ is given by $f\preceq f'$ iff $f(w) \preceq f'(w)$ for each trace $w\in A^+$; see~\cref{appendix:costReq} for the complete definitions of the semantics.  

We impose the condition of the bound $N$ by the coalgebraic product construction for reward machines and DFAs as follows: 

\begin{definition}
    Let $A$ and $B$ be finite sets, the \emph{distributive law} $\alpha_{Y, Z}\colon (Y\times B)^A\times \dfautomata[B]{Z}\rightarrow (Y\times Z\times \settorf)^A $ is given by
        $\alpha_{Y, Z}(\delta_1, \delta_2)(a) \defeq (y, z, t)$,
    where $(y, b) \defeq \delta_1(a)$, and $(z, t) \defeq \delta_2(b)$.  
\end{definition}
Given a reward machine $d_1$ and the requirement $d_2$ for cost-bounded reachability defined in~\cref{def:dfas_for_costs}, 
we obtain the following new requirement $d_3$ by the coalgebraic product construction with the distributive law $\alpha$, where $B\defeq \nset{M}$.   
\begin{definition}[requirement for reward machine and bound]
    Given a reward machine $d_1$, and a bound $N$, the \emph{requirement} $d_3\colon Y\times \nsetbot{N}\rightarrow \dfautomata[A]{Y\times \nsetbot{N}}$ is the DFA given by  $d_3\defeq d_1\otimes_{\alpha} d_2$,
    where the coalgebra $d_2\defeq \nsetbot{N}\rightarrow \dfautomata[\nset{M}]{\nsetbot{N}}$ is defined in~\cref{def:dfas_for_costs}. 
    Concretely, the requirement $d_3$ is given by  
    \begin{align*}
        d_3(y, i)(a) &\defeq \begin{cases}
            \big(y', i - j, \top \big) &\text{ if } i - j > 0,\\
            \big(y', \bot, \bot\big) &\text{ otherwise,} 
        \end{cases}\,
        &d_3(y, \bot)(a) \defeq (y', \bot, \bot),
    \end{align*} 
    where $(y', j) \defeq d_1(y)(a)$. 
\end{definition}

Then, the coalgebraic product with the MC $c\colon \mcfunc{X}$ and the requirement $d_3 \defeq d_1\otimes_{\alpha} d_2$ is correct w.r.t. the query $q_{cr}$ as follows. 
\begin{proposition}
    \label{prop:correctCostReq}
    The following equations hold: 
    \[
        \lfpoint[(x, y, N)]{c\otimes_{\lambda } d_3} = q_1(\lfpoint[x]{c}, \lfpoint[(y, N)]{d_1\otimes_{\alpha} d_2}) = q_1\big(\lfpoint[x]{c}, q_2(\lfpoint[y]{d_1}, \lfpoint[N]{d_2})\big)  = q_{cr}(\lfpoint[x]{c}, \lfpoint[y]{d_1}, N),
    \]
    where the query $q_1\colon \subdistributionfunc(A^{+})\times \nempfaclang[A] \rightarrow \probinterval$ is defined in~\cref{ex:probinf_coalinf}, and the query $q_2\colon{(\nsetbot{M}^{+})}^{A^{+}} \times \nempfaclang[\nset{M}]  \rightarrow \nempfaclang$ is defined by
       $q_2(f, L) \defeq \{ w\in A^{+}\mid \ f(w)\in L\}$. 
\end{proposition}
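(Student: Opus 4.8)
The plan is to establish the three equalities in the chain one at a time. The guiding idea is that the outer product $c\otimes_\lambda d_3$ and the outer query $q_1$ are nothing but the probabilistic-inference instance of \cref{ex:probinf_coalinf,ex:ssProdMCDFA} applied with the DFA $d_3$ as requirement, whereas the inner product $d_1\otimes_\alpha d_2$ is handled by a fresh application of the correctness criterion \cref{thm:correctness_product}. For the first equality $\lfpoint[(x,y,N)]{c\otimes_\lambda d_3}=q_1\big(\lfpoint[x]{c},\lfpoint[(y,N)]{d_1\otimes_\alpha d_2}\big)$ I would simply observe that $d_3\colon Y\times\nsetbot{N}\to\dfautomata[A]{Y\times\nsetbot{N}}$ is an ordinary DFA over $A$ and $c$ an MC over $A$, so that $c\otimes_\lambda d_3$ with the distributive law $\lambda$ of \cref{ex:dis_mc_dfa} and the modality $\tau_{S\otimes R}$ of \cref{ex:ssProdMCDFA} is exactly the construction already shown correct w.r.t.\ the query $q=q_1$ of \cref{ex:probinf_coalinf}. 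Instantiating \cref{def:correctness} with the system $c$ and the requirement $d_3$ and evaluating at $(x,(y,N))$ yields the claim; nothing new is needed.

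For the second equality it suffices to prove $\lfpoint[(y,N)]{d_1\otimes_\alpha d_2}=q_2\big(\lfpoint[y]{d_1},\lfpoint[N]{d_2}\big)$ and then apply $q_1(\lfpoint[x]{c},-)$ to both sides. I would prove this as a fresh instance of \cref{thm:correctness_product}, taking $F_S=(-\times\nset{M})^A$ with the reward-machine semantic structure $\big({(\nsetbot{M}^{+})}^{A^{+}},\tau_S\big)$ from \cref{appendix:costReq}, $F_R=\dfautomata[\nset{M}]{-}$ with the DFA structure $\big(\nempfaclang[\nset{M}],\tau_R\big)$ of \cref{ex:tau-semanticsDFA}, $F_{S\otimes R}=\dfautomata[A]{-}$ with the DFA structure $\big(\nempfaclang[A],\tau_R\big)$, distributive law $\alpha$, and query $q_2$. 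The three hypotheses should be checked as follows. \emph{Least elements:} $q_2(f,\emptyset)=\emptyset$ for every $f$, and $\emptyset$ is least in $\nempfaclang[A]$, so $q_2$ preserves $\bot$. \emph{$\omega$-continuity:} joins in ${(\nsetbot{M}^{+})}^{A^{+}}$ are computed pointwise, and since the order on $\nsetbot{M}^{+}$ is flat every $\omega$-chain stabilises at each argument; joins in $\nempfaclang[\nset{M}]$ are unions; using that no $L\in\nempfaclang[\nset{M}]$ contains $\bot$, one verifies $q_2\big(\bigvee_i f_i,\bigcup_j L_j\big)=\bigcup_k q_2(f_k,L_k)$ by a short argument comparing $f_k(w)$ with its eventual value. \emph{Coherence:} the square Eq.~\cref{eq:coherence_evaluation} should commute by a direct computation unfolding $\tau_S$, $\tau_R$, $\alpha$ and $q_2$ --- reading a letter $a\in A$ produces a cost $j$ and a next reward-machine state on one leg, and by definition $\alpha$ feeds exactly this $j$ into $d_2$ to produce the next bound and the acceptance flag on the other leg, so both legs compute the same one-step update of the recognised language over $A$, mirroring the commuting diagram already verified for MCs and DFAs.

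For the third equality I would just unfold the semantics. By \cref{ex:tau-semanticsMC}, $\lfpoint[x]{c}$ is the trace (sub)distribution $\nu$ of the MC; by the reward-machine semantics of \cref{appendix:costReq}, $f\defeq\lfpoint[y]{d_1}$ assigns to each $w\in A^{+}$ the sequence of costs incurred along $w$ from $y$, which lies in $\nset{M}^{+}$ (and is never $\bot$, since $d_1$ is total); and, as noted just after \cref{def:dfas_for_costs}, $\lfpoint[N]{d_2}=\{w\in\nset{M}^{+}\mid\sum w<N\}$. Hence $q_2\big(\lfpoint[y]{d_1},\lfpoint[N]{d_2}\big)=\{w\in A^{+}\mid\sum f(w)<N\}$, and applying $q_1(\lfpoint[x]{c},-)=\sum_{w\in-}\nu(w)$ produces exactly $\sum_{w\in A^{+},\,\sum f(w)<N}\nu(w)=q_{cr}(\nu,f,N)$; restricting the sum to $\supp(\nu)$, as the definition of $q_{cr}$ does, changes nothing.

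The step I expect to be the main obstacle is the coherence condition in the second equality, i.e.\ checking that $\alpha$ together with the reward-machine modality and the two DFA modalities makes Eq.~\cref{eq:coherence_evaluation} commute. It is routine but fiddly, mostly because one has to track carefully how the overflow state $\bot$ of $d_2$ propagates into the acceptance flag of the product DFA $d_3$ along both legs of the square, and keep the two alphabets $A$ and $\nset{M}$ and the two levels of product (the inner one via $\alpha$, the outer one via $\lambda$) straight throughout.
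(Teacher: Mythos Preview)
Your proposal is correct and follows essentially the same route as the paper. The paper's proof likewise reduces everything to showing that the inner product $d_1\otimes_{\alpha}d_2$ satisfies the correctness criterion (\cref{thm:correctness_product}) for the DFA modality with respect to $q_2$, obtaining $\lfpoint[(y,N)]{d_3}=q_2(\lfpoint[y]{d_1},\lfpoint[N]{d_2})$; the first equality is exactly the already-established MC--DFA instance, and the third is the direct unfolding you describe.
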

For the proof, it suffices to show that the coalgebraic product $d_3\defeq d_1\otimes_{\alpha} d_2$ satisfies the correctness criterion with the modality for DFA (defined in~\cref{ex:tau-semanticsDFA}) w.r.t. the query $q_2$, concluding that $\lfpoint[(y, N)]{d_3} = q_2(\lfpoint[y]{d_1}, \lfpoint[N]{d_2})$; see~\cref{appendix:costReq} for the detail.

\section{Case Study~\Rom{3}: Quantitative Inference on Probabilistic Systems and Safety Properties over Infinite Traces}
\label{sec:MCsNeverTerminate}
In this section, we consider probabilistic programs that do not terminate. 
Given a DFA as a requirement, we infer the likelihood of 
eventually accepting traces on probabilistic programs. 
Further, we show a translation from the quantitative inference of probabilistic programs that almost-surely terminate to that of probabilistic programs that do not terminate.

    An \emph{MC without target states} is a coalgebra $c\colon X\rightarrow \mcinffunc{X}$. 
    We additionally assume that MCs are \emph{finite branching}, that is, the support $\supp(P(x,\_))$ of the probabilisitic transition is finite for each $x\in X$. 
To formally define the semantics of MCs without target states, we introduce some measure-theoretic notions; we refer to~\cite{Baier08} as a gentle introduction to measure theory for MCs. 
    The \emph{cylinder set} $\cyset[]{w}$ of finite string $w \defeq a_1\cdots a_m$ over $A$ is the set of infinite strings defined by  
        $\cyset[]{w} \defeq \{w'\in A^{\omega} \mid w = a_1\cdots a_m \in \pref{w'} \}$, where $\pref{w'}$ is the set of prefixes of $w'$. 
    
\begin{definition}
    \label{def:sigma_algebra}
    The \emph{$\sigma$-algebra} $\salg{A}$ associated with $A$ is the smallest $\sigma$-algebra that contains all cylinder sets $\cyset[]{w}$.
    We write $\probmeasure(\salg{A})$ for the set of probability measures on $\salg{A}$. 
\end{definition}
For each $x\in X$, it has been known that there is the unique probability measure $\mathbb{P}_x$ on $\salg{A}$ such that 
$\mathbb{P}_x(\cyset[]{w}) = \sum_{l(x_1)\cdots l(x_n) = w, x_1 = x}\Pi_{i\in \nset{n-1}}P(x_{i},x_{i+1})$ for each $n\in \nat$ and $w\in A^{n}$; see e.g.~\cite{Baier08}.\footnote{The probability measure is defined over paths in~\cite{Baier08}, while we define it over traces. }

The appropriate definition of the semantic domain is not straightforward. 
We define the semantic domain $\Omega$ by the disjoint union of the set $ \probmeasure(\salg{A})$ of probability measures on $\salg{A}$ and  the set $ \biguplus_{n\in \nat} \distributionfunc(A^n)$ of distributions.
A non-trivial part of the semantics is the definition of the partial order $\preceq$ on the semantic domain $\Omega$:
the partial order $\preceq$ on $\Omega$ is given by the \emph{marginalization}. 

\begin{definition}[marginalization]
        Let $m,n\in \nat$ such that $m\leq n$, and $\sigma\in \distributionfunc(A^n)$.  
        We define the \emph{marginalization} $\mrg{\sigma}{m}\in  \distributionfunc(A^m)$ by $\mrg{\sigma}{m}(w) \defeq \sum_{w'\in A^n\land w\in \pref{w'}}\sigma(w')$. 
        Analogously, we define the \emph{marginalization} $\mrg{\sigma}{m}\in  \distributionfunc(A^m)$ for probability measures $\sigma\in \probmeasure(\salg{A})$ as well.  
\end{definition}
    
\begin{definition}[semantic domain]
    \label{def:truth-valueMC}
        The \emph{semantic domain} $\Omega$ is the set given by $\Omega \defeq \probmeasure(\salg{A})\uplus \biguplus_{n\in \nat} \distributionfunc(A^n)$. 
        The partial order $\preceq$ over  $\Omega$ is defined by the so-called Kolmogorov consistency condition. 
        Specifically, the order $\preceq$  is the smallest order that satisfies the following condition: 
        \begin{itemize}
            \item let $\sigma\in \distributionfunc(A^m)$, and $\sigma'\in \distributionfunc(A^n)$ and $m <  n$. 
                  If $\sigma = \mrg{\sigma'}{m}$, then $\sigma\prec \sigma'$. 
            \item let $\sigma\in \distributionfunc(A^m)$, and $\sigma'\in \probmeasure(\salg{A})$. 
            If $\sigma = \mrg{\sigma'}{m}$, then $\sigma\prec \sigma'$. 
        \end{itemize}
            
\end{definition}

The semantic domain forms an $\omega$-cpo with the least element by the Kolmogorov extension theorem. This is because the 
$\sigma$-algebra $(A, \mathcal{P}(A))$ is a Polish space (complete separable metric space); see e.g.~\cite{klenke2013probability} or~\cite{hairer2020markov} for comprehensive references.   

\begin{definition}[modality for MCs]
    \label{def:modalityMCs}
    The \emph{modality} $\tau_S: \mcinffunc{\Omega}\rightarrow \Omega$ of MCs over $A$ is given as follows.
    Let $\nu\in \distributionfunc(\Omega)$, $a\in A$, and $m\defeq \min\big(\{n\mid \sigma\in \supp(\nu)\text{ and }\sigma\in \distributionfunc(A^n)\}, \omega\big)$.
    If $m\in \nat$, then we define the distribution $\sigma_{m+1}\in \distributionfunc(A^{m+1})$ as follows: 
    \begin{align*}
        \sigma_{m+1}(w) \defeq \begin{cases}
            \sum_{\sigma\in \supp(\nu)} \nu(\sigma)\cdot \mrg{\sigma}{m}(w') &\text{ if $w = a\cdot w'$,}\\
            0 &\text{ if $a\not \in\pref{w}$.}
        \end{cases}
    \end{align*}
    The modality $\tau_S$ is then defined by $\tau_S(\nu, a) \defeq \sigma_{m+1}$. 
    If $m = \omega$, then we construct the sequence $(\sigma_n)_{n\in \nat}$ of distributions by the above construction. 
    We define the modality $\tau_S$ by $\tau_S(\nu, a)\defeq \bigvee (\sigma_n)_{n\in \nat}$; this is well-defined because the sequence $(\sigma_n)_{n\in \nat}$ is an $\omega$-chain (see~\cref{sec:proof_inf_trace} for the detail). 
\end{definition}

We assume that the requirement is given by a DFA $d\colon Y\rightarrow \dfautomata{Y}$.
The query $q$ is defined by the \emph{partition}, which is a standard technique in measure theory to obtain a monotone sequence.  

\begin{definition}[partition]
    \label{def:partition}
    Let $T\in \nempfaclang$. For each $n\in \nat\backslash\{0\}$, we inductively define the set $T^n\subseteq \mathcal{P}(A^n)$ as follows: 
    $T^1 \defeq \{w\in T \mid |w| = 1\}$, and $T^{n+1}\defeq \{w\in T\mid |w| = n+1\text{ and for any $w'\in \pref{w}$, $w'\not\in \cup_{i\in [1, n]} T^i$}\}$.
\end{definition}

\begin{definition}[query]
    \label{def:query_mc_not-terminate}
The query $q\colon \Omega \times \nempfaclang\rightarrow \probinterval$ is given by 
\begin{align*}
    q\big(\sigma,T\big) \defeq  \begin{cases}
        \sum_{i\in [1, n]}\mrg{\sigma}{i}(T^i) &\text{ if $\sigma\in \distributionfunc(A^n)$,}\\
        \sigma(T^{\omega}) &\text{ if $\sigma\in \probmeasure(\salg{A})$,}
    \end{cases}
\end{align*}
where $T^n$ is defined in~\cref{def:partition}, and  $T^{\omega} \defeq \bigcup_{l\in T} \cyset[]{l}$.
\end{definition}

We obtain the coalgebraic product that is similar to the one in~\cref{ex:dis_mc_dfa}. 

\begin{definition}
    \label{def:dist_MC_not-terminate}
    The distributive law $\lambda_{X, Y}\colon \mcinffunc{X}\times \dfautomata{Y}\rightarrow \dmcflagfunc{X\times Y}$ is given by 
    \begin{align*}
        \lambda_{X, Y}(\nu, a, \delta)(x, y) &\defeq \begin{cases}
            \nu(x) &\text{if $\delta(a) = (y, \bot)$,}\\
            0 &\text{otherwise,}
        \end{cases}
        & \lambda_{X, Y}(\nu, a, \delta)(\star) \defeq \begin{cases}
            1 &\text{if $\delta(a) = (\_, \top)$,}\\
            0 &\text{otherwise.}
        \end{cases}
    \end{align*}
\end{definition}

\begin{definition}
    The modality $\tau_{S\otimes R}\colon  \dmcflagfunc{\probinterval}\rightarrow \probinterval$ is given by 
    \begin{align*}
        \tau_{S\otimes R}(\nu) \defeq \nu(\star) + \sum_{p\in \probinterval} \nu(p)\cdot p. 
    \end{align*}
\end{definition}

The coalgebraic product satisfies the weaker correctness criterion in~\cref{prop:weaker_correctness_criterion}; see~\cref{sec:proof_inf_trace} for the proof.

\begin{proposition}
    The product  with  the modality $\tau_{S\otimes R}$ is correct with respect to the query $q$. 
\end{proposition}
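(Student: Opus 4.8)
The plan is to invoke the weaker correctness criterion, \cref{prop:weaker_correctness_criterion}, instantiated with the product functor $\fprod{F} = \dmcflagfunc{-}$, the distributive law $\lambda$ of \cref{def:dist_MC_not-terminate}, and the modality $\tau_{S\otimes R}$. Thus there are two obligations: (i)~the query $q$ of \cref{def:query_mc_not-terminate} preserves the least element and is $\omega$-continuous; and (ii)~for all coalgebras $c\colon X\rightarrow \mcinffunc{X}$ and $d\colon Y\rightarrow \dfautomata{Y}$ and all $k\in\nat$, the diagram in \cref{eq:coherence_evaluation_2} commutes with $u_{S,k} = (\predtran{\modsys{\tau},c})^{k}(\bot)$ and $u_{R,k} = (\predtran{\modspec{\tau},d})^{k}(\bot)$.

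\emph{Continuity of $q$.} The least element of the semantic domain $\Omega$ of \cref{def:truth-valueMC} is the single element of $\distributionfunc(A^{0})$, and $q$ of it against the empty language is the empty sum $0$, the least element of $\probinterval$. For $\omega$-continuity, I would first observe that an $\omega$-chain in $\Omega$ is either eventually constant, or a chain $(t_i)$ of finite-length distributions with strictly increasing lengths $n_0 < n_1 < \cdots$ whose supremum is a probability measure $\sigma_\infty\in\probmeasure(\salg{A})$ with $t_i = \mrg{\sigma_\infty}{n_i}$ (by the Kolmogorov extension theorem, already used for the $\omega$-cpo structure). In the nontrivial case, with $t'_j \nearrow T_\infty$ an $\omega$-chain of languages, I would rewrite $q(t_i, t'_j) = \sigma_\infty(E_{i,j})$, where $E_{i,j}$ is the finite disjoint union of the cylinders $\cyset[]{w}$ of the prefix-minimal words $w$ of $t'_j$ with $|w| \le n_i$; the sets $E_{i,j}$ increase in both indices and $\bigcup_{i,j} E_{i,j} = T_\infty^{\omega}$, so continuity of the measure $\sigma_\infty$ from below yields $\bigvee_k q(t_k,t'_k) = \sigma_\infty(T_\infty^{\omega}) = q(\sigma_\infty, T_\infty)$. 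Monotonicity of $q$ drops out along the way.

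\emph{The Kleene iterates and the top composite.} Using the characterisations in \cref{ex:tau-semanticsMC,ex:tau-semanticsDFA} (or a direct induction on $k$), $u_{S,k}(x)\in\distributionfunc(A^{k})$ is the distribution of length-$k$ traces from $x$, and $u_{R,k}(y) = \{w\in A^{+} \mid |w|\le k,\ w \text{ accepted from } y\}$. Since $\distributionfunc(u_{S,k})$ maps into $\distributionfunc(\distributionfunc(A^{k}))$, the index $m$ appearing in the modality $\tau_S$ of \cref{def:modalityMCs} is always equal to $k$, so no supremum is taken and $\tau_S\circ\fsys{F}(u_{S,k})\circ c = u_{S,k+1}$; likewise $\tau_R\circ\fspec{F}(u_{R,k})\circ d = u_{R,k+1}$. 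Hence the top composite of \cref{eq:coherence_evaluation_2} sends $(x,y)$ to $q\big(u_{S,k+1}(x),\, u_{R,k+1}(y)\big)$, the inference of the $(k{+}1)$-truncations.

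\emph{The bottom composite and the unfolding identity.} Applying $\lambda$ of \cref{def:dist_MC_not-terminate}, pushing forward along $q$, and applying $\tau_{S\otimes R}(\mu) = \mu(\star) + \sum_{p}\mu(p)\cdot p$, I compute the bottom composite at $(x,y)$ to be $1$ if the Boolean output of $d(y)$ on the letter $l(x)$ is $\top$, and $\sum_{x'} P(x,x')\cdot q\big(u_{S,k}(x'),\, u_{R,k}(n)\big)$ otherwise, where $n$ is the $d$-successor of $y$ on $l(x)$. It remains to see that $q(u_{S,k+1}(x), u_{R,k+1}(y))$ equals this case split, and this is the step I expect to require the most care. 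Writing $a = l(x)$, $T' = u_{R,k+1}(y)$ and $T'' = u_{R,k}(n)$, the key lemma (in the flag-$\bot$ case) is $aw\in (T')^{i} \iff w\in (T'')^{i-1}$ for the partitions of \cref{def:partition}, proved by induction on $i$ using that $a\notin (T')^{1}$ and that the proper prefixes of $aw$ constraining membership in $(T')^{<i}$ are exactly $a$ followed by the proper prefixes of $w$ constraining membership in $(T'')^{<i-1}$. Combined with the fact that $u_{S,k+1}(x)$ is supported on $a\cdot A^{k}$ with $\mrg{u_{S,k+1}(x)}{i}(aw) = \sum_{x'} P(x,x')\cdot\mrg{u_{S,k}(x')}{i-1}(w)$, summing over $i$ collapses $q(u_{S,k+1}(x), T')$ to $\sum_{x'}P(x,x')\, q(u_{S,k}(x'), T'')$; in the flag-$\top$ case the single word $a$ lies in $(T')^{1}$ and absorbs all the mass, giving $q(u_{S,k+1}(x), T') = 1$. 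In both cases the composites agree, so \cref{eq:coherence_evaluation_2} commutes and \cref{prop:weaker_correctness_criterion} yields the claim. (The full criterion \cref{thm:correctness_product} would instead require the analogous square on all of $\Omega$, including genuine probability measures against arbitrary languages; restricting to the finite-length Kleene iterates is exactly what keeps this bookkeeping finite, which is why we use the weaker criterion here.)
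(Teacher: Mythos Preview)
Your proposal is correct and follows essentially the same route as the paper: invoke \cref{prop:weaker_correctness_criterion}, verify $\omega$-continuity of $q$ via the measure-theoretic continuity-from-below argument, and check the restricted diagram \cref{eq:coherence_evaluation_2} by a case split on the acceptance flag, using crucially that every $\sigma$ in the support of the pushed-forward distribution lies in $\distributionfunc(A^{k})$. Two small remarks: the diagram in \cref{eq:coherence_evaluation_2} has domain $\fsys{F}(X)\times\fspec{F}(Y)$, so you should phrase the verification at a generic $(\nu,a,\delta)$ rather than at $(c(x),d(y))$ (the paper does this; the computation is identical), and your reference to \cref{ex:tau-semanticsMC} for the Kleene iterates is to the wrong modality---here you need the $\tau_S$ of \cref{def:modalityMCs}, though your claim $u_{S,k}(x)\in\distributionfunc(A^{k})$ is still correct by a direct induction.
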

We note that there is a simple translation from the quantitative inference on MCs that almost surely terminates given in~\cref{sec:coalgebraic_inference} to the quantitative inference on MCs that do not terminate; see~\cref{sec:translation} for the detail.

\section{Case Study \Rom{4}: Optimization with Weighted Programs}
\label{sec:weighted_programming}
In this section we show that optimization problems for weighted programs based on the tropical semiring, introduced in~\cref{sec:ski_snowboard}, are an instance of coalgebraic inference; see~\cite{CohenSS11,BrunelGMZ14,GaboardiKOS21,BatzGKKW22,BelleR20} for general references on weighted programming. 
We also show that the associated product construction is correct.

The operational semantics of weighted programs is presented in terms of suitably defined coalgebras; we call them \emph{weighted transition systems} in this paper. 
Formally, a \emph{weighted transition system} is a coalgebra $c\colon X\rightarrow \ltsfuncsemiring{X}{\nat}$.
The \emph{semantics domain of weighted transition systems} is the set $\powersetfunc(A^{+}\times \nat)$ of records consisting of (i) a trace $w\in A^{+}$ to the target state and (ii) a possible accumulated weight $r\in \nat$ along the trace $w$. 
Note that each trace $w$ can correspond to multiple paths whose accumulated weights are different. 

\begin{definition}[semantic structure for weighted transition systems]
The \emph{semantic structure} is $\big((\powersetfunc(A^{+}\times \nat), \subseteq),\, \modsys{\tau}\big)$, where $\modsys{\tau}\colon \ltsfuncsemiring{ \powersetfunc(A^{+}\times \nat)}{\nat}\rightarrow  \powersetfunc(A^{+}\times \nat)$ is given by  
\begin{align*}
    \modsys{\tau}(T_1)\defeq \{ (a,\, m) \mid (\star, a, m) \in T_1\} \cup \{ (a\cdot w,\,m+n)\mid (T_2, a, m) \in T_1,\,  (w, n)\in T_2\}.
\end{align*}

\end{definition}

For inference with weighted programs, we use \emph{non-deterministic finite automata (NFAs)} as the requirement; the weighted transition system has non-deterministic branching---unlike probabilistic branching---that is compatible not only with DFAs but also with NFAs. 
Formally, a \emph{non-deterministic finite automaton (NFA)} is a coalgebra $d\colon Y\rightarrow \exepnfautomata{Y}$, where the underlying set $Y$ is finite. 
The semantics of NFAs is the recognized languages of them excluding the empty string $\epsilon$.
\begin{definition}[semantic structure for NFAs]
The \emph{semantic structure} is $\big((\nempfaclang, \subseteq), \modspec{\tau}\big)$, where the modality $\modspec{\tau}\colon \exepnfautomata{\nempfaclang}\rightarrow \nempfaclang$ is given by  
\begin{equation*}
    \modspec{\tau}(\delta) \defeq \{ a \mid  (\_,\top) \in \delta(a)\} \cup \{a\cdot w \mid (T,\_)\in \delta(a),\, w\in T \}.
\end{equation*}
\end{definition}

The inference with weighted programs is computing the minimum cost to reach the target state along with a run accepted by the requirement; this is a coalgebraic inference (implicitly) induced by the tropical semiring $(\natinf = \nat + \{\infty\}, \min, +)$~\cite{pin1998tropical}. 

\begin{definition}[query for weighted transition systems and NFAs]
\label{def:query_wts_nfas}
    Let $T\in \powersetfunc(A^{+}\times \nat)$ and $L\in \nempfaclang$. 
    The query  $q\colon\powersetfunc(A^{+}\times \nat)\times \nempfaclang\rightarrow\natinf$ is given by 
   $ q(T, L)\defeq  \min \{m \mid w\in L,\, (w, m)\in T \}$.
Here, we define $\min \emptyset\defeq \infty$ as usual. 
    
\end{definition}

The coalgebraic product construction also works well for weighted inference.
\begin{definition}
\label{def:dist_weighted}
    The \emph{distributive law} $\lambda$ from   $\ltsfuncsemiring{\_}{\nat}$ and $\exepnfautomata{\_}$ to $\suptlflagfunc{\_}{\nat}$ is:\footnote{In~\cref{def:dist_weighted,def:dist_weighted_WMA}, we implicitly assume $b\in \settorf$, $x\in X$, and $y\in Y$ for readability.}
\begin{equation*}
    \lambda_{X, Y}(T, \delta) \defeq \big\{(b,m) \ \big| \ (\star, a, m)\in T \text{ and }(\_, b)\in\delta(a)\big\}\cup \big\{ (x, y,  m) \ \big| \ (x, a, m)\in T,\, (y,\_)\in \delta(a)\big\}.
\end{equation*}
\end{definition}

The semantics of the coalgebraic product $c\otimes_{\lambda} d$ induced by $\lambda$ directly computes the minimum cost to reach the target state without any requirements. 

\begin{definition}
    \label{def:sem_weight}
Let $c$ be a weighted transition system, and $d$ be a  NFA. 
    The \emph{semantic structure} is  $\big((\natinf,  \leq_o), \modprod{\tau}\big)$, where the order $\leq_o$ is given by $m \leq_o n$ if $m\geq n$, and the modality $\modprod{\tau}\colon\suptlflagfunc{\natinf}{\nat}\rightarrow \natinf$ is given by 
    $\modprod{\tau}(T)\defeq\min\big( \{ m \mid (\top, m) \in T\} \cup \{ m+ n\mid (n, m)\in T\} \big)$.
\end{definition}

The coalgebraic product construction satisfies the correctness criterion shown in~\cref{thm:correctness_product}: the inference $ q(\lfpoint[x]{c}, \lfpoint[y]{d})$ in~\cref{def:query_wts_nfas} coincides with the semantics $\lfpoint[(x, y)]{c\otimes_{\lambda} d}$,
\iffull
see~\cref{sec:proof_weight_nfas}. 
\else 
see~\cite[Appendix A]{Watanabe24}. 
\fi

\begin{proposition}
 The product with the modality $\tau_{S\otimes R}$ is correct with respect to the query $q$ given in~\cref{def:query_wts_nfas}. 
\end{proposition}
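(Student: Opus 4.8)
The plan is to derive the statement from the correctness criterion (\cref{thm:correctness_product}), instantiated with the system functor $\fsys{F}(X)=\ltsfuncsemiring{X}{\nat}$ and modality $\modsys{\tau}$, the requirement functor $\fspec{F}(Y)=\exepnfautomata{Y}$ and modality $\modspec{\tau}$, the product functor $\fprod{F}(Z)=\suptlflagfunc{Z}{\nat}$ and modality $\modprod{\tau}$ from \cref{def:sem_weight}, the distributive law $\lambda$ from \cref{def:dist_weighted}, and the query $q$ from \cref{def:query_wts_nfas}. It then suffices to verify the three hypotheses of \cref{thm:correctness_product}; by \cref{def:correctness} this yields $\lfpoint[(x,y)]{c\otimes_{\lambda} d}=q(\lfpoint[x]{c},\lfpoint[y]{d})$ for all weighted transition systems $c$ and all NFAs $d$.

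Preservation of least elements is immediate: the least elements of $\tvdsys{\Omega}=\powersetfunc(A^{+}\times\nat)$ and $\tvdspec{\Omega}=\nempfaclang$ are both $\emptyset$, the least element of $(\natinf,\leq_o)$ is $\infty$ (since $\leq_o$ reverses the usual order), and $q(\emptyset,\emptyset)=\min\emptyset=\infty$. For $\omega$-continuity, joins of $\omega$-chains in $\tvdsys{\Omega}$ and $\tvdspec{\Omega}$ are unions, while the join of a $\leq_o$-chain in $\natinf$ is its infimum in the usual order (a nonincreasing sequence of naturals stabilises). Since $q$ is monotone --- enlarging $T$ or $L$ enlarges $\{m\mid w\in L,\ (w,m)\in T\}$ and hence does not increase its minimum --- the sequence $(q(T_k,L_k))_k$ is a $\leq_o$-chain and $q(\bigcup_i T_i,\bigcup_j L_j)$ is a $\leq_o$-upper bound of it; if that value is some $m<\infty$, a witnessing $w$ with $(w,m)\in\bigcup_i T_i$ and $w\in\bigcup_j L_j$ already appears in $T_k$ and $L_k$ for any $k$ at least the two witnessing indices (the chains increase), so $q(T_k,L_k)=m$ and the bound is attained; when $m=\infty$ every $q(T_k,L_k)=\infty$. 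Hence $q(\bigvee_i t_i,\bigvee_j t'_j)=\bigvee_k q(t_k,t'_k)$.

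The main step is the commutation of the diagram in \cref{eq:coherence_evaluation}, which in this instance asserts
\[ q\big(\modsys{\tau}(T),\modspec{\tau}(\delta)\big)\;=\;\modprod{\tau}\Big(\fprod{F}(q)\big(\lambda_{\tvdsys{\Omega},\tvdspec{\Omega}}(T,\delta)\big)\Big) \]
for every $T\in\powersetfunc\big((\tvdsys{\Omega}+\{\star\})\times A\times\nat\big)$ and $\delta\in\big(\powersetfunc(\tvdspec{\Omega}\times\settorf)\big)^{A}$. I would prove it by expanding both sides. On the left, a pair $(u,m)$ lies in $\modsys{\tau}(T)$ with $u\in\modspec{\tau}(\delta)$ precisely when either $u$ is a letter $a$ with $(\star,a,m)\in T$ and $(\_,\top)\in\delta(a)$, or $u=a\cdot w$ with $(T_2,a,m')\in T$, $(w,n)\in T_2$, $m=m'+n$, $(L,\_)\in\delta(a)$ and $w\in L$; minimising over these and using $\min\{n\mid w\in L,\ (w,n)\in T_2\}=q(T_2,L)$ shows the left-hand side equals $\min$ of $\{m\mid (\star,a,m)\in T,\ (\_,\top)\in\delta(a)\}$ together with $\{m'+q(T_2,L)\mid (T_2,a,m')\in T,\ (L,\_)\in\delta(a)\}$. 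On the right, $\lambda_{\tvdsys{\Omega},\tvdspec{\Omega}}$ produces the pairs $(b,m)$ from the $\star$-transitions and the pairs $(T_2,L,m)$ from the state transitions, $\fprod{F}(q)$ rewrites each $(T_2,L)$ to $q(T_2,L)$, and $\modprod{\tau}$ takes $\min$ of $\{m\mid (\top,m)\in\cdot\}\cup\{n'+m\mid (n',m)\in\cdot\}$, yielding exactly the same expression (the case $q(T_2,L)=\infty$ is consistent: the summand is $\infty$ on both sides, and the corresponding set on the left is empty). This establishes all three hypotheses. I expect the only real difficulty to be the bookkeeping in this last step --- tracking the tropical $\min/+$ arithmetic, the treatment of $\infty$, and the two transition types ($\star$ versus states) simultaneously with the functorial action $\fprod{F}(q)$ --- but once both sides are written out as sets, the identification is immediate and the verification is routine.
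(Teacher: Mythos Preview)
Your proposal is correct and follows essentially the same approach as the paper: you verify the three hypotheses of \cref{thm:correctness_product} (preservation of least elements, $\omega$-continuity via a witness-at-finite-stage argument, and commutation of the diagram by expanding both sides), exactly as the paper does in its appendix. Your additional remarks on the $q(T_2,L)=\infty$ case and on how the tropical $\min/{+}$ arithmetic interacts with $\fprod{F}(q)$ make the bookkeeping slightly more explicit than the paper's direct calculation, but the substance is the same.
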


\section{Case Study~\Rom{5}: Quantitative Inference  with Weighted Regular Safety Properties}
\label{sec:quantitativeInferenceWeightedSystemRequirement}
To demonstrate the generality of our approach, we present a new product construction of weighted transition systems introduced in~\cref{sec:weighted_programming} and weighted Mealy machines (WMMs), where the latter impose additional penalties on the traces. 
The inference asks for the minimum cumulative weight of accepting traces. 
Specifically, for each accepting trace $w$, there may be multiple accumulated weights along the trace on a given system and requirement, respectively. 
The minimum cumulative weight of accepting traces is defined by the minimum of the sum $m_{w} + n_{w}$, where $w$ is an accepting trace,  $m_{w}$ and $n_{w}$ are possible accumulated weights along the accepting trace $w$ on the system and the requirement, respectively. 

A \emph{weighted Mealy machine (WMM)} is a coalgebra $d\colon Y\rightarrow \wmealyfunc{Y}$, where the underlying set $Y$ is finite. 
The \emph{semantic domain of WMMs} is the set $\powersetfunc(A^{+}\times \nat)$ of records consisting of (i) an accepting (non-empty) trace $w\in A^{+}$, and  (ii) a possible accumulated weight $r\in \nat$ along the trace $w$.

\begin{definition}[semantic structure for WMMs] The \emph{semantic structure} is $\big((\powersetfunc(A^{+}\times \nat), \subseteq),\, \modspec{\tau}\big)$, where $\modsys{\tau}\colon \wmealyfunc{\powersetfunc(A^{+}\times \nat)}\rightarrow \powersetfunc(A^{+}\times \nat) $ is given by 
\begin{align*}
    \modspec{\tau}(\delta) &\defeq \big\{ (a, m) \, \big|\, (\_, \top, m)\in \delta(a)  \big\}\cup \big\{ (a\cdot w, m+n) \, \big|\, (T, \_, m)\in \delta(a),\, (w, n)\in T  \big\}.
\end{align*}
\end{definition}

Then, the quantitative inference, which is an extension of the one in~\cref{sec:weighted_programming}, deals not only with the cost imposed by a given weighted transition system, but also with the penalty imposed by a WMM. 
\begin{definition}[query for weighted transition systems and WMMs]
    \label{def:query_wts_WMAs}
Let $T,L\in \powersetfunc(A^{+}\times \nat)$. The query $q\colon \powersetfunc(A^{+}\times \nat)\times \powersetfunc(A^{+}\times \nat)\rightarrow \natinf$ is given by 
    $q(T, L) \defeq \min\{ m + n \mid (w,m)\in T, (w, n)\in L\}$.
\end{definition}

The coalgebraic product construction integrates the costs on systems and the penalties on requirements: the following is the corresponding distributive law. 

\begin{definition}
\label{def:dist_weighted_WMA}
    The distributive law $\lambda$ from   $\ltsfuncsemiring{\_}{\nat}$ and  $\wmealyfunc{\_}$ to $\suptlflagfunc{\_}{\nat}$ is given by:
\begin{align*}
    \lambda_{X, Y}(T, \delta) &\defeq \big\{(b,m+n) \ \big| \ (\star, a, m)\in T \text{ and }(\_, b,n)\in\delta(a)\big\}\\
&\cup \big\{ (x, y,  m+n) \ \big| \ (x, a, m)\in T,\, (y,\_,n)\in \delta(a)\big\}.
\end{align*}
\end{definition}

With the semantics of the coalgebraic product $c\otimes_{\lambda} d$ induced by $\lambda$ presented in~\cref{def:sem_weight}, 
the coalgebraic product $c\otimes_{\lambda} d$ induced by $\lambda$ satisfies the correctness criterion (\cref{thm:correctness_product}), and thus it is correct: 
see~\cref{sec:proof_weight_WMA} for the proof. 
\begin{proposition}
    The product with the modality $\tau_{S\otimes R}$ defined in~\cref{def:sem_weight}  is correct with respect to the query $q$ given in~\cref{def:query_wts_WMAs}. 
\end{proposition}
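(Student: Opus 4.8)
The plan is to apply the correctness criterion (\cref{thm:correctness_product}) with the distributive law $\lambda$ of \cref{def:dist_weighted_WMA}, the semantic structures $\big((\powersetfunc(A^{+}\times\nat),\subseteq),\modsys{\tau}\big)$ and $\big((\powersetfunc(A^{+}\times\nat),\subseteq),\modspec{\tau}\big)$ for weighted transition systems and WMMs, and the product semantic structure $\big((\natinf,\leq_{o}),\tau_{S\otimes R}\big)$ of \cref{def:sem_weight}; so it remains to verify the three hypotheses of the theorem for the query $q(T,L)\defeq\min\{m+n\mid (w,m)\in T,\ (w,n)\in L\}$. The first two are quick. Least elements: $q(\emptyset,\emptyset)=\min\emptyset=\infty$, which is the least element of $(\natinf,\leq_{o})$. $\omega$-continuity: in the two source $\omega$-cpos joins are unions, and in $(\natinf,\leq_{o})$ the join of a chain is its ordinary infimum. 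Given increasing chains $(T_{i})_{i}$ and $(L_{j})_{j}$, monotonicity of $q$ (larger arguments give a smaller $\min$) yields $q(\bigcup_{i}T_{i},\bigcup_{j}L_{j})\le q(T_{k},L_{k})$ for every $k$, hence $\le\inf_{k}q(T_{k},L_{k})$; conversely, if $q(\bigcup_{i}T_{i},\bigcup_{j}L_{j})=v<\infty$, a witnessing pair $(w,m),(w,n)$ with $m+n=v$ already lies in some $T_{k}$ and some $L_{k'}$, hence in $T_{\ell}$ and $L_{\ell}$ with $\ell=\max(k,k')$, so $q(T_{\ell},L_{\ell})\le v$; the case $v=\infty$ is immediate. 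Here it is used that minima over nonempty subsets of $\nat$ are attained.

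The substance is the commuting diagram \cref{eq:coherence_evaluation}, which I would check by unfolding both composites on a generic $(T,\delta)\in\ltsfuncsemiring{\Omega}{\nat}\times\wmealyfunc{\Omega}$ with $\Omega\defeq\powersetfunc(A^{+}\times\nat)$. Along $q\circ(\modsys{\tau}\times\modspec{\tau})$ one gets $\min\{P+P'\mid u\in A^{+},\ (u,P)\in\modsys{\tau}(T),\ (u,P')\in\modspec{\tau}(\delta)\}$; splitting on whether the trace $u$ is a single letter $a$ (forced on the system side by a $\star$-transition and on the WMM side by a $\top$-output) or $u=a\cdot v$ with $v\in A^{+}$ (forced by a transition into some $T_{2}\in\Omega$ resp.\ some $L_{2}\in\Omega$, with the Boolean output ignored), this equals
\[
\min\Bigl(\{m+n \mid (\star,a,m)\in T,\ (\_,\top,n)\in\delta(a)\}\ \cup\ \{m+n+p+p' \mid (T_{2},a,m)\in T,\ (L_{2},\_,n)\in\delta(a),\ (v,p)\in T_{2},\ (v,p')\in L_{2}\}\Bigr).
\]
Along the other composite, $\lambda_{\Omega,\Omega}(T,\delta)$ produces Boolean-tagged pairs $(b,m+n)$ from $\star$-transitions and state-tagged pairs $\big((T_{2},L_{2}),m+n\big)$ from ordinary transitions; $F_{S\otimes R}(q)$ replaces each state tag $(T_{2},L_{2})$ by $q(T_{2},L_{2})\in\natinf$; and $\tau_{S\otimes R}$ returns $\min\big(\{m+n\mid \text{Boolean tag}=\top\}\cup\{(m+n)+q(T_{2},L_{2})\}\big)$. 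Since $q(T_{2},L_{2})=\min\{p+p'\mid (v,p)\in T_{2},\ (v,p')\in L_{2}\}$ and $k+\min S=\min(k+S)$ in $\natinf$, this coincides term-for-term with the displayed expression, so the diagram commutes and \cref{thm:correctness_product} yields the claim.

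The only delicate part is the bookkeeping in the diagram: keeping the $\star$-versus-state and $\top$-versus-ignored-flag case splits aligned on the two sides, noticing that the ``same trace $v$'' synchronization appearing on the $q\circ(\modsys{\tau}\times\modspec{\tau})$ side is precisely what the inner $q(T_{2},L_{2})$ records on the other side, and being careful with $\natinf$-arithmetic ($k+\infty=\infty$, $\min\emptyset=\infty$) so that empty-index cases contribute $\infty$ harmlessly. This is essentially the computation already done for weighted transition systems and NFAs in \cref{sec:weighted_programming}, with the extra WMM penalty $n$ carried additively through every term; I do not expect any genuinely new obstacle.
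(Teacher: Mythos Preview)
Your proposal is correct and follows essentially the same approach as the paper's own proof: verify the three hypotheses of \cref{thm:correctness_product} by checking preservation of least elements, $\omega$-continuity via the witness-in-some-finite-index argument with $\ell=\max(k,k')$, and the commuting diagram by unfolding both composites and matching the $\star$/$\top$ case against the state-tagged case. Your explicit mention of the identity $k+\min S=\min(k+S)$ in $\natinf$ and of attainment of minima over nonempty subsets of $\nat$ makes transparent a step the paper leaves implicit, but otherwise the arguments coincide.
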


\section{Related Work}
\label{sec:relatedwork}

Below we discuss related work on (i) unifying framework for verification, and (ii) temporal inference in weighted programs. There is a wide variety of results and literature on probabilistic verification, which we do not cover here beyond what was mentioned in the introduction: our main distinguishing feature is the generality of our approach.

\paragraph{Unifying frameworks.}
The theory of \emph{coalgebras} brings a generic representation of transition systems. The techniques that we use fit into a tradition of coalgebraic definitions of trace semantics and determinization (e.g.,~\cite{HasuoJS07,DBLP:journals/corr/abs-1302-1046,Jacobs0S15,DBLP:journals/tocl/BonsangueMS13,KlinR16,RotJL21})---the idea to use distributive laws with three different functors is new, and is precisely the starting point of product constructions in the context of verification. We refer to~\cite{Jacobs16} for an overview of the scope and applications of coalgebra, and restrict our attention to general coalgebraic frameworks related to verification.

C\^irstea proposes a coalgebraic approach to linear-time quantitative behaviour, including $\omega$-regular property~\cite{Cirstea14,Cirstea17a}. The current paper instead focuses on the algorithmic perspective provided by product constructions, as a practical technique for solving inference queries; supporting $\omega$-regular properties is future work.

There are several lines of work based on coalgebras for verification on \emph{minimization}~\cite{lmcs/WissmannDMS19,JacobsW23}, \emph{liveness checking}~\cite{UrabeHH17}, \emph{$\omega$-regular automata}~\cite{UrabeH18,CianciaV19},~\emph{progress measures}~\cite{HasuoSC16},~\emph{CTL}~\cite{KojimaCMH24} and \emph{PDR}~\cite{DBLP:conf/cav/KoriUKSH22,KoriABBGH23}, using lattice theory.
Product constructions and their correctness at a coalgebraic level are an original contribution, to the best of our knowledge. 

The work~\cite{AguirreKK22} gives a generic compositional framework for weakest preconditions via predicate transformers~\cite{Dijkstra75}. It is based on fibrations ~\cite{Jacobs2001}; we use fibrations only implicitly (see~\cref{sec:lifting}). The recent work~\cite{KoriWRK24} shows a sufficient condition for compositionality of a general form of bisimilarity (so-called \emph{codensity bisimilarity}) with respect to abstract parallel composition operators, defined as distributive laws of products over a behaviour functor $F$. Both the technical development and the examples differ significantly from the current paper. In particular, we do not make use of the codensity lifting but use a least fixed point semantics.

\paragraph{Temporal inference in weighted programs.}

Concerning weighted systems, the work~\cite{BaierKKW14} gives a theory of weighted inference with temporal logic and monitored weight assertions. They cover a variety of systems including MDPs, and they present an extended LTL with new operators that introduce constraints on accumulated weights along traces. In this paper, we deal with an inference based on the coalgebraic product construction with weighted systems and weighted requirements, where the weights induced by systems and requirements interact with each other.

The work~\cite{DrosteR16} presents a general theory of \emph{weighted linear dynamic logic (weighted LDL)} and prove that the equivalence problem for weighted LDL formulas is decidable by constructing equivalent \emph{weighted B\"{u}chi automata}~\cite{DrosteM12}. Conducting inference with \emph{weighted $\omega$-regular specifications}~\cite{DrosteM12} based on product constructions indeed can be an interesting future work. 

The work~\cite{DodaroFG22} studies weighted traces for \emph{$\mathrm{LTL}_f$}~\cite{GiacomoV13}, which is a temporal logic on finite traces. They introduce a generic weighted inference framework based on \emph{valuation structures} that induce weighted structures.

\section{Conclusion and Future work}
\label{sec:conc_and_future}

We introduced a general coalgebraic framework for temporal inference. The key notion that of a \emph{product construction}, defined via distributive laws, and its correctness. A correct product construction essentially allows us to reduce inference over the output traces of a program to inference on the output distribution, for which we can use efficient inference methods. 
Our framework is motivated by probabilistic programming, but due to the coalgebraic underpinnings it applies to a broad variety of systems---in particular, the framework led us to define a correct product construction for an original inference problem: that of weighted regular safety requirements in weighted programming.

\paragraph{Future work} 
Our current method works at the level of the \emph{operational semantics} of probabilistic programs. The product with the requirement yields a (probabilistic) transition system. As discussed in the introduction, this provides a stepping stone to using off-the-shelf inference methods for probabilistic programs. What is still missing is a \emph{syntactic} presentation of product programs. We envisage that such a presentation may follow by combining our framework with that of mathematical operational semantics~\cite{TuriP97,Klin11}, which uses distributive laws to present rule formats for operational semantics and is based on (co)algebras. An intermediate step is to work at the level of probabilistic control flow graphs (e.g.,~\cite{ChatterjeeFG16,TakisakaOUH18,AgrawalC018,HasuoOESCK24}). 

It would also be interesting to see whether we can get a canonical semantics for the product from two final coalgebras for a system and a requirement. 
This is challenging for the existing approaches, including the bialgebraic approach~\cite{TuriP97,Jacobs06,Klin07}, because we have to build our framework in Kleisli categories for the final coalgebra semantics~\cite{HasuoJS07}, and the inference problem requires three different behaviour functors. 

Our current work does not support continuous distributions, such as those supported in Church~\cite{GoodmanMRBT08} and Anglican~\cite{TolpinMW15}, and seeking an extension of our framework for continuous distributions is another future direction.

\bibliographystyle{ACM-Reference-Format}
\bibliography{mybib}

\fi
\iffull

\appendix

\section{Omitted Definitions}
\label{sec:omitted_definitions}
\begin{definition}[$\omega$-cpo]
An \emph{$\omega$-cpo} is a partially ordered set $(\Omega, \leq)$ such that (i) there is a least element $\bot$, and (ii) for each sequence $(t_i)_{i\in \nat}$ such that $t_i \leq t_{i+1}$ for $i\in \nat$ (also called \emph{$\omega$-chain}), a supremum $\bigvee_{i\in \nat} t_i$ exists in $\Omega$.
\end{definition}

\begin{definition}[$\omega$-continuous]
Let $(\Omega, \leq)$ be an $\omega$-cpo. A function $f: \Omega\rightarrow \Omega$ is \emph{$\omega$-continuous} if for each $\omega$-chain $(t_i)_{i\in \nat}$, the equality $\bigvee_{i\in \nat}(f(t_i)) = f\big( \bigvee_{i\in \nat} t_i\big)$ holds. 
\end{definition}

Note that every $\omega$-continuous function $f$ is monotone, that is, for any $t, t'$, if $t\leq t'$, then $f(t) \leq f(t')$.

\section{Omitted Propositions and Proofs}
\label{sec:omitted_proofs}
\subsection{Proof of~\cref{thm:correctness_product}}
\label{proof:thm_correctness_product}
\begin{proof}
        Let $c\colon X\rightarrow F_S(X)$ and $d\colon Y\rightarrow F_R(Y)$ be coalgebras. 
        Since $\predtran{\tau_{S\otimes R}, c\otimes_{\lambda} d},\predtran{\modsys{\tau}, c},\predtran{\modspec{\tau}, d}$ are $\omega$-continuous, it is enough to prove that
        \begin{equation*}
            \bigvee_{i\in \nat} (\predtran{\modprod{\tau}, c\otimes_{\lambda} d})^{i}(\bot) = q\circ \big(\bigvee_{j\in \nat} (\predtran{\modsys{\tau}, c})^{j}(\bot)\big)\times \big(\bigvee_{k\in \nat} (\predtran{\modspec{\tau}, d})^{k}(\bot)\big).
        \end{equation*} The query $q$ is $\omega$-continuous, therefore
        \begin{equation*}
            q\circ \big(\bigvee_{j\in \nat} (\predtran{\modsys{\tau}, c})^{j}(\bot)\big)\times \big(\bigvee_{k\in \nat} (\predtran{\modspec{\tau}, d})^{k}(\bot)\big) = \bigvee_{n\in \nat} q\circ \big((\predtran{\modsys{\tau}, c})^{n}(\bot)\big)\times \big((\predtran{\modspec{\tau}, d})^{n}(\bot)\big).
        \end{equation*}
        Thus, it is sufficient to prove the following equality for $n\in \nat$ by induction:
        \begin{equation*}
            (\predtran{\modprod{\tau}, c\otimes_{\lambda} d})^{n}(\bot) = q\circ \big((\predtran{\modsys{\tau}, c})^{n}(\bot)\big)\times \big((\predtran{\modspec{\tau}, d})^{n}(\bot)\big).
        \end{equation*}
        For the base case $(n=0)$, the equality $\bot = q\circ \bot\times  \bot$ holds, since $q$ preserves the least elements. 
        For the step case $(n=k+1)$, suppose that the equality \[(\predtran{\modprod{\tau}, c\otimes_{\lambda} d})^{k}(\bot) = q\circ \big((\predtran{\modsys{\tau}, c})^{k}(\bot)\big)\times \big((\predtran{\modspec{\tau}, d})^{k}(\bot)\big)\] holds. Let \[u_{S\otimes R} \defeq (\predtran{\modprod{\tau}, c\otimes_{\lambda} d})^{k}(\bot),\quad u_S \defeq (\predtran{\modsys{\tau}, c})^{k}(\bot),\quad\text{ and }u_R \defeq (\predtran{\modspec{\tau}, d})^{k}(\bot).\] Then, the following diagram commutes by the naturality of $\lambda$, the induction hypothesis, and Eq.~\cref{eq:coherence_evaluation}:
        \begin{equation*}
             \begin{tikzcd}
                 \fsys{F}(X)\times \fspec{F}(Y)  \arrow[rrr,"\fsys{F}(u_S)\times \fspec{F}(u_R)"] \arrow[d, "\lambda_{X, Y}"]
                 &&&\fsys{F}(\tvdsys{\Omega})\times \fspec{F}(\tvdspec{\Omega})  \arrow[rr,"\modsys{\tau}\times \modspec{\tau}"]  \arrow[d, "\lambda_{\tvdsys{\Omega}, \tvdspec{\Omega}}"]
                 &&\tvdsys{\Omega}\times \tvdspec{\Omega}\arrow[dd, "q"]\\
                 F_{S\otimes R}(X\times Y) \arrow[rrr,"F_{S\otimes R}(u_S\times u_R)"]
                 \arrow[rrrd, "F_{S\otimes R}(u_{S\otimes R})"]
                 &&& F_{S\otimes R}(\tvdsys{\Omega}\times\tvdspec{\Omega}) \arrow[d,"F_{S\otimes R}(q)"]
                 &&\\
                 &&& F_{S\otimes R}(\Omega_{S\otimes R}) 
                 \arrow[rr, "\tau_{S\otimes R}"]
                 && \Omega_{S\otimes R}.
            \end{tikzcd}
        \end{equation*}
        By definition of $\predtran{\modprod{\tau}, c\otimes_{\lambda} d}$, $\predtran{\modsys{\tau}, c}$, and $\predtran{\modspec{\tau}, d}$, we conclude that 
        \begin{equation*}
            (\predtran{\modprod{\tau}, c\otimes_{\lambda} d})^{k+1}(\bot) = q\circ \big((\predtran{\modsys{\tau}, c})^{k+1}(\bot)\big)\times \big((\predtran{\modspec{\tau}, d})^{k+1}(\bot)\big). \qedhere
        \end{equation*}
        
\end{proof}

\subsection{MCs and DFAs}
\begin{lemma}\label{prop:omega_cpo_e_mc_dfas}
The query $q\colon\fsublang\times \nempfaclang\rightarrow \probinterval$ for MCs and DFAs is $\omega$-continuous. 
\end{lemma}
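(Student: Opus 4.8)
The plan is to read off $\omega$-continuity directly from how suprema are computed in the three $\omega$-cpos involved. On $\Omega_S=\fsublang$ the order is pointwise and the supremum of an $\omega$-chain $(\nu_i)_{i\in\nat}$ is taken pointwise, i.e.\ $\big(\bigvee_i\nu_i\big)(w)=\sup_i\nu_i(w)=\lim_i\nu_i(w)$ for every $w\in A^{+}$ (the limit exists because $(\nu_i(w))_i$ is nondecreasing and bounded by $1$); on $\Omega_R=\nempfaclang$ the order is $\subseteq$ and the supremum of an $\omega$-chain $(L_j)_{j\in\nat}$ is $\bigcup_jL_j$; and $\probinterval$ carries its usual order. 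Unfolding the $\omega$-continuity requirement from \cref{thm:correctness_product}, I must show that for all such chains
\[
  \sum_{w\in\bigcup_jL_j}\Big(\sup_i\nu_i(w)\Big)\;=\;\sup_k\sum_{w\in L_k}\nu_k(w).
\]
One inequality is free: $q$ is monotone in both arguments (all summands are nonnegative, so $\nu\preceq\nu'$ and $L\subseteq L'$ give $q(\nu,L)\le q(\nu',L')$), hence every $q(\nu_k,L_k)$ is bounded above by the left-hand side, and so is their supremum.

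For the reverse inequality I would invoke the monotone convergence theorem, as the main text suggests. Define $f_k\colon A^{+}\to\probinterval$ by $f_k(w)\defeq\nu_k(w)\cdot\dirac{w\in L_k}$. Since both $(\nu_k)_k$ and $(L_k)_k$ are nondecreasing, $(f_k)_k$ is a pointwise nondecreasing sequence of nonnegative functions whose pointwise supremum is $f_\infty(w)=\big(\sup_i\nu_i(w)\big)\cdot\dirac{w\in\bigcup_jL_j}$. Monotone convergence for series over the index set $A^{+}$ (equivalently, Tonelli for the counting measure) then yields $\sum_w f_k(w)\uparrow\sum_w f_\infty(w)$, which is exactly the displayed equality. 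The same interchange also shows $\bigvee_i\nu_i$ has total mass $\sup_i\sum_w\nu_i(w)\le 1$ and countable support, so it is genuinely a subdistribution and the statement is well posed.

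The only subtlety — and really the only step that is not bookkeeping — is the diagonal shape of the $\omega$-continuity condition: the left-hand side involves two independent chains while the right-hand side has a single diagonal index $k$. If one prefers to avoid citing monotone convergence, one argues elementarily: approximate $\sum_{w\in\bigcup_jL_j}\sup_i\nu_i(w)$ by its partial sums over finite sets $S\subseteq\bigcup_jL_j$; for a given finite $S$ and $\varepsilon>0$ one can pick $k$ large enough that $S\subseteq L_k$ and $\nu_k(w)\ge\sup_i\nu_i(w)-\varepsilon/|S|$ for all $w\in S$ simultaneously — this is where finiteness of $S$ is essential — so that $q(\nu_k,L_k)\ge\sum_{w\in S}\sup_i\nu_i(w)-\varepsilon$; then take the supremum over $k$, then over $S$, then let $\varepsilon\to 0$. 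Either way, the content is the legitimacy of commuting the two suprema, which holds precisely because only finitely many words need to be controlled at once. For completeness I would also record the trivial facts $q(\bot,\bot)=q(0,\emptyset)=0$, so that this lemma combines with the correctness criterion as intended.
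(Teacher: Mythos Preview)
Your argument is correct and, in spirit, the same as the paper's: both rely on monotone convergence for series and the monotonicity of $q$ in each argument. The paper factors the left-hand side as a double limit $\lim_i\lim_j\nu_i(T_j)$ and then invokes the diagonal lemma for monotone double sequences to collapse it to $\lim_k\nu_k(T_k)$, whereas you go straight to the diagonal by applying the monotone convergence theorem once to the pointwise-increasing sequence $f_k(w)=\nu_k(w)\cdot\dirac{w\in L_k}$; this is a slightly cleaner packaging of the same idea, and your elementary finite-approximation alternative and the side remarks (well-definedness of $\bigvee_i\nu_i$, $q(\bot,\bot)=0$) are welcome additions the paper omits.
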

\begin{proof}
Let  $(\nu_i)_{i\in \nat}$ and $(T_j)_{j\in \nat}$ be $\omega$-chains in $\fsublang$ and $\nempfaclang$. Then, we need to prove the following equality: 
\begin{equation*}
   q\Big(\big(\bigvee_{i\in \nat} \nu_i\big), \big(\bigvee_{j\in \nat} T_j\big)\Big) \defeq  (\bigvee_{i\in \nat} \nu_i)\big(\bigvee_{j\in \nat} T_j\big) = \bigvee_{k\in \nat}q(\nu_k, T_k) \defeq \bigvee_{k\in \nat} \nu_{k}(T_k). 
\end{equation*}
By definition, 
\begin{align*}
    &(\bigvee_{i\in \nat} \nu_i)\big(\bigvee_{j\in \nat} T_j\big) = \lim_{i\rightarrow \infty } \nu_i\big( \bigcup_{j\in \nat} T_j \big), &\bigvee_{k\in \nat} \nu_{k}(T_k) = \lim_{k\rightarrow \infty} \nu_k(T_k).
\end{align*}
By mototone convergence theorem, 
\begin{equation*}
   \lim_{i\rightarrow \infty } \nu_i\big( \bigcup_{j\in \nat} T_j \big) = \lim_{i\rightarrow \infty } \lim_{j\rightarrow \infty }\nu_i(T_j). 
\end{equation*}
Since $\big(\nu_{i}(T_j)\big)_{(i, j)\in \nat\times \nat}$ is a monotone double sequence,  we conclude that 
\begin{equation*}
   (\bigvee_{i\in \nat} \nu_i)\big(\bigvee_{j\in \nat} T_j\big) = \lim_{i\rightarrow \infty } \lim_{j\rightarrow \infty }\nu_i(T_j) = \lim_{k\rightarrow \infty}\nu_{k}(T_k)= \bigvee_{k\in \nat}\nu_{k}(T_k). 
\end{equation*}
\end{proof}

\begin{lemma}
\label{lem:commute_mc_dfas}
Let $S = \big((\fsublang, \preceq), \modsys{\tau}\big)$, $R = \big((\faclang, \subseteq), \modspec{\tau}\big)$ be semantic structures given in~\cref{ex:tau-semanticsMC,ex:tau-semanticsDFA}, and $q$ be the query given in~\cref{ex:probinf_coalinf}. 
The following diagram commutes: 
\begin{equation*}
\adjustbox{scale=0.9,center}{%
\begin{tikzcd}
\mcfunc{\fsublang}\times \dfautomata{\nempfaclang}\arrow[rr,"\modsys{\tau}\times \modspec{\tau}"] \arrow[d,"\lambda_{\fsublang,\, \nempfaclang}"]
            & 
            &\fsublang\times\nempfaclang\arrow[d,"q"]\\
\mcflagfunc{\fsublang\times\nempfaclang }\arrow[r,"\mcflagfunc{q}"]
            & \mcflagfunc{\probinterval} \arrow[r, "\modprod{\tau}"]
            & \probinterval
\end{tikzcd}
}
\end{equation*}

\end{lemma}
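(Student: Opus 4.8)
The plan is to chase an arbitrary element $(\nu, a, \delta) \in \mcfunc{\fsublang} \times \dfautomata{\nempfaclang}$ around the square and check that both routes yield the same subset of... wait, the same value in $\probinterval$, by evaluating everything at a concrete word. Recall the data: $\nu \in \distributionfunc(\fsublang + \{\star\})$ is a distribution whose support elements are either the symbol $\star$ or themselves subdistributions $\mu \in \fsublang$; $a \in A$; and $\delta \colon A \to \nempfaclang \times \settorf$, so $\delta(a) = (T, b)$ with $T \in \nempfaclang$ and $b \in \settorf$. First I would compute the top-then-right composite. Applying $\modsys{\tau} \times \modspec{\tau}$ gives the pair $(\modsys{\tau}(\nu, a), \modspec{\tau}(\delta))$, where by definition $\modsys{\tau}(\nu,a) \in \fsublang$ sends $a \cdot w' \mapsto \sum_{\mu \in \supp(\nu) \cap \fsublang} \nu(\mu)\cdot \mu(w')$ and $a \mapsto \nu(\star)$, and $\modspec{\tau}(\delta) = \{a \mid b = \top\} \cup \{a \cdot w \mid w \in T\}$. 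Then $q$ of this pair is $\sum_{w \in \modspec{\tau}(\delta)} \modsys{\tau}(\nu,a)(w)$, which splits into the single-letter contribution $\nu(\star)\cdot\dirac{b = \top}$ plus $\sum_{w' \in T} \sum_{\mu \in \supp(\nu) \cap \fsublang} \nu(\mu)\cdot \mu(w')$.

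Next I would compute the left-then-bottom composite. By definition of the distributive law $\lambda_{\fsublang, \nempfaclang}$ (the instance of~\cref{ex:dis_mc_dfa}), we get $\lambda(\nu, a, \delta) \in \mcflagfunc{\fsublang \times \nempfaclang}$ which assigns mass $\nu(\mu)$ to the pair $(\mu, T)$ for each $\mu \in \supp(\nu) \cap \fsublang$ (using $\delta(a) = (T,\_)$), and assigns $\nu(\star)$ to the flag $b \in \settorf$ with $\delta(a) = (\_, b)$. Applying $\mcflagfunc{q} = \distributionfunc(q + \settorf)$ pushes this forward: the mass on $(\mu, T)$ lands on the value $q(\mu, T) = \sum_{w' \in T} \mu(w') \in \probinterval$, and the flag $b$ is preserved. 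Finally $\modprod{\tau}$ of the resulting element of $\mcflagfunc{\probinterval}$ is $[\text{mass on }\top] + \sum_{p \in \probinterval} [\text{mass on }p]\cdot p$; the first summand is $\nu(\star)\cdot\dirac{b=\top}$ and the second is $\sum_{\mu \in \supp(\nu)\cap\fsublang} \nu(\mu) \cdot q(\mu,T) = \sum_{\mu} \nu(\mu) \sum_{w' \in T}\mu(w')$. This coincides term-by-term with the top-then-right value.

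The only genuine subtlety — and hence the main obstacle — is bookkeeping around the pushforward $\distributionfunc(q + \settorf)$ when $q$ is not injective: several distinct pairs $(\mu, T)$ (here all $T$ are equal, namely the component of $\delta(a)$, so really several $\mu$'s) may map to the same probability $p = q(\mu, T)$, so the coefficient of $p$ in $\distributionfunc(q+\settorf)(\lambda(\cdots))$ is $\sum_{\mu : q(\mu, T) = p} \nu(\mu)$, and one must check that $\sum_p \big(\sum_{\mu : q(\mu,T)=p} \nu(\mu)\big)\cdot p$ really does rearrange to $\sum_\mu \nu(\mu)\, q(\mu, T)$. This is an absolutely-convergent rearrangement (all terms nonnegative, $\nu$ a subprobability distribution, $q(\mu,T) \le 1$), so Fubini–Tonelli / unconditional convergence justifies the regrouping; I would note this explicitly rather than silently commute the sums. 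Apart from that, the proof is a direct unfolding of the four definitions, so I would present it compactly as a single chain of equalities with the rearrangement step flagged.
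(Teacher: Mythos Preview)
Your proposal is correct and is essentially the same diagram chase as the paper's own proof, which unfolds both composites to the common form $\nu(\star)\cdot\dirac{b=\top} + \sum_{\nu' \in \supp(\nu)} \nu(\nu')\cdot \nu'(T)$. Your explicit remark on the rearrangement needed because the pushforward $\distributionfunc(q+\settorf)$ may collapse distinct $(\mu,T)$ to the same $p$ is a detail the paper passes over silently in its step $\sum_{r\in\probinterval}(\ldots)(r)\cdot r = \sum_{(\nu',T')}\lambda(\ldots)(\nu',T')\cdot \nu'(T')$, so if anything you are slightly more careful.
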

\begin{proof} Let $(\nu, a, \delta) \in \mcfunc{\fsublang}\times \dfautomata{\nempfaclang}$, and $(T, b) \defeq \delta(a)$. We prove the statement by the following direct calculations:
    \begin{align*}
        &\big(q \circ \modsys{\tau}\times \modspec{\tau}\big)(\nu, a, \delta)\\
        &= \sum_{a'\in \modspec{\tau}(\delta) }\modsys{\tau}(\nu, a)(a') + \sum_{a'\cdot w \in  \modspec{\tau}(\delta)}\modsys{\tau}(\nu, a)(a'\cdot w)   \\
        & = \nu(\star) \cdot \dirac{b = \top} + \sum_{\nu'\in \supp(\nu)} \nu(\nu')\cdot \nu'(T), \text{ and }\\
        &\big(\modprod{\tau}\circ \mcflagfunc{q}\circ \lambda_{\fsublang,\, \nempfaclang}\big)(\nu, a, \delta)\\
        &= \big(\mcflagfunc{q}\circ \lambda_{\fsublang,\, \nempfaclang}\big)(\nu, a, \delta)(\top) + \sum_{r\in \probinterval} \Big(\big(\mcflagfunc{q}\circ \lambda_{\fsublang,\, \nempfaclang}\big)(\nu, a, \delta)(r)\Big)\cdot r\\
        &= \nu(\star) \cdot \dirac{b= \top}  +  \sum_{r\in \probinterval}\big( \mcflagfunc{q}\circ \lambda_{\fsublang,\, \nempfaclang}(\nu, a, \delta)(r)\big)\cdot r\\
        &= \nu(\star) \cdot \dirac{b = \top}  +  \sum_{(\nu', T')\in \fsublang\times \nempfaclang} \big(\lambda_{\fsublang,\, \nempfaclang}(\nu, a, \delta)(\nu', T')\big)\cdot \nu'(T')\\
        &= \nu(\star) \cdot \dirac{b = \top}  +  \sum_{\nu'\in \supp(\nu)} \nu(\nu') \cdot \nu'(T). 
    \end{align*}
\end{proof}

\subsection{Markov Reward Models}
\label{subsec:MRM}
\begin{definition}
The semantic structure for MCs with rewards is given as follows: 
    \begin{itemize}
        \item a semantic domain $\mathbf{\Omega} \defeq (\subdistributionfunc(A^{+}\times \nat), \preceq)$, ordered pointwise,
        \item a modality $\tau_S: \distributionfunc\big(\subdistributionfunc(A^{+}\times \nat) \times \nat \times A+\{\star\}\big) \times \nat \times A \rightarrow \subdistributionfunc(A^{+}\times \nat)$ is given by 
        \begin{align*}
            \tau_S(\nu, n, a)(w, m) &\defeq \begin{cases}
                \nu(\star) &\text{ if $w = a$ and $n = m$,}\\
                \sum_{\nu'\in \supp(\nu)}\nu(\nu')\cdot \nu'(w', m-n)&\text{ if $w = a\cdot w'$ and $n \leq m$,}\\
                0 &\text{ otherwise, }
            \end{cases}
        \end{align*}
    \end{itemize}
\end{definition}

\begin{definition}
    The semantic structure for the product for MCs with rewards is given as follows: 
    \begin{itemize}
        \item a semantic domain  $\mathbf{\Omega} \defeq (\probinterval\times \nonegreal, \leq\times \leq)$, 
        \item a modality $\tau_{S\otimes R}: \distributionfunc(\probinterval\times \nonegreal+\settorf) \times \nat\rightarrow \probinterval\times \nonegreal$ is given by $\tau_{S\otimes R}(\nu, n) \defeq (p, r)$, where
        \begin{align*}
            p &\defeq \nu(\top) + \sum_{(p', r')\in \supp(\nu)} \nu(p', r') \cdot p',\\
            r &\defeq n \cdot \nu(\top) + \sum_{\substack{(p', r')\in \supp(\nu)}} \nu(p', r') \cdot \big(p'\cdot n + r'\big).
        \end{align*}
    \end{itemize}
\end{definition}

\begin{proposition}
The product for MCs with rewards is correct.
\end{proposition}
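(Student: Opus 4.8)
The plan is to apply the correctness criterion (\cref{thm:correctness_product}) with the distributive law $\lambda$ from \cref{def:dist_rewards}, the query $q$ from \cref{def:query_rewards}, and the semantic structures just defined for Markov reward models and for the product. Concretely, I would verify the three hypotheses of \cref{thm:correctness_product}: (i) $q(\bot,\bot) = (0,0) = \bot$ in $\probinterval\times\nonegreal$, which is immediate since the defining sums in \cref{def:query_rewards} are empty when $\nu$ is the zero subdistribution; (ii) $q$ is $\omega$-continuous, for which the argument mirrors \cref{prop:omega_cpo_e_mc_dfas}: both $p$ and $r$ are, for fixed $L$, countable sums of terms $\nu(w,n)$ (resp.\ $n\cdot\nu(w,n)$) over $w\in L$, $n\in\nat$, so along an $\omega$-chain of subdistributions one invokes the monotone convergence theorem twice (once over the summation index, once over the chain index) and uses that a monotone double sequence has equal iterated and diagonal limits; (iii) the commutation of the pentagon diagram \eqref{eq:coherence_evaluation} instantiated with $\fsys{F}(\tvdsys{\Omega}) = \mcfunc[\nat\times A]{\subdistributionfunc(A^{+}\times\nat)}$, $\fspec{F}(\tvdspec{\Omega}) = \dfautomata{\nempfaclang}$, and $F_{S\otimes R}(\tvdsys{\Omega}\times\tvdspec{\Omega}) = \distributionfunc(\subdistributionfunc(A^{+}\times\nat)\times\nempfaclang + \settorf)\times\nat$.

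The bulk of the work is step (iii): chasing an element $(\nu_1, m, a, \delta)$ around both paths of the square. Along the top-right path, one first applies $\modsys{\tau}\times\modspec{\tau}$ to obtain a pair $(\sigma, L)\in\subdistributionfunc(A^{+}\times\nat)\times\nempfaclang$, where $\sigma$ prepends the letter $a$ and shifts accumulated rewards by $m$, and $L$ is the one-step-unfolded language of $\delta$; then $q(\sigma,L) = (p,r)$ is read off \cref{def:query_rewards}. Along the left-bottom path, one applies $\lambda_{\tvdsys{\Omega},\tvdspec{\Omega}}$ (which, writing $(T_\delta, b)\defeq\delta(a)$, routes the mass of $\nu_1$ over $X$-states into the pair component with DFA-state $T_\delta$ and the mass $\nu_1(\star)$ into the flag $b$, keeping the weight $m$), then $F_{S\otimes R}(q)$ pushes forward along $q$, and finally $\tau_{S\otimes R}$ from \cref{subsec:MRM} aggregates. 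The identity to check then decomposes into two scalar equations: the probability component reduces to $\nu_1(\star)\cdot\dirac{b=\top} + \sum_{\nu'\in\supp(\nu_1)\cap\subdistributionfunc(A^+\times\nat)}\nu_1(\nu')\cdot q(\nu',T_\delta)$ on both sides (this part is essentially \cref{lem:commute_mc_dfas}), and the reward component reduces to $m\cdot\big(\nu_1(\star)\dirac{b=\top}\big) + \sum_{\nu'}\nu_1(\nu')\cdot\big(m\cdot p' + r'\big)$ where $(p',r') = q(\nu',T_\delta)$ — matching exactly the $r$-clause of $\tau_{S\otimes R}$ in \cref{subsec:MRM}. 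The only subtlety is bookkeeping the reward shift: the modality $\tau_S$ records $w = a$ with reward exactly $m$ (contributing $m$ per unit of terminating mass) and otherwise shifts $m-n$; one checks the query then extracts $\sum_n n\cdot\sigma(\{(w,n)\mid w\in L\}) = m\cdot(\text{terminating-and-accepting mass}) + \sum_{\nu'}\nu_1(\nu')(m\cdot p'+r')$ by splitting the sum on whether the accepting trace has length one.

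I expect step (iii), and within it the reward component of the equation, to be the main obstacle — not because it is deep, but because it requires careful index manipulation: one must correctly track how a reward $m$ incurred on the first transition interacts with the expected-reward $r'$ of successors (it gets weighted by the successor's reachability probability $p'$, hence the $m\cdot p' + r'$ shape), and one must confirm that the two legs of the diagram agree on the partition of cases (single-letter accepting word vs.\ longer word, terminating vs.\ non-terminating transition). The probability component and the $\omega$-continuity of $q$ are routine adaptations of the already-established MC/DFA case. Once the diagram commutes, \cref{thm:correctness_product} yields $\lfpoint{c\otimes_{\lambda} d} = q\circ(\lfpoint{c}\times\lfpoint{d})$, i.e.\ correctness of the product for Markov reward models, as claimed.
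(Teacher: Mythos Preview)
Your proposal is correct and follows essentially the same route as the paper's proof: verify the three hypotheses of \cref{thm:correctness_product} (preservation of $\bot$, $\omega$-continuity via monotone convergence, and commutation of Eq.~\eqref{eq:coherence_evaluation}), with the diagram chase decomposing into a probability component that parallels \cref{lem:commute_mc_dfas} and a reward component that produces precisely the $m\cdot p' + r'$ shape of $\tau_{S\otimes R}$. The paper carries out the same direct calculation, and your identification of the reward-shift bookkeeping as the only delicate point is accurate.
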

\begin{proof}
    The query $q$ preserves the least elements trivially. Let $(\nu_i)_{i\in \nat}$ and $(L_j)_{j\in \nat}$ be $\omega$-chains on $\subdistributionfunc(A^{+}\times \nat)$ and $\nempfaclang$. We prove the following equality: 
    \begin{align*}
        \sum_{n\in \nat} (\lim_{i \rightarrow \infty} \nu_i)\Big(\big\{ (w, n) \mid w\in \cup_{j\in \nat} L_j\big\} \Big) = \lim_{k\rightarrow \infty} \sum_{n\in \nat} \nu_k\Big( \big\{ (w, n) \mid w\in L_k \big\}\Big).
    \end{align*}
    Since $\lim_{i\rightarrow \infty}\nu_i$ is sub-distribution, the following equation holds by the monotone convergence theorem: 
    \begin{align*}
        \sum_{n\in \nat} (\lim_{i \rightarrow \infty} \nu_i)\Big(\big\{ (w, n) \mid w\in \cup_{j\in \nat} L_j\big\} \Big)= \sum_{n\in \nat} \lim_{j\rightarrow \infty} (\lim_{i \rightarrow \infty} \nu_i)\Big(\big\{ (w, n) \mid w\in L_j\big\} \Big).
    \end{align*}
    By monotonicity, 
    \begin{align*}
       \sum_{n\in \nat} \lim_{j\rightarrow \infty} (\lim_{i \rightarrow \infty} \nu_i)\Big(\big\{ (w, n) \mid w\in L_j\big\} \Big) &= \lim_{j\rightarrow \infty} \sum_{n\in \nat} (\lim_{i\rightarrow \infty} \nu_i)\Big(\big\{ (w, n) \mid w\in L_j\big\} \Big)\\
       &= \lim_{j\rightarrow \infty} \lim_{i\rightarrow \infty} \nu_i\Big(\big\{ (w, n) \mid w\in L_j,\, n\in \nat\big\} \Big)\\
       &= \lim_{k\rightarrow \infty} \nu_k\Big(\big\{ (w, n) \mid w\in L_k,\, n\in \nat\big\} \Big)\\
       &= \sum_{n\in \nat}\lim_{k\rightarrow \infty} \nu_k\Big(\big\{ (w, n) \mid w\in L_k\big\} \Big). 
    \end{align*}
    Next, we prove the following equality: 
    \begin{align*}
        \sum_{n\in \nat} n\cdot (\lim_{i \rightarrow \infty} \nu_i)\Big(\big\{ (w, n) \mid w\in \cup_{j\in \nat} L_j\big\} \Big) = \lim_{k\rightarrow \infty} \sum_{n\in \nat} n\cdot \nu_k\Big( \big\{ (w, n) \mid w\in L_k \big\}\Big).
    \end{align*}
    By monotone convergence theorem and monotonicity, 
    \begin{align*}
        \sum_{n\in \nat} n\cdot (\lim_{i \rightarrow \infty} \nu_i)\Big(\big\{ (w, n) \mid w\in \cup_{j\in \nat} L_j\big\} \Big) &= \sum_{n\in \nat} \lim_{j\rightarrow \infty} n\cdot (\lim_{i \rightarrow \infty} \nu_i)\Big(\big\{ (w, n) \mid w\in L_j\big\} \Big)\\
        &=  \lim_{j\rightarrow \infty} \sum_{n\in \nat} n\cdot (\lim_{i \rightarrow \infty} \nu_i)\Big(\big\{ (w, n) \mid w\in L_j\big\} \Big)\\
        &= \lim_{j\rightarrow \infty} \sum_{n\in \nat} n\cdot \sum_{w\in L_j} \lim_{i \rightarrow \infty} \nu_i(w, n)\\
        &= \lim_{j\rightarrow \infty} \lim_{i \rightarrow \infty} \sum_{n\in \nat} n\cdot \sum_{w\in L_j} \nu_i(w, n)\\
        &= \lim_{k\rightarrow \infty} \sum_{n\in \nat} n\cdot \sum_{w\in L_k} \nu_k(w, n).
    \end{align*}

    Next, we prove that the following diagram commutes: 
    \begin{equation*}
    \adjustbox{scale=0.9,center}{%
    \begin{tikzcd}
    \distributionfunc(\subdistributionfunc(A^{+}\times \nat) + \{\star\}) \times \nat \times A\times \dfautomata{\nempfaclang}\arrow[rr,"\modsys{\tau}\times \modspec{\tau}"] \arrow[d,"\lambda_{\subdistributionfunc(A^{+}\times \nat), \nempfaclang}"]
            & 
            &\fsublang\times\nempfaclang\arrow[d,"q"]\\
    \distributionfunc(\subdistributionfunc(A^{+}\times \nat)\times\nempfaclang + \settorf)\times \nat\arrow[r,"\distributionfunc(q + \settorf)\times \nat"]
            & \distributionfunc(\probinterval\times \nonegreal + \settorf)\times \nat \arrow[r, "\modprod{\tau}"]
            & \probinterval\times \nonegreal
    \end{tikzcd}
    }
    \end{equation*}
    For $(\nu, n, a, \delta)\in  \distributionfunc(\subdistributionfunc(A^{+}\times \nat) + \{\star\}) \times \nat \times A\times \dfautomata{\nempfaclang}$, $q\circ (\tau_S\times \tau_R)(\nu, n, a, \delta) =  (p_1, r_1)$, where 
    \begin{align*}
        p_1 &= \sum_{m\in \nat} \tau_S(\nu, n, a)\Big(\big\{ (w, m) \mid w\in \tau_R(\delta) \big\}\Big)\\
        &= \nu(\star)\cdot \dirac{\delta(a) = (\_, \top)} + \sum_{m\in \nat} \sum_{\nu'\in \supp(\nu)} \nu(\nu') \cdot \nu'\Big(\big\{ (w, m-n) \mid \delta(a) = (T, \_),\, w\in T\big\} \Big)\\
        &= \nu(\star)\cdot \dirac{\delta(a) = (\_, \top)} + \sum_{m\in \nat} \sum_{\nu'\in \supp(\nu)} \nu(\nu') \cdot \nu'\Big(\big\{ (w, m) \mid \delta(a) = (T, \_),\, w\in T\big\} \Big),\\
        r_1 &= \sum_{m\in \nat} m\cdot \tau_S(\nu, n, a)\Big(\big\{ (w, m) \mid w\in \tau_R(\delta) \big\}\Big)\\
        &= n\cdot \nu(\star)\cdot \dirac{\delta(a) = (\_, \top)} + \sum_{m\in \nat} \sum_{\nu'\in \supp(\nu)} m\cdot \nu(\nu') \cdot \nu'\Big(\big\{ (w, m-n) \mid \delta(a) = (T, \_),\, w\in T\big\} \Big)\\
        &= n\cdot \nu(\star)\cdot \dirac{\delta(a) = (\_, \top)} + \sum_{m\in \nat} \sum_{\nu'\in \supp(\nu)} (m+n)\cdot \nu(\nu') \cdot \nu'\Big(\big\{ (w, m) \mid \delta(a) = (T, \_),\, w\in T\big\} \Big),
    \end{align*}
    and $\tau_{S\otimes R} \circ \distributionfunc(q + \settorf)\times \nat\circ \lambda_{\subdistributionfunc(A^{+}\times \nat), \nempfaclang}(\nu, n, a, \delta) = (p_2, r_2)$ and $\lambda_{\subdistributionfunc(A^{+}\times \nat), \nempfaclang}(\nu, n, a, \delta) = (\mu, n)$, where 
    \begin{align*}
        p_2 &= \mu(\top) + \sum_{(\nu', L') \in \supp(\mu)} \mu(\nu', L')\cdot \pi_0\big(q(\nu', L')\big)\\
        &= \nu(\star)\cdot \dirac{\delta(a) = (\_, \top)} + \sum_{\nu'\in \supp(\nu)} \nu(\nu') \cdot \Big(\sum_{m\in \nat} \nu'\Big(\big\{ (w, m) \mid \delta(a) = (T, \_),\, w\in T\big\} \Big)\Big)\\
        &= \nu(\star)\cdot \dirac{\delta(a) = (\_, \top)} + \sum_{m\in \nat}\sum_{\nu'\in \supp(\nu)} \nu(\nu') \cdot  \nu'\Big(\big\{ (w, m) \mid \delta(a) = (T, \_),\, w\in T\big\} \Big),\\
        r_2 &= n\cdot \mu(\top) + \sum_{(\nu', L') \in \supp(\mu)} \mu(\nu', L')\cdot\Big( \pi_0\big(q(\nu', L')\big)\cdot n + \pi_1\big(q(\nu', L')\big)\Big)\\
        &= n\cdot \mu(\top) + \sum_{\nu' \in \supp(\nu)} \nu(\nu')\cdot\Big( \pi_0\big(q(\nu', L')\big)\cdot n + \pi_1\big(q(\nu', L')\big)\Big)\\
        &= n\cdot \mu(\top) + \sum_{\nu' \in \supp(\nu)} \nu(\nu')\cdot\Big( \sum_{m_1\in \nat}\nu'\Big(\big\{ (w, m_1) \mid \delta(a) = (T, \_),\, w\in T\big\}\Big)\cdot n\\
        &+ \Big( \sum_{m_2\in \nat}m_2\cdot \nu'\Big(\big\{ (w, m_2) \mid \delta(a) = (T, \_),\, w\in T\big\}\Big)\Big)\\
        &= n\cdot \mu(\top) + \sum_{\nu' \in \supp(\nu)} \nu(\nu')\cdot\Big( \sum_{m\in \nat} \big(m+n\big)\cdot \nu'\big(\big\{ (w, m) \mid \delta(a) = (T, \_),\, w\in T\big\}\big)\Big)\\
        &= n\cdot \nu(\star)\cdot \dirac{\delta(a) = (\_, \top)} + \sum_{m\in \nat} \sum_{\nu'\in \supp(\nu)} (m+n)\cdot \nu(\nu') \cdot \nu'\Big(\big\{ (w, m) \mid \delta(a) = (T, \_),\, w\in T\big\} \Big),
    \end{align*}

\end{proof}

\subsection{Costs Induced by Requirements}
\label{appendix:costReq}

\begin{definition}[modality for the reward machine]
    The modaility $\tau_S$ is given by $\tau_S\colon \big((B^{+}_{\bot})^{A^{+}}\times B\big)^A \rightarrow (B^{+}_{\bot})^{A^{+}}$, where 
    \begin{align*}
        \tau_S(\delta)(w) \defeq \begin{cases}
            \pi_2(\delta(a)) &\text{ if $w = a$,}\\
            \pi_2(\delta(a)) \cdot \pi_1(\delta(a))(w') &\text{ if $w = a\cdot w'$, and $\pi_1(\delta(a))(w') \in B^+$,}\\
            \bot &\text{ otherwise. }
        \end{cases}
    \end{align*}
\end{definition}

\begin{lemma}
    The predicate tranformer $\Phi\colon \big((B^{+}_{\bot})^{A^{+}}\big)^Y\rightarrow \big((B^{+}_{\bot})^{A^{+}}\big)^Y$ is $\omega$-continuous.  
\end{lemma}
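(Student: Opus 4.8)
The plan is to prove $\omega$-continuity directly, exploiting that the order on $B^{+}_{\bot}$ (with $B\defeq\nset{M}$) is \emph{flat}, so that $\omega$-chains stabilise after finitely many steps. First I would unfold the predicate transformer: $\Phi = \tau_S\circ F(u)\circ d$ where $d\colon Y\to (Y\times\nset{M})^{A}$ is the reward machine and $F(Z)=(Z\times\nset{M})^{A}$. Writing $d(y)(a)=(y_a,m_a)$, so that $F(u)(d(y))(a)=(u(y_a),m_a)$, the modality $\tau_S$ yields the explicit description
\[
  \Phi(u)(y)(w)\;=\;
  \begin{cases}
     m_a & \text{if } w=a,\\
     m_a\cdot u(y_a)(w') & \text{if } w=a\cdot w'\text{ and } u(y_a)(w')\in B^{+},\\
     \bot & \text{otherwise,}
  \end{cases}
\]
for $u\colon Y\to (B^{+}_{\bot})^{A^{+}}$, $y\in Y$ and $w\in A^{+}$. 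The crucial point is that $\Phi(u)(y)(w)$ depends on the argument $u$ only through the single coordinate $u(y_a)(w')$ (when $w=a\cdot w'$), and not at all on $u$ when $w=a$.

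Next I would record the relevant $\omega$-cpo structure. Since $z\preceq z'$ on $B^{+}_{\bot}$ holds only when $z=z'$ or $z=\bot$, every $\omega$-chain in $B^{+}_{\bot}$ is eventually constant, and the domain $\big((B^{+}_{\bot})^{A^{+}}\big)^{Y}\cong (B^{+}_{\bot})^{Y\times A^{+}}$ is an $\omega$-cpo in which suprema are taken pointwise: for an $\omega$-chain $(u_n)_n$ with $u\defeq\bigvee_n u_n$ one has $u(y)(w)=\bigvee_n u_n(y)(w)$, which is $\bot$ if all $u_n(y)(w)=\bot$, and otherwise the eventual constant value of the chain $(u_n(y)(w))_n$. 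From the displayed formula, monotonicity of $\Phi$ is then immediate: if $u\preceq u'$ and $w=a\cdot w'$, then either $u(y_a)(w')=\bot$, whence $\Phi(u)(y)(w)=\bot\preceq\Phi(u')(y)(w)$, or by flatness $u'(y_a)(w')=u(y_a)(w')$, whence the two values agree; the case $w=a$ does not involve the argument.

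For $\omega$-continuity, let $(u_n)_n$ be an $\omega$-chain with $u\defeq\bigvee_n u_n$. Monotonicity gives $\bigvee_n\Phi(u_n)\preceq\Phi(u)$, so by flatness it remains to show, for each $y$ and $w$, that if $\Phi(u)(y)(w)\in B^{+}$ then $\Phi(u_N)(y)(w)=\Phi(u)(y)(w)$ for some $N$. If $w=a$ this holds already for $N=0$. If $w=a\cdot w'$ and $\Phi(u)(y)(w)\in B^{+}$, then $u(y_a)(w')\in B^{+}$; since $u(y_a)(w')=\bigvee_n u_n(y_a)(w')$ and this chain is eventually constant, choose $N$ with $u_N(y_a)(w')=u(y_a)(w')$, and then $\Phi(u_N)(y)(w)=m_a\cdot u_N(y_a)(w')=m_a\cdot u(y_a)(w')=\Phi(u)(y)(w)$. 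Hence $\Phi(u)\preceq\bigvee_n\Phi(u_n)$, giving equality.

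I expect no substantial obstacle here. The only care needed is to pin down the flat/pointwise $\omega$-cpo structure correctly and to match the abstract composite $\tau_S\circ F(u)\circ d$ with the concrete formula above — in particular noting that, because $\Phi(u)(y)(w)$ reads a single coordinate of $u$, no diagonalisation over infinitely many coordinates is needed, which is exactly why stabilisation of one chain suffices. (Alternatively, one could invoke the paper's general sufficient condition — continuity of the modality $\tau_S$ together with continuity of precomposition with $d$ — but the direct argument above is the shortest route.)
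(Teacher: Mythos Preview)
Your proposal is correct and takes essentially the same approach as the paper: unfold $\Phi$ explicitly, use flatness of the order on $B^{+}_{\bot}$ so that pointwise suprema stabilise, and observe that $\Phi(u)(y)(w)$ depends on at most one coordinate of $u$. The paper's proof is terser (it directly computes $\Phi(\bigvee c_n)(y)(w)$ by cases and identifies it with $\bigvee\Phi(c_n)(y)(w)$), while you separate out monotonicity and the two inequalities, but the substance is identical.
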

\begin{proof}
    Let $(c_n)$ be an $\omega$-chain. 
    \begin{align*}
        &\Phi(\vee (c_n))(y)(w)\\
        &= \begin{cases}
            \pi_2(d(y)(a)) &\text{ if $w = a$,}\\
            \pi_2(d(y)(a)) \cdot \Big(\vee (c_n)\big(\pi_1(d(y)(a))\big)(w') \Big) &\text{ if $w = a\cdot w'$, and $\vee (c_n)\big(\pi_1(d(y)(a))\big)(w') \in B^+$,}\\
            \bot &\text{ otherwise. }
        \end{cases}\\
        &= \begin{cases}
            \pi_2(d(y)(a)) &\text{ if $w = a$,}\\
            \pi_2(d(y)(a)) \cdot  c_m\big(\pi_1(d(y)(a))\big)(w') &\text{ if $w = a\cdot w'$, and $\exists m. c_m\big(\pi_1(d(y)(a))\big)(w') \in B^+$,}\\
            \bot &\text{ otherwise. }
        \end{cases}\\
        &= \vee\big(\Phi(c_n)\big)(y)(w).\\
    \end{align*} 
\end{proof}

\begin{definition}[inference for reward machines and bounds]
    The inference $q$ is given by $q\colon (B^{+}_{\bot})^{A^{+}}\times\nempfaclang[B]  \rightarrow \nempfaclang$ by 
    \begin{align*}
        q(f, L) \defeq \{ w\in A^{+}\mid f(w)\in L\}. 
    \end{align*}
\end{definition}

\begin{lemma}
    The coalgebraic product $d_1\otimes_{\alpha} d_2$ meets the correctness criterion. 
\end{lemma}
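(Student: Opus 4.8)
The plan is to verify the three hypotheses of the correctness criterion (\cref{thm:correctness_product}) for the data at hand: the system functor $F_S(Y) = (Y\times\nset{M})^A$ (reward machines) with the modality $\tau_S$ defined just above; the requirement functor $F_R(Z) = \dfautomata[\nset{M}]{Z}$ with the DFA modality of \cref{ex:tau-semanticsDFA} over the alphabet $\nset{M}$; the product functor $F_{S\otimes R}(W) = \dfautomata[A]{W}$ with the DFA modality over $A$; the distributive law $\alpha$; and the query $q_2(f,L) \defeq \{w\in A^+ \mid f(w)\in L\}$. ($\omega$-continuity of the three predicate transformers is standard in the DFA cases and has just been established for the reward machine, so the semantic-structure data is well-formed.) It then remains to show that $q_2$ preserves least elements, that $q_2$ is $\omega$-continuous, and that the coherence square of \cref{thm:correctness_product} commutes.

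The first point is immediate: the bottom of $(\nsetbot{M}^{+})^{A^+}$ is the constant map $w\mapsto\bot$, the bottoms of $\nempfaclang[\nset{M}]$ and $\nempfaclang[A]$ are $\emptyset$, and $q_2(\bot,\emptyset) = \{w\mid\bot\in\emptyset\} = \emptyset$. For $\omega$-continuity I would exploit that the order on $\nsetbot{M}^{+}$ is \emph{flat}: along any $\omega$-chain $(f_i)_i$ each pointwise sequence $(f_i(w))_i$ is eventually constant, so $(\bigvee_i f_i)(w)$ equals that eventual value, while $\bigvee_j L_j = \bigcup_j L_j$. A short two-inclusion argument then yields $q_2(\bigvee_i f_i,\bigvee_j L_j) = \bigcup_k q_2(f_k,L_k)$: for $\subseteq$, a value lying in some $L_{j_0}$ is non-$\bot$ hence already equals $f_k(w)$ for all large $k$, and taking $k\ge j_0$ gives $f_k(w)\in L_k$; for $\supseteq$, if $f_k(w)\in L_k$ then $f_k(w)\neq\bot$, so by flatness $f_k(w) = (\bigvee_i f_i)(w)\in L_k\subseteq\bigcup_j L_j$. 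Monotonicity of $q_2$ (again by flatness, since $f\preceq f'$ forces $f$ and $f'$ to agree off the $\bot$-locus) makes $(q_2(f_k,L_k))_k$ a genuine $\omega$-chain.

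The substantial step is commutation of the coherence square. I would fix $\delta_1\in F_S(\Omega_S)$ and $\delta_2\in F_R(\Omega_R)$ and abbreviate, for $a\in A$, $(y_a,b_a)\defeq\delta_1(a)$ and $(z_a,t_a)\defeq\delta_2(b_a)$. Going down-then-right, $\alpha_{\Omega_S,\Omega_R}(\delta_1,\delta_2)(a) = (y_a,z_a,t_a)$; applying $F_{S\otimes R}(q_2)$ (which applies $q_2$ to the first two components) gives $a\mapsto(q_2(y_a,z_a),t_a)$; and the DFA modality over $A$ turns this into $\{a\mid t_a=\top\}\cup\{a\cdot w'\mid w'\in A^+,\ y_a(w')\in z_a\}$. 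Going right-then-down, I would unfold $\tau_S(\delta_1)$ --- which sends the one-letter word $a$ to $b_a$ and $a\cdot w'$ to $b_a\cdot y_a(w')$ when $y_a(w')\in\nset{M}^+$, and to $\bot$ otherwise --- and $\tau_R(\delta_2)$ --- which contains the one-letter word $b$ exactly when the flag of $\delta_2(b)$ is $\top$, and a word $b\cdot v$ of length $\ge 2$ exactly when $v$ lies in the language-component of $\delta_2(b)$ --- then apply $q_2$ and split on whether the argument word has length one. One-letter words $a$ land in $q_2(\tau_S(\delta_1),\tau_R(\delta_2))$ iff $b_a\in\tau_R(\delta_2)$ iff $t_a=\top$; a word $a\cdot w'$ ($w'\in A^+$) lands there iff $b_a\cdot y_a(w')\in\tau_R(\delta_2)$, which unwinds --- using $z_a\subseteq\nset{M}^+$, so the side condition $y_a(w')\in\nset{M}^+$ is subsumed --- to $y_a(w')\in z_a$. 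This reproduces the set computed along the other path, so the square commutes; \cref{thm:correctness_product} then gives $\lfpoint{d_1\otimes_\alpha d_2} = q_2\circ(\lfpoint{d_1}\times\lfpoint{d_2})$, i.e.\ $\lfpoint[(y,N)]{d_3} = q_2(\lfpoint[y]{d_1},\lfpoint[N]{d_2})$ as required for \cref{prop:correctCostReq}.

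I expect the main obstacle to be organisational rather than conceptual: one must thread the chain of projections $\delta_1(a)\mapsto(y_a,b_a)$, then $\delta_2(b_a)\mapsto(z_a,t_a)$, correctly through $\alpha$ and $F_{S\otimes R}(q_2)$, and line up the letter $b_a$ that $\tau_S$ prepends with the letters prepended by the two DFA modalities (one over $A$, one over $\nset{M}$). Beyond the length-one versus length-$\ge 2$ case split, no creativity is needed; the only place that demands real care is the $\omega$-continuity argument, where flatness of the cost order must be used explicitly rather than a generic monotone-convergence theorem.
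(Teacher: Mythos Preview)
Your proposal is correct and follows essentially the same approach as the paper: verify the three hypotheses of \cref{thm:correctness_product}, with the main work being the case split on word length to show the coherence square commutes. You are in fact more careful than the paper, which simply asserts that $q_2$ preserves least elements and $\omega$-chains without spelling out the flatness argument you give; the paper's diagram chase also contains a notational slip (writing ``$w_2\in L$'' where ``$w_2(w_1)\in L$'' is meant) that your version avoids.
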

\begin{proof}
    The query $q$ preserves the least elements and $\omega$-chains, and the following diagram commutes: 
    \begin{equation*}
        \adjustbox{scale=0.9,center}{%
        \begin{tikzcd}
            \big((B^{+}_{\bot})^{A^+}\times B\big)^A\times (\nempfaclang[B] \times \settorf)^B \arrow[rrr,"\modsys{\tau}\times \modspec{\tau}"] \arrow[d,"\alpha_{(B^{+}_{\bot})^{A^+},\, \nempfaclang[B]}"]
                    && 
                    &(B^{+}_{\bot})^{A^+}\times\nempfaclang[B]\arrow[d,"q"]\\
                    \dfautomata{(B^{+}_{\bot})^{A^+}\times \nempfaclang[B]}\arrow[rr,"(q\times \settorf)^A"]
                    && \dfautomata{\nempfaclang[A]} \arrow[r, "\modprod{\tau}"]
                    & \nempfaclang
        \end{tikzcd}
        }
        \end{equation*}

        \begin{align*}
            &\big(q\circ \modsys{\tau}\times \modspec{\tau}\big)(\delta_1, \delta_2)\\
            &= \{w\mid w\in A^{+}, \modsys{\tau}(\delta_1)(w)\in B^{+},\text{ and }\modsys{\tau}(\delta_1)\in \modspec{\tau}(\delta_2)\}\\
            &= \{a\mid (\_, b) = \delta_1(a),\, (\_, \top) = \delta_2(b)\}\bigcup \{a\cdot w_1\mid (w_2, b) = \delta_1(a),\, (L, \_) = \delta_2(b),\, w_2\in L\}\\
            &\big(\modprod{\tau}\circ (q\times \settorf)^A\circ \alpha_{(B^{+}_{\bot})^{A^+},\, \nempfaclang[B]}\big)(\delta_1, \delta_2) \\
            &= \{ a\mid (\_, b) = \delta_1(a),\, (\_, \top) = \delta_2(b)\}\bigcup \{a\cdot w_1\mid (w_2, b) = \delta_1(a),\, (L, \_) = \delta_2(b),\, w_2\in L\}
        \end{align*}
\end{proof}

\subsection{MCs without Target States and Safety Property}
\label{sec:proof_inf_trace}

\begin{definition}
    The modality $\tau_S$ defined in~\cref{def:modalityMCs} is well-defined.
\end{definition}
\begin{proof}
    Let $\nu\in \distributionfunc(\Omega)$, $a\in A$, and $m\defeq \min\big(\{n\mid \sigma\in \supp(\nu)\text{ and }\sigma\in \distributionfunc(A^n)\}, \omega\big)$.
    It suffices to prove that the modality $\tau_S$ is well-defined when $m = \omega$. 
    For each $m, n\in \nat$, we prove that $m <  n$ implies $\sigma_m \prec \sigma_n$. 
    For any $w \in A^m$, 
    \begin{align*}
        \sigma_m(w) &\defeq \begin{cases}
            \sum_{\sigma\in \supp(\nu)} \nu(\sigma)\cdot \mrg{\sigma}{m-1}(w') &\text{ if $w = a\cdot w'$,}\\
            0 &\text{ if $a\not \in\pref{w}$.}
        \end{cases}\\
        &=\begin{cases}
            \sum_{\sigma\in \supp(\nu)} \nu(\sigma)\cdot \sigma(\cyset{w'}) &\text{ if $w = a\cdot w'$,}\\
            0 &\text{ if $a\not \in\pref{w}$.}
        \end{cases}\\
        \mrg{\sigma_n}{m}(w) &\defeq \sum_{w_2\in A^n\land w\in \pref{w_2}} \sigma_n(w_2)\\
        &=\sum_{a\cdot w_3\in A^n\land w\in \pref{a\cdot w_3}}\sum_{\sigma\in \supp(\nu)} \nu(\sigma)\cdot \sigma(\cyset{w_3})\\
        &=\begin{cases}
            \sum_{\sigma\in \supp(\nu)} \nu(\sigma)\cdot \sigma(\cyset{w'}) &\text{ if $w = a\cdot w'$,}\\
            0 &\text{ if $a\not \in\pref{w}$.}
        \end{cases}\\
    \end{align*}
\end{proof}

\begin{lemma}
    The predicate transformer $\predtran{\tau_S, c}$ on $\Omega^X$ is $\omega$-continuous, where the semantic domain $\Omega$ is defined in~\cref{def:truth-valueMC}. 
\end{lemma}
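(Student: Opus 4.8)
The plan is to reduce $\omega$-continuity of $\predtran{\tau_S, c}$ to a pointwise statement about the modality $\tau_S$, and then to exploit the marginalization-based structure of the order $\preceq$ on $\Omega$ together with the finite-branching assumption on $c$.

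Since suprema in $\Omega^X$ are computed pointwise, it suffices to fix $x\in X$ and an $\omega$-chain $(u_k)_{k\in\nat}$ in $\Omega^X$ with supremum $\hat u\defeq\bigvee_k u_k$ (so $\hat u(s)=\bigvee_k u_k(s)$ for each $s$), and to prove $\predtran{\tau_S, c}(\hat u)(x)=\bigvee_k\predtran{\tau_S, c}(u_k)(x)$. Write $c(x)=(P,a)$ with $P\defeq P(x)\in\distributionfunc(X)$ and $a\defeq l(x)$. Then $\predtran{\tau_S, c}(u)(x)=\tau_S\big(\distributionfunc(u)(P),a\big)$, where $\distributionfunc(u)(P)$ is the finitely supported distribution on $\Omega$ with $\distributionfunc(u)(P)(\sigma)=\sum_{s:\,u(s)=\sigma}P(s)$; finiteness of $S\defeq\supp(P)$, coming from finite branching, is what makes the remaining argument go through (it fails for infinitely-branching chains).

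For $s\in S$ let $h_k(s)\in\nat\cup\{\omega\}$ be the horizon of $u_k(s)$ (namely $n$ if $u_k(s)\in\distributionfunc(A^n)$, and $\omega$ if $u_k(s)\in\probmeasure(\salg{A})$), and $h(s)$ the horizon of $\hat u(s)$. Because $\preceq$ is generated by marginalizations, each $(h_k(s))_k$ is non-decreasing with $\sup_k h_k(s)=h(s)$, and --- the key marginalization-consistency fact --- whenever $u_k(s)\preceq\hat u(s)$ with both of horizon $\ge n$ one has $\mrg{u_k(s)}{n}=\mrg{\hat u(s)}{n}$ (transitivity of marginalization, plus the Kolmogorov extension property already used for well-definedness of $\tau_S$ in the measure case). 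Now set $m_k\defeq\min_{s\in S}h_k(s)$ and $m_\infty\defeq\min_{s\in S}h(s)$; these are exactly the ``$m$'' occurring in the definition of $\tau_S$ applied to $(\distributionfunc(u_k)(P),a)$ and $(\distributionfunc(\hat u)(P),a)$, and since $S$ is finite, $(m_k)_k$ is non-decreasing with $\sup_k m_k=m_\infty$. Applying the consistency fact with $n=m_k$ (note $m_k\le h_k(s)$ for every $s\in S$) and unwinding the defining formula, one computes $\predtran{\tau_S, c}(u_k)(x)=\sigma_{m_k+1}$, where $(\sigma_n)_n$ is the canonical family of finite-horizon distributions obtained from $\distributionfunc(\hat u)(P)$ and $a$ by the recursion in the definition of $\tau_S$ (which runs up to horizon $m_\infty+1$, or over all $n\in\nat$ when $m_\infty=\omega$).

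It remains to take suprema over $k$. If $m_\infty<\omega$, then $m_k=m_\infty$ for all large $k$, so $\bigvee_k\predtran{\tau_S, c}(u_k)(x)=\sigma_{m_\infty+1}=\tau_S(\distributionfunc(\hat u)(P),a)=\predtran{\tau_S, c}(\hat u)(x)$. If $m_\infty=\omega$, then $m_k\to\omega$, so $(\sigma_{m_k+1})_k$ is cofinal in the $\omega$-chain $(\sigma_n)_n$ and $\bigvee_k\predtran{\tau_S, c}(u_k)(x)=\bigvee_n\sigma_n=\tau_S(\distributionfunc(\hat u)(P),a)=\predtran{\tau_S, c}(\hat u)(x)$. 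I expect the main obstacle to be the bookkeeping around non-uniform horizons over $S$: different successors $u_k(s)$ grow at different rates, so one must argue carefully that $\min_{s\in S}h_k(s)\to\min_{s\in S}h(s)$ --- which genuinely uses finiteness of $S$ --- and apply the marginalization-consistency fact simultaneously at the correct horizon for each $s$.
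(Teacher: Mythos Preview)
Your argument is correct and follows the same underlying strategy as the paper: both exploit finite branching to control the minimum horizon $m_k=\min_{s\in S}h_k(s)$ over successors, and use the marginalization-based order to identify $\predtran{\tau_S,c}(u_k)(x)$ with the level-$(m_k{+}1)$ member of the canonical family built from $\hat u$. Your write-up is in fact more systematic than the paper's sketch: the paper first shows monotonicity separately, then dismisses as ``trivial'' every case except the one where all limit horizons are $\omega$ while some $u_k(s)$ stays at finite horizon, and handles that case by passing to an auxiliary chain $(v_n)$ with uniformized horizons; you instead carry out the horizon bookkeeping directly, making explicit the step $\sup_k\min_{s\in S}h_k(s)=\min_{s\in S}h(s)$ (which is exactly where finiteness of $S$ is needed) and thereby treating all cases uniformly.
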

\begin{proof}
Let $u, v\in \Omega^X$ and $u\preceq v$. 
For each $x\in X$ and $c(x) = \big(P(x, \_), l(x)\big)$, let $m = \min \{u(x') \mid x'\in \sup(P(x, \_))\}$ and $n = \min \{u(x') \mid x'\in \sup(P(x, \_))\}$.
Since $u\preceq v$, $m\leq n$, and $\predtran{\tau_S, c}(u)(x)\preceq \predtran{\tau_S, c}(v)(x)$. 

Let $(u_n)_{n\in \nat}$ be an $\omega$-chain.
For each $x\in X$, we assume that there is $x_1\in \supp\big(P(x, \_)\big)$ such that $u_n(x_1)\in \bigcup_{m\in \nat} \distributionfunc(A^m)$ for any $n\in \nat$, and $\big(\bigvee (u_n)_{n\in \nat}\big)(x_2)\in \probmeasure(\salg{A})$ for any $x_2\in \supp(P(x, \_))$; 
otherwise, it is trivial that the predicate transformer $\predtran{\tau_S, c}$ preserves the join. 
Under the assupmtion, there is an $\omega$-chain $(v_n)_{n\in \nat}$ such that $v_n(x_1)\in  \distributionfunc(A^n)$ for any $x_1\in \supp\big(P(x, \_)\big)$, and $\predtran{\tau_S, c}(\bigvee_{n\in \nat} v_n)(x) = \predtran{\tau_S, c}(\bigvee_{n\in \nat}u_n)(x)$ and $\bigvee_{n\in \nat}\predtran{\tau_S, c}(u_n)(x) = \bigvee_{n\in \nat}\predtran{\tau_S, c}(v_n)(x)$. 
Then, it suffices to prove that $\predtran{\tau_S, c}(\bigvee_{n\in \nat}v_n)(x) = \bigvee_{n\in \nat}\predtran{\tau_S, c}(v_n)(x)$. This equality holds by the definition of the modality $\tau_S$. 
\end{proof}

\begin{lemma}
    The query $q$ is $\omega$-continuous. 
\end{lemma}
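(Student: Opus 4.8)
The plan is to reduce $\omega$-continuity of $q$ to the ordinary continuity of measures along increasing unions, after first reformulating $q$ so that the partition $\{T^i\}_i$ of \cref{def:partition} disappears. For $\sigma\in\distributionfunc(A^n)$, each $w\in A^n$ has a unique shortest prefix in $T$ (if any), lying in some $T^i$ with $i\le n$; hence the sets $\{w\in A^n\mid w|_i\in T^i\}$ for $i\le n$ are pairwise disjoint and their union is $D_n(T)\defeq\{w\in A^n\mid \pref{w}\cap T\neq\emptyset\}$, so that $q(\sigma,T)=\sum_{i\le n}\mrg{\sigma}{i}(T^i)=\sigma(D_n(T))$. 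For $\sigma\in\probmeasure(\salg{A})$ one has directly $q(\sigma,T)=\sigma(T^\omega)$ with $T^\omega=\bigcup_{l\in T}\cyset{l}$. From these two formulas I would first note that $q$ is monotone in both arguments: monotonicity in $T$ is immediate since $D_n(\cdot)$ and $(\cdot)^\omega$ are monotone, and monotonicity in $\sigma$ follows because $\sigma\preceq\sigma'$ means $\sigma=\mrg{\sigma'}{m}$ for a suitable $m$, while a trace with a prefix of length $\le m$ in $T$ a fortiori has a prefix in $T$. Monotonicity makes $\big(q(\sigma_i,T_j)\big)_{i,j}$ a monotone double sequence, so $\bigvee_{k}q(\sigma_k,T_k)=\sup_{i,j}q(\sigma_i,T_j)$ and, again by monotonicity, $\sup_{i,j}q(\sigma_i,T_j)\le q\big(\bigvee_i\sigma_i,\bigcup_j T_j\big)$; what remains is the reverse inequality.

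Second, I would pin down what $\bigvee_i\sigma_i$ actually is, using the marginalization order of \cref{def:truth-valueMC}. Elements of $\probmeasure(\salg{A})$ are $\preceq$-maximal, and a relation $\sigma_i\preceq\sigma_{i+1}$ between finite-dimensional distributions forces the dimensions to be non-decreasing with $\sigma_i=\mrg{\sigma_{i+1}}{n_i}$. Therefore either (A) the chain is eventually constant, and $\bigvee_i\sigma_i$ equals that constant — a finite-dimensional distribution, or (the trivial subcase of B) a measure — or (B) $\sigma_i\in\distributionfunc(A^{n_i})$ with $n_i\to\infty$, in which case the $\sigma_i$ form a Kolmogorov-consistent family and, by the Kolmogorov extension theorem invoked just after \cref{def:truth-valueMC}, $\bigvee_i\sigma_i$ is the unique $\sigma_\infty\in\probmeasure(\salg{A})$ with $\mrg{\sigma_\infty}{n_i}=\sigma_i$ for all $i$.

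Finally I would close each case by countable additivity. In case (A), writing $\sigma\defeq\bigvee_i\sigma_i\in\distributionfunc(A^N)$: a word of length $N$ has only finitely many prefixes, so $D_N\big(\bigcup_j T_j\big)=\bigcup_j D_N(T_j)$, and since $\sigma$ is a (sub)probability measure on the countable set $A^N$, $q\big(\sigma,\bigcup_j T_j\big)=\sigma\big(\bigcup_j D_N(T_j)\big)=\sup_j\sigma(D_N(T_j))=\sup_j q(\sigma,T_j)\le\sup_{i,j}q(\sigma_i,T_j)$. In case (B), using $\mrg{\sigma_\infty}{n_i}(S)=\sigma_\infty(\{x\mid x|_{n_i}\in S\})$ I would rewrite $q(\sigma_i,T_j)=\sigma_\infty\big(\{x\mid x\text{ has a prefix of length}\le n_i\text{ in }T_j\}\big)$; as $n_i\to\infty$ these events increase to $T_j^\omega$, so $\sup_i q(\sigma_i,T_j)=\sigma_\infty(T_j^\omega)=q(\sigma_\infty,T_j)$ by continuity of $\sigma_\infty$, and then, since $T_j^\omega\uparrow\big(\bigcup_j T_j\big)^\omega$, one more application of continuity gives $\sup_{i,j}q(\sigma_i,T_j)=\sup_j q(\sigma_\infty,T_j)=\sigma_\infty\big((\bigcup_j T_j)^\omega\big)=q\big(\bigvee_i\sigma_i,\bigcup_j T_j\big)$. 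Combined with the first paragraph this proves the equality in the definition of $\omega$-continuity.

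The main obstacle is not any individual calculation but the bookkeeping around the order structure: correctly identifying the supremum of an $\omega$-chain from the Kolmogorov-consistency order (and in particular seeing that only the two cases above occur), keeping the partition-versus-prefix reformulation straight so that the monotonicity and the ``union commutes'' steps are actually valid, and cleanly separating the two independent limits — the $j$-limit handled by countable additivity of the limit measure, and the $i$-limit handled by the fact that a trace has finitely many prefixes while the dimensions $n_i$ tend to infinity.
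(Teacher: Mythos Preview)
Your argument is correct. It is close in spirit to the paper's---both reduce to continuity of the limiting measure along increasing unions---but your decomposition is different and more explicit. The paper keeps the partition pieces $T^i$ throughout and argues via the chain of equalities
\[
\sigma_\infty\Bigl(\bigl(\textstyle\bigcup_m T_m\bigr)^{\omega}\Bigr)
\;=\;\lim_{k}\sum_{i\le k}\mrg{\sigma_\infty}{i}\bigl((\textstyle\bigcup_m T_m)^i\bigr)
\;=\;\lim_{k}\lim_{m}\sum_{i\le k}\mrg{\sigma_\infty}{i}\bigl((T_m)^i\bigr)
\;=\;\lim_{k}\sum_{i\le k}\mrg{\sigma_\infty}{i}\bigl((T_k)^i\bigr),
\]
dismissing the finite-dimensional case as ``easy''. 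The second equality here is precisely the delicate step: the individual pieces $(T_m)^i$ are \emph{not} monotone in $m$ (adding a short word to $T$ can empty $T^i$ for larger $i$), so one really needs that the \emph{partial sums} $\sum_{i\le k}\mrg{\sigma_\infty}{i}(T^i)$ are monotone in $T$, which is exactly what your rewriting $q(\sigma,T)=\sigma(D_n(T))$ makes transparent. Your explicit case split on the shape of $\bigvee_i\sigma_i$ (eventually constant versus dimensions $n_i\to\infty$ with a Kolmogorov limit) and the separate treatment of the $i$- and $j$-limits also fill in what the paper leaves implicit. The trade-off is length: the paper's calculation is four lines, yours is a full argument, but your version makes the monotonicity and the two limit interchanges visible rather than hidden in the notation.
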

\begin{proof}
    Let $(\sigma_n)_{n\in \nat}$ and $T_n$ be $\omega$-chains. 
    We assume that $\bigvee_{n\in \nat} \sigma_n\in \probmeasure(\salg{A})$; otherwise it is easy to prove that the query is $\omega$-continuous. 
    We prove the statement by the following calculation: 
    \begin{align*}
        \bigvee_{n\in \nat} \sigma_n\big((\bigvee_{m\in \nat}T_m)^{\omega}\big) &= \lim_{k\rightarrow \infty}\sum_{i\leq k}\sigma_{n_i} (\bigvee_{m\in \nat}T_m)^i\\
        &= \lim_{k\rightarrow \infty}\lim_{m\rightarrow \infty }\sum_{i\leq k}\sigma_{n_i} (T_m)^i\\
        &= \lim_{k\rightarrow \infty}\sum_{i\leq k}\sigma_{n_i} (T_k)^i\\
        &= \bigvee_{n\in \nat} q(\sigma_n, T_n).
    \end{align*}
\end{proof}
    
\begin{lemma}
    The data commutes the diagram shown in~\cref{prop:weaker_correctness_criterion}.  
\end{lemma}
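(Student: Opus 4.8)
The plan is to verify the commutativity of the diagram in~\cref{prop:weaker_correctness_criterion} instantiated to the data of~\cref{sec:MCsNeverTerminate}: the MC without target states $c\colon X\rightarrow \mcinffunc{X}$, the DFA $d\colon Y\rightarrow \dfautomata{Y}$, the distributive law $\lambda$ of~\cref{def:dist_MC_not-terminate}, the modalities $\tau_S$ (\cref{def:modalityMCs}), $\tau_R$ (\cref{ex:tau-semanticsDFA}), $\tau_{S\otimes R}$, and the query $q$ of~\cref{def:query_mc_not-terminate}. Unlike the finite-trace case, the key point here is that we do \emph{not} need the diagram to commute on all of $\fsys{F}(\tvdsys{\Omega})\times\fspec{F}(\tvdspec{\Omega})$, only along the images of the Kleene iterates $u_{S,k}=(\predtran{\tau_S,c})^k(\bot)$ and $u_{R,k}=(\predtran{\tau_R,d})^k(\bot)$. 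The crucial structural fact is that $u_{S,k}(x)\in\distributionfunc(A^{k})$ for every $x$ (the $k$-th iterate produces exactly the distribution over length-$k$ trace prefixes, since $\bot$ is the empty distribution in $\distributionfunc(A^0)$ and each application of $\predtran{\tau_S,c}$ appends one letter), and $u_{R,k}(y)\in\nempfaclang$ is the set of accepted words of length $\le k$. This finiteness of the trace length is what makes the calculation elementary: all sums are finite and no measure-theoretic limit is needed inside the diagram (the limits are absorbed by the outer $\omega$-continuity in the proof of~\cref{prop:weaker_correctness_criterion}).

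First I would fix $k\in\nat$ and chase an element $(x,y)\in X\times Y$ through both paths of the diagram, starting from $\big(c(x),d(y)\big)=\big((P(x,\_),l(x)),\delta\big)$ where $\delta=d(y)$, and writing $a\defeq l(x)$ and $(y',b)\defeq\delta(a)$. Along the top route: applying $\fsys{F}(u_{S,k})\times\fspec{F}(u_{R,k})$ pushes forward to $\big((u_{S,k}\circ\text{(successors)},a),(\text{DFA cell with }u_{R,k}(y'))\big)$; then $\tau_S\times\tau_R$ produces on the system side the distribution $\sigma_{k+1}\in\distributionfunc(A^{k+1})$ given by $\sigma_{k+1}(a\cdot w')=\sum_{x'}P(x,x')\,u_{S,k}(x')(w')$ and $0$ if the first letter is not $a$, and on the requirement side the language $\tau_R$ applied to the DFA cell, i.e.\ $\{a\mid b=\top\}\cup\{a\cdot w\mid w\in u_{R,k}(y')\}$. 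Feeding this pair into $q$, and using that $\sigma_{k+1}\in\distributionfunc(A^{k+1})$ so $q$ uses the first branch with the partition $(\cdot)^i$ from~\cref{def:partition}, gives a finite sum $\sum_{i\le k+1}\mrg{\sigma_{k+1}}{i}(T^i)$ of the accepting trace probabilities, which I would rewrite as $\dirac{b=\top}\cdot P(x,\star\text{-mass})$—more precisely $\dirac{b=\top}\cdot 1$ weighted appropriately—plus $\sum_{x'}P(x,x')\cdot q\big(u_{S,k}(x'),u_{R,k}(y')\big)$ by peeling off the first letter $a$ and relating the partition of $\tau_R(\delta)$ to that of $u_{R,k}(y')$.

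Next I would chase the bottom route: $\lambda_{\tvdsys{\Omega},\tvdspec{\Omega}}$ applied to the pushed-forward pair yields the element of $\dmcflagfunc{\tvdsys{\Omega}\times\tvdspec{\Omega}}$ supported on $\big(u_{S,k}(x'),u_{R,k}(y')\big)$ with mass $P(x,x')$ when $b=\bot$, and on $\star$ with mass $1$ when $b=\top$; then $F_{S\otimes R}(q)=\distributionfunc(q+\{\star\})$ replaces each $\big(u_{S,k}(x'),u_{R,k}(y')\big)$ by the value $q\big(u_{S,k}(x'),u_{R,k}(y')\big)\in\probinterval$; finally $\tau_{S\otimes R}(\nu)=\nu(\star)+\sum_{p}\nu(p)\cdot p$ produces $\dirac{b=\top}+\dirac{b=\bot}\cdot\sum_{x'}P(x,x')\cdot q\big(u_{S,k}(x'),u_{R,k}(y')\big)$. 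Comparing the two expressions, they agree: when $b=\top$ the accepting single-letter word $a$ contributes exactly $\sum_{x'}P(x,x')=1$ (total outgoing probability, since there is no target state here) matching $\nu(\star)=1$; when $b=\bot$ the recursion terms match letter-by-letter. I would package the inductive peeling as a short sublemma: for any $\sigma\in\distributionfunc(A^{n+1})$ that factors as $\sigma(a\cdot w')=\sum_{x'}P(x,x')\rho_{x'}(w')$ with $\rho_{x'}\in\distributionfunc(A^n)$, and any $T\in\nempfaclang$, $q(\sigma,\{a\mid b=\top\}\cup a\cdot T')=\dirac{b=\top}+\sum_{x'}P(x,x')\,q(\rho_{x'},T')$, which is just unwinding~\cref{def:query_mc_not-terminate} and~\cref{def:partition}.

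The main obstacle I anticipate is bookkeeping around the partition operator $(\cdot)^i$ of~\cref{def:partition} and the marginalizations $\mrg{\sigma}{i}$: one has to check carefully that the "first time a prefix lies in $T$" structure of the partition interacts correctly with prepending the letter $a$, in particular that $\tau_R(\delta)^{i+1}=a\cdot\big(u_{R,k}(y')^i\big)$ for $i\ge 1$ while $\tau_R(\delta)^1=\{a\}$ exactly when $b=\top$, and that no word is double-counted across the $b=\top$ single-letter case and the recursive $b=\bot$ case (these are disjoint because a DFA transition gives a unique $(y',b)$). A secondary subtlety is confirming that the Kleene iterate $u_{S,k}(x)$ really lands in $\distributionfunc(A^k)$—hence $q$ takes its \emph{first} (finite-sum) branch and never the probability-measure branch—which is the whole reason the diagram commutes at each finite stage even though it would be delicate to state a single diagram valid on all of $\Omega_S\times\Omega_R$; I would state and prove this as a preliminary observation by a trivial induction on $k$ before doing the main chase.
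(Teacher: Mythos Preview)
Your proposal is correct and follows essentially the same route as the paper: both hinge on the observation that every $\sigma$ in the support of the pushed-forward distribution lies in $\distributionfunc(A^k)$ (so $q$ takes its finite-sum branch), then case-split on the DFA output bit $b$, with both sides equal to $1$ when $b=\top$, and in the $b=\bot$ case peel off the leading letter $a$ via the partition identity to reduce both routes to $\sum_{\sigma}\nu'(\sigma)\cdot q\big(\sigma,\pi_1(\delta'(a))\big)$. One small adjustment: the diagram in \cref{prop:weaker_correctness_criterion} starts at $F_S(X)\times F_R(Y)$, so begin the chase from an arbitrary $(\nu,a,\delta)$ there rather than from $(x,y)$ through $c\times d$; the calculation is otherwise identical.
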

\begin{proof}
    Let $(\nu, a, \delta)\in\mcinffunc{X}\times\dfautomata{Y}$. 
    We write $(\nu', a, \delta')$ for $(F_S(u_{S, k})(\nu, a), F_R(u_{R, k})(\delta))$. 
    By definition of $u_{S, k}$, we know that for any $\sigma\in \supp(\nu')$, $\sigma\in \distributionfunc(A^k)$. 
    This is crucial for the proof.

    If $\delta'(a) = (\_, \top)$, then $\big(q \circ \modsys{\tau}\times \modspec{\tau}\big)(\nu', a, \delta') = 1$ and $\big(\modprod{\tau}\circ \dmcflagfunc{q}\circ \lambda_{\Omega,\, \nempfaclang}\big)(\nu', a, \delta') = 1$. 
    Suppose that $\delta'(a) = (\_, \bot)$. The following equations hold. 
    \begin{align*}
        & \big(q \circ \modsys{\tau}\times \modspec{\tau}\big)(\nu', a, \delta') \\
        &= \sum_{i\in [1, k+1]}\mrg{\tau_S(\nu', a)}{i}(\tau_R(\delta')^i)\\
        &= \sum_{i\in [1, k+1]}\sum_{\sigma\in \supp(\nu')}\nu'(\sigma)\cdot \mrg{\sigma}{i-1}(\{w \mid a\cdot w\in\modspec{\tau}(\delta')\}^{i-1})\\
        &= \sum_{\sigma\in \supp(\nu')}\nu'(\sigma)\cdot \Big(\sum_{i\in [1, k]} \mrg{\sigma}{i}(\{w \mid a\cdot w\in\modspec{\tau}(\delta')\}^{i})\Big)\\
        &= \sum_{\sigma\in \supp(\nu')}\nu'(\sigma) \cdot q(\sigma, \{w \mid a\cdot w\in\modspec{\tau}(\delta')\})\\
        &= \sum_{\sigma\in \supp(\nu')}\nu'(\sigma) \cdot q\Big(\sigma, \pi_1\big(\delta'(a)\big)\Big) \\
        &\big(\modprod{\tau}\circ \dmcflagfunc{q}\circ \lambda_{\Omega,\, \nempfaclang}\big)(\nu', a, \delta')\\
        &= \sum_{r\in \probinterval} r\cdot\dmcflagfunc{q}\circ \lambda_{\Omega,\, \nempfaclang}(\nu', a, \delta')(r) + \dmcflagfunc{q}\circ \lambda_{\Omega,\, \nempfaclang}(\nu', a, \delta')(\star) \\
        &= \sum_{\sigma\in \supp(\nu')} \nu'(\sigma)\cdot  q\Big(\sigma, \pi_1\big(\delta'(a)\big)\Big)\\
    \end{align*}

\end{proof}
\subsection{A Translation of Inferences on MCs}
\label{sec:translation}

\begin{proposition}
    \label{prop:translationMC}
    Let $c\defeq (P, l)\colon X\rightarrow \mcfunc{X}$ be a finite branching MC with the target state and 
    $d\colon Y\rightarrow \dfautomata{Y}$ be a DFA. Consider the MC $c'\defeq (P', l')\colon X+\{\star\}\rightarrow \mcinffunc[(A+\{\star_A\})]{X+\{\star\}}$ and the DFA $d' \colon (Y\times \settorf)\rightarrow \dfautomata[(A+\{\star_{A}\})]{(Y\times \settorf)}$ such that 
    \begin{align*}
        P'(z, z') &\defeq \begin{cases}
            P(z, z') &\text{ if $z, z'\in X$,}\\
            P(z, \star) &\text{ if $z\in X,\, z' = \star,$}\\
            1 &\text{ if $z = z' = \star$,}\\
            0 &\text{ otherwise. }    
        \end{cases}
        &l'(z) \defeq \begin{cases}
            l(z) &\text{ if $z \in X$,}\\
            \star_A &\text{ if $z = \star$,} 
        \end{cases}\\
        d'(z)(a) &\defeq \begin{cases}
            \big((y', b),\bot\big) &\text{ if $a\in A,\,z = (y, \_),$ and $ d(y)(a) = (y', b)$, }\\
            \big((y, b),b\big) &\text{ if $a = \star_A$, and $z = (y, b)$, }
        \end{cases}
            &
    \end{align*}
    where the new alphabet $\star_A$ is for the sink state $\star$. 
    Then, the quantitative inference on $(c, d)$ can be reduced to the one on $(c', d')$, that is, $q\big( \lfpoint[x]{c},\,\lfpoint[y]{d}\big) = q\big( \lfpoint[x]{c'},\,\lfpoint[(y, b)]{d'}\big)$, where $b$ can be $\top$ or $\bot$. 
\end{proposition}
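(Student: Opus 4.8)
The plan is to exhibit an explicit correspondence between the accepting traces (with their probabilities) of the terminating system $(c,d)$ and those of the non-terminating system $(c',d')$, and then observe that the two queries evaluate to the same number on corresponding data. Concretely, I would first unfold the definitions: by \cref{ex:tau-semanticsMC} the semantics $\lfpoint[x]{c}$ is the subdistribution $\sigma\in\subdistributionfunc(A^{+})$ over finite traces that reach $\star$, and by \cref{ex:tau-semanticsDFA} the semantics $\lfpoint[y]{d}$ is the recognised language $L\subseteq A^{+}$; hence $q(\lfpoint[x]{c},\lfpoint[y]{d})=\sum_{w\in L}\sigma(w)$. On the other side, $\lfpoint[x]{c'}$ lives in the semantic domain of \cref{def:truth-valueMC} (a probability measure on $\salg{A+\{\star_A\}}$, since $c'$ never terminates), $\lfpoint[(y,b)]{d'}$ is again a language over $A+\{\star_A\}$, and $q$ is the query of \cref{def:query_mc_not-terminate} involving the partition $T^{i}$ and $T^{\omega}$.

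The key combinatorial observation I would establish is a bijection between finite paths $x=x_1\to\cdots\to x_n\to\star$ in $c$ and the infinite paths in $c'$ of the form $x_1\to\cdots\to x_n\to\star\to\star\to\cdots$, which is precisely the set of infinite paths of $c'$ starting in $x$ (every path of $c'$ from an $X$-state eventually absorbs into $\star$ with probability one \emph{only along} those that do reach $\star$ in $c$; paths that never reach $\star$ in $c$ correspond to infinite $X$-paths of $c'$ that never emit $\star_A$). Under this bijection the trace $w=l(x_1)\cdots l(x_n)\in A^{+}$ becomes the infinite trace $w\cdot\star_A^{\omega}$, and the path probability is preserved. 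Next I would check that $d'$, run on $w\cdot\star_A\star_A\cdots$, reaches an accepting state iff $d$ accepts $w$: reading $w$ drives the first component exactly as $d$ does while keeping the Boolean flag $\bot$ and storing the final acceptance bit $b$ of $d(y)$ after the last letter; then the first $\star_A$ transition copies that bit into the output and every subsequent $\star_A$ keeps it. So $w\in L(d,y)$ iff $w\cdot\star_A^{k}\in L(d',(y,b'))$ for all $k\ge 1$, independently of the initial auxiliary bit $b'$.

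From here the identity of the two queries is a bookkeeping step: the partition $T^{i}$ of the language $L(d',(y,b'))\subseteq (A+\{\star_A\})^{+}$ collects, for each $i$, the length-$i$ words that first enter the language at step $i$; because acceptance only occurs after reading the terminal symbol $\star_A$, these are exactly the words $w\cdot\star_A$ with $w\in L(d,y)$, $|w|=n$, $i=n+1$. The marginal $\mrg{\lfpoint[x]{c'}}{i}$ evaluated on $T^{i}$ then sums, over all such $w$, the probability that the infinite $c'$-path from $x$ has $w\cdot\star_A$ as a prefix, which by the path bijection is exactly $\sigma(w)$. Summing over $i$ (equivalently over $n$) gives $\sum_{w\in L(d,y)}\sigma(w)$, matching the left-hand side; the $T^{\omega}$ term contributes $0$ because no infinite trace lies in the cylinder closure of words already containing $\star_A$. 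I expect the main obstacle to be getting the measure-theoretic accounting exactly right --- in particular justifying that $\lfpoint[x]{c'}$ is the honest probability measure $\mathbb{P}_x$ on $\salg{A+\{\star_A\}}$ (invoking the Kolmogorov extension argument already cited after \cref{def:sigma_algebra}) and that the partition-based query genuinely recovers $\mathbb{P}_x$ of the relevant event, i.e.\ that the countable sum over the $T^{i}$ telescopes correctly and converges to the finite value $\sum_{w\in L}\sigma(w)$. Everything else is a routine unfolding of the least-fixed-point definitions of the four semantics maps, which by the Kleene characterisation can alternatively be verified level by level if a direct argument is cleaner.
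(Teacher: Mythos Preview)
Your approach is correct in substance, but it differs from the paper's proof. The paper leverages the correctness results already established for the two product constructions: since $q(\lfpoint[x]{c},\lfpoint[y]{d}) = \lfpoint[(x,y)]{c\otimes_\lambda d}$ (the product of \cref{ex:dis_mc_dfa,ex:ssProdMCDFA}) and $q(\lfpoint[x]{c'},\lfpoint[(y,b)]{d'}) = \lfpoint[(x,y,b)]{c'\otimes_\lambda d'}$ (the product of \cref{def:dist_MC_not-terminate}), it suffices to compare the two reachability probabilities on the product chains, and this follows from an evident probability-preserving bijection between finite paths to $\top$ in $c\otimes_\lambda d$ and finite paths to $\star$ in $c'\otimes_\lambda d'$. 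You instead bypass the products entirely and work directly with the four individual semantics and the two queries. Your route is more self-contained and arguably more explicit, but it requires unwinding the measure-theoretic query of \cref{def:query_mc_not-terminate} by hand; the paper's route is much shorter precisely because that work has already been absorbed into the correctness theorems.

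Two technical points in your outline deserve tightening. First, your description of $T^{i}$ as exactly $\{w\cdot\star_A \mid w\in L(d,y),\ |w|=i-1\}$ is not literally true: words such as $\star_A\, a_1\,\star_A$ (with $a_1\in L(d,y)$), and more generally words with $\star_A$ in interior positions, can also land in $T^{i}$. These extra words all have $\mathbb{P}_x$-measure zero for $x\in X$, since $c'$ only emits $\star_A$ after absorption, so your computation survives; but the set-level claim needs that qualification. Likewise your claim that acceptance is independent of the initial bit $b$ fails for the single word $\star_A$ when $b=\top$, again a measure-zero issue. Second, your remark that ``the $T^{\omega}$ term contributes $0$'' misreads \cref{def:query_mc_not-terminate}: for a probability measure $\sigma$ the query is $\sigma(T^{\omega})$, full stop, not a finite partition sum plus a separate $T^{\omega}$ remainder. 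The partition sum you compute is a correct way to evaluate $\sigma(T^{\omega})$, since $T^{\omega}=\biguplus_i\bigcup_{v\in T^i}\cyset{v}$ is a disjoint union of cylinders; so the numbers agree, but the framing should be fixed.
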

\begin{proof}[Proof Sketch]
    Since the product constructions for $(c, d)$ and $(c', d')$ are correct, it suffices to show that
    there is a bijective correspondence between the paths that reach the target state on products while preserving the path probability. 
    This is easy to prove by the construction. 
\end{proof}

\subsection{Weighted transition systems and NFAs}
\label{sec:proof_weight_nfas}
\begin{proposition}
The query $q$ given in~\cref{def:query_wts_nfas} is $\omega$-continuous and preserves the least elements, and makes the following diagram commute: 
 \begin{equation*}
 \small
\begin{tikzcd}
            \ltsfuncsemiring{\powersetfunc(A^{+}\times \nat)}{\nat}\times \exepnfautomata{\nempfaclang}\arrow[rr,"\modsys{\tau}\times \modspec{\tau}"] \arrow[d,"\lambda_{\powersetfunc(A^{+}\times \nat), \nempfaclang}"]
            & 
            &\powersetfunc(A^{+}\times \nat)\times \nempfaclang \arrow[d,"q"]\\
 \suptlflagfunc{\powersetfunc(A^{+}\times \nat)\times \nempfaclang}{\nat} \arrow[r,"\suptlflagfunc{q}{\nat}"]
            & \suptlflagfunc{\natinf}{\nat} \arrow[r, "\modprod{\tau}"]
            & \natinf
\end{tikzcd}
\end{equation*}
\end{proposition}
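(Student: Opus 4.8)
The statement bundles three claims: that $q$ preserves least elements, that $q$ is $\omega$-continuous, and that the diagram in the statement commutes. For orientation, recall that $(\powersetfunc(A^{+}\times\nat),\subseteq)$ and $(\nempfaclang,\subseteq)$ have unions as joins and $\emptyset$ as least element, whereas $(\natinf,\leq_o)$ is the \emph{reverse} of the usual order, so that a join of a $\leq_o$-chain is its infimum in the usual order and the least element $\bot$ is $\infty$. Preservation of least elements is then immediate: $q(\emptyset,\emptyset)=\min\emptyset=\infty=\bot$.

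For $\omega$-continuity I would take $\subseteq$-increasing chains $(T_i)_{i\in\nat}$ and $(L_j)_{j\in\nat}$. Enlarging either argument only enlarges the witness set $\{m\mid w\in L,\ (w,m)\in T\}$, so $(q(T_k,L_k))_k$ is nonincreasing in the usual order, hence a $\leq_o$-chain whose $\leq_o$-join — the right-hand side of the continuity equation — is $v=\inf_kq(T_k,L_k)$. To prove $q(\bigcup_iT_i,\bigcup_jL_j)=v$: if some pair $(w,m)$ satisfies $w\in\bigcup_jL_j$ and $(w,m)\in\bigcup_iT_i$, then $w\in L_j$ and $(w,m)\in T_i$ for some $i,j$, and monotonicity gives $q(T_k,L_k)\le m$ whenever $k\ge\max(i,j)$, so $v\le m$; minimizing over all such pairs yields $v\le q(\bigcup_iT_i,\bigcup_jL_j)$ (both sides being $\infty$ if no such pair exists). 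Conversely a nonincreasing sequence in $\natinf$ attains its infimum, say $q(T_{k_0},L_{k_0})=v$, and if $v<\infty$ the witnessing pair at stage $k_0$ already lies in the unions, so $q(\bigcup_iT_i,\bigcup_jL_j)\le v$. Equality follows. The only real subtlety is the reversed order, where I must keep in mind that ``$\bigvee$'' on $\natinf$ is an infimum.

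For commutation of the diagram I would fix $(T,\delta)\in\ltsfuncsemiring{\powersetfunc(A^{+}\times\nat)}{\nat}\times\exepnfautomata{\nempfaclang}$ and evaluate both composites. Going right then down, $q(\modsys{\tau}(T),\modspec{\tau}(\delta))$ is the minimum over pairs $(v,\mu)$ with $v\in\modspec{\tau}(\delta)$ and $(v,\mu)\in\modsys{\tau}(T)$; splitting on whether $v$ is a single letter $a$ or has the form $a\cdot w$ with $w\neq\epsilon$, and unwinding the two modalities, gives
\[
\min\Big(\{\mu\mid(\star,a,\mu)\in T,\ (\_,\top)\in\delta(a)\}\ \cup\ \{m+n\mid (S,a,m)\in T,\ (w,n)\in S,\ (L',\_)\in\delta(a),\ w\in L'\}\Big),
\]
and since $\min\{m+n\mid (w,n)\in S,\ w\in L'\}=m+q(S,L')$ the second set equals $\{m+q(S,L')\mid (S,a,m)\in T,\ (L',\_)\in\delta(a)\}$. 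Going down then right, $\lambda_{\powersetfunc(A^{+}\times\nat),\nempfaclang}(T,\delta)$ consists of the pairs $(b,m)$ with $(\star,a,m)\in T,\ (\_,b)\in\delta(a)$ together with the triples $(x,y,m)$ with $(x,a,m)\in T,\ (y,\_)\in\delta(a)$; applying $\suptlflagfunc{q}{\nat}$ replaces each $(x,y)$ by $q(x,y)$ and leaves the $\settorf$-part fixed, and then $\modprod{\tau}$ takes the minimum of $\{m\mid(\top,m)\in\,\cdot\,\}\cup\{q(x,y)+m\mid\,\cdot\,\}$. The $\top$-part reproduces exactly the first set above and the remaining part reproduces the second, after renaming $x\mapsto S$, $y\mapsto L'$; hence the two composites agree. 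The main obstacle is purely bookkeeping — keeping the case split clean and remembering that in $(\natinf,\leq_o)$ the pertinent join really is the minimum, so that these equalities of minima are the required equalities of joins.
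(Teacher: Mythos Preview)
Your proposal is correct and follows essentially the same route as the paper's own proof: both dispatch preservation of $\bot$ immediately, prove $\omega$-continuity by picking a witness in the union and pushing it down to some finite stage $k=\max(i,j)$, and verify the diagram by unfolding both composites into the same minimum over the two cases (single letter versus $a\cdot w$). The only cosmetic differences are that for the ``easy'' inequality in continuity you invoke the fact that a nonincreasing $\natinf$-sequence attains its infimum, whereas the paper simply cites monotonicity of $q$; and in the diagram chase you compress the inner minimisation into $m+q(S,L')$ before matching, while the paper expands both paths to the fully unfolded form $\{m+n\mid (T_1,a,m)\in S,\ (w,n)\in T_1,\ (T_2,\_)\in\delta(a),\ w\in T_2\}$ and observes they coincide literally.
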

\begin{proof}
Firstly, we prove the query $q$ is $\omega$-continuous. 
Let  $(S_i)_{i\in \nat}$ and $(T_j)_{j\in \nat}$ be $\omega$-chains on $\powersetfunc(A^{+}\times \nat)$ and $\nempfaclang$. We prove the following equation:
\begin{align*}
    \min \Big\{m \ \Big| \ w\in \bigvee_{j\in \nat} T_j,\, (w, m)\in \bigvee_{i\in \nat} S_i \Big\} = \bigvee_{k\in \nat} \min \Big\{m \ \Big| \ w\in T_k,\, (w, m)\in  S_k \Big\}.
\end{align*}

The following inequality is trivial, thus we prove the opposite inequality: 
\begin{align*}
    \min \Big\{m \ \Big| \ w\in \bigvee_{j\in \nat} T_j,\, (w, m)\in \bigvee_{i\in \nat} S_i \Big\} \geq_o \bigvee_{k\in \nat} \min \Big\{m \ \Big| \ w\in T_k,\, (w, m)\in  S_k \Big\}.
\end{align*}

Let $w\in \bigvee_{j\in \nat} T_j$ and $(w, m)\in \bigvee_{i\in \nat} S_i$ such that $m = \min \Big\{m \ \Big| \ w\in \bigvee_{j\in \nat} T_j,\, (w, m)\in \bigvee_{i\in \nat} S_i \Big\}$. Then, there is $j\in \nat$ and $i\in \nat$ such that $w\in T_j$ and $(w, m)\in S_i$. If we take $k \defeq \max(i, j)$, then $w\in T_k$ and $(w, m)\in S_k$. Thus, we obtain the following inequality: 
\begin{align*}
    \min \Big\{m \ \Big| \ w\in \bigvee_{j\in \nat} T_j,\, (w, m)\in \bigvee_{i\in \nat} S_i \Big\} =  \min \Big\{m \ \Big| \ w\in  T_k,\, (w, m)\in S_k \Big\} \leq_o \bigvee_{k\in \nat} \min \Big\{m \ \Big| \ w\in T_k,\, (w, m)\in  S_k \Big\}.
\end{align*}

Preserving the least elements is trivial, thus we finish the proof by showing the diagram commutes. 
{\small
\begin{align*}
    &\big(q \circ \modsys{\tau}\times \modspec{\tau}\big)(S, \delta) \defeq \min\big\{ m \ \big| \ w\in \modspec{\tau}(\delta),\, (w, m)\in \modsys{\tau}(S) \big\}\\
    &= \min \Big(\big\{\ m \ \big| \ (\_, \top) \in \delta(a),\, (\star,a, m)\in S  \big\}\cup\big\{\ m+n \ \big| \ w\in A^{+},\,(T_2, \_) \in \delta(a),\, w\in T_2 ,\, (T_1,a, m)\in S,\, (w, n)\in T_1  \big\}   \Big)\\
    &\Big(\tau \circ \powersetfunc\big((q+\settorf)\times \nat\big)\circ \lambda_{\powersetfunc(A^{+}\times \nat)\times \nempfaclang}\Big)(S, \delta)\\
    &= \Big(\tau \circ \powersetfunc\big((q+\settorf)\times \nat\big)\Big)\big(\big\{(b,m) \ \big| \ (\star, a, m)\in S \text{ and }(\_, b)\in\delta(a)\big\}\cup \big\{ (T_1, T_2,  m) \ \big| \ (T_1, a, m)\in S,\, (T_2,\_)\in \delta(a)\big\}\big)\\
    &= \tau\big(\big\{(b,m) \ \big| \ (\star, a, m)\in S \text{ and }(\_, b)\in\delta(a)\big\}\cup \big\{ (q(T_1, T_2), m) \mid (T_1, a, m)\in S,\, (T_2,\_)\in \delta(a) \big\}\big)\\
    &= \min \Big(\big\{\ m \ \big| \ (\_, \top) \in \delta(a),\, (\star,a, m)\in S  \big\}\cup\big\{\ m+n \ \big| \ w\in A^{+},\,(T_2, \_) \in \delta(a),\, w\in T_2 ,\, (T_1,a, m)\in S,\, (w, n)\in T_1  \big\}   \Big).
\end{align*}
}
\end{proof}

\subsection{Weighted transition systems and WMAs}
\label{sec:proof_weight_WMA}

\begin{proposition}
    The query $q$ given in~\cref{def:query_wts_WMAs} is $\omega$-continuous and preserves the least elements, and makes the following diagram commute: 
     \begin{equation*}
     \tiny
    \begin{tikzcd}
                \ltsfuncsemiring{\powersetfunc(A^{+}\times \nat)}{\nat}\times \wmealyfunc{\powersetfunc(A^{+}\times \nat)}\arrow[rr,"\modsys{\tau}\times \modspec{\tau}"] \arrow[d,"\lambda_{\powersetfunc(A^{+}\times \nat), \powersetfunc(A^{+}\times \nat)}"]
                & 
                &\powersetfunc(A^{+}\times \nat)\times\powersetfunc(A^{+}\times \nat) \arrow[d,"q"]\\
     \suptlflagfunc{\powersetfunc(A^{+}\times \nat)\times \powersetfunc(A^{+}\times \nat)}{\nat} \arrow[r,"\suptlflagfunc{q}{\nat}"]
                & \suptlflagfunc{\natinf}{\nat} \arrow[r, "\modprod{\tau}"]
                & \natinf
    \end{tikzcd}
    \end{equation*}
\end{proposition}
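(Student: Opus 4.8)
The plan is to instantiate the correctness criterion (\cref{thm:correctness_product}) for the query $q(T,L)=\min\{m+n\mid (w,m)\in T,\ (w,n)\in L\}$, following the same three-step template used for weighted transition systems and NFAs in \cref{sec:proof_weight_nfas}; the only genuine novelty is that now \emph{both} the system and the requirement carry weights, so $\lambda$ must add a system weight and a requirement penalty, and correspondingly $q$ must sum the two accumulated weights on the outside.

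First, preservation of least elements is immediate: the bottom of $\powersetfunc(A^{+}\times\nat)$ is $\emptyset$, so $q(\emptyset,\emptyset)=\min\emptyset=\infty$, which is the bottom of $(\natinf,\leq_o)$. Second, for $\omega$-continuity I would first note that $q$ is monotone with respect to $\subseteq$ in each argument and $\leq_o$ in the codomain (enlarging $T$ or $L$ enlarges the set being minimised, hence lowers the $\min$, hence raises the value in $\leq_o$). For $\omega$-chains $(S_i)_{i\in\nat}$ and $(L_j)_{j\in\nat}$ monotonicity gives $\bigvee_k q(S_k,L_k)\leq_o q(\bigcup_i S_i,\bigcup_j L_j)$ directly; for the converse, I pick a witness $(w,m)\in\bigcup_i S_i$, $(w,n)\in\bigcup_j L_j$ realising the right-hand $\min$ (if none exists the value is $\infty$ and the inequality is trivial), observe that both pairs lie in $S_k$ and $L_k$ for $k=\max(i,j)$, so $q(S_k,L_k)\le m+n$, hence $q(S_k,L_k)\ge_o q(\bigcup_i S_i,\bigcup_j L_j)$. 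This is verbatim the argument of \cref{sec:proof_weight_nfas} with $m+n$ in place of $m$.

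Third, the computational heart is the pentagon in the statement. I would unfold both composites. Along the top–right, $\modsys{\tau}(S)$ contains $(a,m)$ whenever $(\star,a,m)\in S$ and $(a\cdot w',\,m+n_1)$ whenever $(T_1,a,m)\in S$ and $(w',n_1)\in T_1$; symmetrically $\modspec{\tau}(\delta)$ contains $(a,n)$ whenever $(\_,\top,n)\in\delta(a)$ and $(a\cdot w',\,n+n_2)$ whenever $(T_2,\_,n)\in\delta(a)$ and $(w',n_2)\in T_2$; applying $q$ and splitting on word length yields $\min$ over the single-letter part $\{m+n\mid (\star,a,m)\in S,\ (\_,\top,n)\in\delta(a)\}$ together with the longer-word part $\{m+n+n_1+n_2\mid (T_1,a,m)\in S,\ (T_2,\_,n)\in\delta(a),\ (w',n_1)\in T_1,\ (w',n_2)\in T_2\}$. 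Along the left–bottom, $\lambda_{X,Y}(S,\delta)$ produces $(b,m+n)$ from $(\star,a,m)\in S$, $(\_,b,n)\in\delta(a)$ and $(T_1,T_2,m+n)$ from $(T_1,a,m)\in S$, $(T_2,\_,n)\in\delta(a)$; then $\suptlflagfunc{q}{\nat}$ replaces each $(T_1,T_2,m+n)$ by $(q(T_1,T_2),\,m+n)$; then $\modprod{\tau}$ minimises over $\{m+n\mid (\top,m+n)\text{ present}\}\cup\{(m+n)+q(T_1,T_2)\mid (q(T_1,T_2),m+n)\text{ present}\}$; and finally expanding $q(T_1,T_2)=\min\{n_1+n_2\mid (w',n_1)\in T_1,\ (w',n_2)\in T_2\}$ and distributing the inner $\min$ through $+$ gives exactly the same two-part expression. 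Termwise equality then yields commutation.

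The main obstacle I anticipate is purely bookkeeping: tracking the four weight parameters — the system weight $m$, the WMM penalty $n$, and the two successor weights $n_1,n_2$ — consistently through the composite $\modprod{\tau}\circ\suptlflagfunc{q}{\nat}\circ\lambda$, and being careful about the two edge cases, namely that the ``$\min$ of a sum equals the sum of a $\min$'' step is legitimate in the tropical semiring on $\natinf$ (using $m+\infty=\infty$), and that whenever an index set is empty on one side the corresponding contribution is $\infty$ on the other side as well, so both composites return $\infty$ in degenerate situations. No new conceptual ingredient beyond the \cref{sec:proof_weight_nfas} proof is required; naturality of $\lambda$ need not be re-checked here since it is part of the hypothesis that $\lambda$ is a distributive law.
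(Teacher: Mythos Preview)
Your proposal is correct and follows essentially the same approach as the paper's own proof: both establish $\omega$-continuity via a witness-and-$\max(i,j)$ argument, dismiss preservation of least elements as immediate, and verify the pentagon by unfolding the two composites into identical two-part $\min$ expressions indexed by the single-letter case $\{m+n\mid(\star,a,m)\in S,\ (\_,\top,n)\in\delta(a)\}$ and the longer-word case with four weight parameters. The tropical distributivity step you flag is exactly what the paper uses implicitly when expanding $q(T_1,T_2)$ inside $\modprod{\tau}$.
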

\begin{proof}
    Firstly, we prove the query $q$ is $\omega$-continuous. 
    Let  $(S_i)_{i\in \nat}$ and $(T_j)_{j\in \nat}$ be $\omega$-chains on $\powersetfunc(A^{+}\times \nat)$. We prove the following equation:
    \begin{align*}
        \min \Big\{m+n \ \Big| \ (w, m)\in \bigvee_{j\in \nat} S_j,\, (w, n)\in \bigvee_{i\in \nat} T_i \Big\} = \bigvee_{k\in \nat} \min \Big\{ m+n \ \Big| \ (w, m)\in S_k,\, (w, n)\in  T_k \Big\}.
    \end{align*}
    
    The following inequality is trivial, thus we prove the opposite inequality: 
    \begin{align*}
        \min \Big\{m+n \ \Big| \ (w, m)\in \bigvee_{j\in \nat} S_j,\, (w, n)\in \bigvee_{i\in \nat} T_i \Big\} \geq_o \bigvee_{k\in \nat} \min \Big\{m+n \ \Big| \ (w, m)\in S_k,\, (w, n)\in  T_k \Big\}.
    \end{align*}
    
    Let $(w, m)\in \bigvee_{j\in \nat} S_j$ and $(w, n)\in \bigvee_{i\in \nat} T_i$ such that $m + n = \min \Big\{m + n\ \Big| \ (w, m)\in \bigvee_{j\in \nat} S_j,\, (w, n)\in \bigvee_{i\in \nat} T_i \Big\}$. Then, there is $j\in \nat$ and $i\in \nat$ such that $(w, m)\in S_j$ and $(w, n)\in T_i$. If we take $k \defeq \max(i, j)$, then $(w, m)\in S_k$ and $(w, n)\in T_k$. Thus, we obtain the following inequality: 
    \begin{align*}
        \min \Big\{m + n \ \Big| \ (w, m)\in \bigvee_{j\in \nat} S_j,\, (w, n)\in \bigvee_{i\in \nat} T_i \Big\} &=  \min \Big\{m + n \ \Big| \ (w, m)\in  S_k,\, (w, n)\in T_k \Big\}\\
        & \leq_o \bigvee_{k\in \nat} \min \Big\{m + n \ \Big| \ (w, m)\in S_k,\, (w, n)\in  T_k \Big\}.
    \end{align*}
    
    Preserving the least elements is trivial, thus we finish the proof by showing the diagram commutes. 
    {\small
    \begin{align*}
        &\big(q \circ \modsys{\tau}\times \modspec{\tau}\big)(S, \delta) \defeq \min\big\{ m + n \ \big| \ (w, m)\in \modsys{\tau}(S),\, (w, n)\in \modspec{\tau}(\delta) \big\}\\
        &= \min \Big(\big\{\ m + n \ \big| \ (\star,a, m)\in S, (\_, \top, n) \in \delta(a)  \big\}\\
        &\bigcup\big\{\ m_1 + m_2 + n_1 + n_2 \ \big| \ w\in A^{+},\,(T_2, \_, n_1) \in \delta(a),\, (w, n_2)\in T_2 ,\, (T_1,a, m_1)\in S,\, (w, m_2)\in T_1  \big\}   \Big)\\
        &\Big(\tau \circ \powersetfunc\big((q+\settorf)\times \nat\big)\circ \lambda_{\powersetfunc(A^{+}\times \nat)\times\powersetfunc(A^{+}\times \nat)}\Big)(S, \delta)\\
        &= \Big(\tau \circ \powersetfunc\big((q+\settorf)\times \nat\big)\Big)\Big(\big\{(b,m+n) \ \big| \ (\star, a, m)\in S \text{ and }(\_, b, m)\in\delta(a)\big\}\\
        &\bigcup \big\{ (T_1, T_2,  m+n) \ \big| \ (T_1, a, m)\in S,\, (T_2,\_, n)\in \delta(a)\big\}\Big)\\
        &= \tau\big(\big\{(b,m + n) \ \big| \ (\star, a, m)\in S \text{ and }(\_, b, n)\in\delta(a)\big\}\cup \big\{ (q(T_1, T_2), m+n) \mid (T_1, a, m)\in S,\, (T_2,\_, n)\in \delta(a) \big\}\big)\\
        &= \min \Big(\big\{\ m + n \ \big| \ (\_, \top, n) \in \delta(a),\, (\star,a, m)\in S  \big\}\\
        &\bigcup\big\{\ m_1+n_1+m_2+n_2 \ \big| \ w\in A^{+},\,(T_2, \_, n_1) \in \delta(a),\, (w, n_2)\in T_2 ,\, (T_1,a, m_1)\in S,\, (w, m_2)\in T_1  \big\}   \Big).
    \end{align*}
    }
\end{proof}

\section{Fibrational Perspectives on Coalgebraic Products}
\label{sec:lifting} 

\begin{proposition}[lifting functor $\otimes$ via a distributive law $\lambda$ ]
\label{prop:lift_product}
Let $\lambda$ be a distributive law from functors $F_S$ and $F_R$ to a functor $F_{S\otimes R}$ on $\sets$. The distributive law $\lambda$ gives the lifting $\otimes: \coalg{F_S}\times\coalg{F_R}\rightarrow \coalg{F_{S\otimes R}} $ of the cartesian product $\times$ in the category $\sets$ of sets. To put it precisely, the coalgebraic product construction $\otimes$ with $\lambda$ forms the functor $\otimes\colon \coalg{F_S}\times\coalg{F_R}\rightarrow \coalg{F_{S\otimes R}}$ with commuting the following diagram:
\begin{equation*}
\begin{tikzcd}
    \coalg{F_S}\times\coalg{F_R} \arrow[r, "\otimes"] \arrow[d, "U_S\times U_R"]& \coalg{F} \arrow[d, "U_{S\otimes R}"]\\
    \sets \times \sets  \arrow[r, "\times"] & \sets
\end{tikzcd}
\end{equation*}
where $U_S, U_R$, and $U_{S\otimes R}$ are forgetful functors.
\end{proposition}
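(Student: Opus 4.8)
The plan is to show that the assignment $(c,d) \mapsto c \otimes_\lambda d$, together with the product of functions on carriers, assembles into a functor making the square commute. Since a coalgebra morphism between $F$-coalgebras $c\colon X\to F(X)$ and $c'\colon X'\to F(X')$ is a function $h\colon X\to X'$ with $F(h)\circ c = c'\circ h$, the statement decomposes into three checks: (i) the object map is well-defined (already given: $c\otimes_\lambda d$ is an $F_{S\otimes R}$-coalgebra on $X\times Y$); (ii) if $f\colon (X,c)\to(U,c')$ and $g\colon (Y,d)\to(V,d')$ are coalgebra morphisms, then $f\times g\colon X\times Y \to U\times V$ is a coalgebra morphism $(X\times Y, c\otimes_\lambda d)\to(U\times V, c'\otimes_\lambda d')$; and (iii) functoriality, i.e.\ $\otimes$ preserves identities and composition, which is then immediate because on carriers it acts as $\times$ on $\sets$, which is already functorial, and a function underlying a coalgebra morphism uniquely determines that morphism.

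The heart of the argument is step (ii), and this is exactly where the naturality of $\lambda$ enters. I would draw the diagram
\begin{equation*}
\begin{tikzcd}
X\times Y \arrow[r, "c\times d"] \arrow[d, "f\times g"] & F_S(X)\times F_R(Y) \arrow[r, "\lambda_{X,Y}"] \arrow[d, "F_S(f)\times F_R(g)"] & F_{S\otimes R}(X\times Y) \arrow[d, "F_{S\otimes R}(f\times g)"]\\
U\times V \arrow[r, "c'\times d'"] & F_S(U)\times F_R(V) \arrow[r, "\lambda_{U,V}"] & F_{S\otimes R}(U\times V)
\end{tikzcd}
\end{equation*}
The left square commutes by the assumption that $f$ and $g$ are coalgebra morphisms (it is just the product in $\sets$ of the two morphism squares $F_S(f)\circ c = c'\circ f$ and $F_R(g)\circ d = d'\circ g$). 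The right square is precisely the naturality square for $\lambda$ at the pair of functions $(f,g)$, as stated in \cref{def:distributive_law} and depicted in \cref{fig:productExMC}(a). Pasting the two squares and using $c\otimes_\lambda d = \lambda_{X,Y}\circ(c\times d)$ and $c'\otimes_\lambda d' = \lambda_{U,V}\circ(c'\times d')$ yields $F_{S\otimes R}(f\times g)\circ(c\otimes_\lambda d) = (c'\otimes_\lambda d')\circ(f\times g)$, which is the defining equation for $f\times g$ to be a coalgebra morphism.

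For step (iii), I would observe that $U_{S\otimes R}(f\times g) = f\times g = (U_S f)\times(U_R g)$ on the underlying functions, and since $\times\colon \sets\times\sets\to\sets$ preserves identities and composition, so does $\otimes$: $\id_X\times\id_Y = \id_{X\times Y}$ is the identity morphism on $c\otimes_\lambda d$, and $(f'\times g')\circ(f\times g) = (f'\circ f)\times(g'\circ g)$ gives preservation of composition. The commuting square with the forgetful functors is then immediate by construction, since $U_{S\otimes R}$ applied to $\otimes((X,c),(Y,d))$ returns $X\times Y$, which is exactly $\times(U_S(X,c), U_R(Y,d))$, and likewise on morphisms.

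I do not anticipate a genuine obstacle here; this proposition is a structural repackaging of the fact (already flagged in the paragraph before \cref{ex:dis_mc_dfa} and after \cref{def:distributive_law}) that naturality of $\lambda$ is unused in \emph{defining} the product but is exactly what makes it \emph{functorial}. The only point requiring a moment's care is bookkeeping the directions of the naturality square and confirming that the relevant instance is at the pair $(f,g)$ of carrier functions of the two given coalgebra morphisms rather than at arbitrary functions; but since coalgebra morphisms are, by definition, certain functions, this is automatic. If anything, the mild subtlety is purely presentational: being explicit that a coalgebra morphism is determined by its underlying function, so that functoriality of $\otimes$ reduces to functoriality of $\times$ on $\sets$.
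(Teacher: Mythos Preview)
Your proposal is correct and follows essentially the same approach as the paper: define $f\otimes g \defeq f\times g$, verify it is a coalgebra morphism by pasting the product of the two morphism squares with the naturality square of $\lambda$, and then observe that preservation of identities and composition is inherited from $\times$ on $\sets$. The paper's proof is in fact terser than yours, merely drawing the commuting diagram and declaring the functor axioms ``easy to prove.''
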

\begin{proof}
We prove the coalgebraic product construction forms the functor; proving the diagram commutes is trivial. 
For arrows $f\in \coalg{F_S}(c_1, c_2)$ and $g\in \coalg{F_R}(d_1, d_2)$, the arrow $f\otimes g \in\coalg{F_{S\otimes R}}(c_1\otimes_{\lambda} d_1, c_2\otimes_{\lambda} d_2) $ is defined by $f\otimes g \defeq f\times g$. This is well-defined, i.e., the following diagram commutes by the naturality of $\lambda$: 
\begin{equation*}
\begin{tikzcd}
    F_{S\otimes R}(X_1\times Y_1) \arrow[rr, "F_{S\otimes R}(f\times g)"] && F(X_2\times Y_2)\\
    F_S(X_1) \times F_R(Y_1) \arrow[u, "\lambda_{X_1, Y_1}"] \arrow[rr, "F_S(f)\times F_R(g)"] && F_S(X_2)\times F_R(Y_2) 
    \arrow[u, "\lambda_{X_2, Y_2}"]\\
    X_1\times Y_1  \arrow[u, "c_1\times d_1"] \arrow[rr, "f\times g"]&& X_2\times Y_2\arrow[u, "c_2\times d_2"]
\end{tikzcd}
\end{equation*}
Preserving identities and sequential compositions are easy to prove. 
\end{proof}

Fibrations have been studied for characterising  coinductive predicates~\cite{HermidaJ98,HasuoKC18} in a coalgebra. In this paper, we only define the predicate fibration without introducing fibration theory~\cite{Jacobs2001}.

\begin{definition}[slice category, and predicate fibration]
\label{def:slicecat_predfib}
Let $\Omega$ be a set. The \emph{slice category} $\slice{\sets}{\Omega}$ is the category whose object is a function $u\colon X\rightarrow \Omega$ and arrow from $u_1\colon X\rightarrow \Omega$ to $u_2\colon Y\rightarrow \Omega$ is a function $f\colon X\rightarrow Y$ such that $u_2\circ f = u_1$. The \emph{predicate fibration} $p\colon\slice{\sets}{\Omega}\rightarrow \sets$ is the forgetful functor, in particular, for a function $u\colon X\rightarrow \Omega$, $p(u)$ is given by  $p(u)\defeq X$. 
\end{definition}

There is a well-establish construction of the lifting $\liftfunc{F}{\tau}$ of the functor $F$ on $\sets$ along the predicate fibration $p\colon\slice{X}{\Omega}\rightarrow \Omega$ with the \emph{modality} $\tau\colon F(\Omega)\rightarrow \Omega$. We construct the predicate transformer $\predtran{\tau,c}$ from the lifting $\liftfunc{F}{\tau}$ and a coalgebra $c\colon X\rightarrow F(X)$.

\begin{definition}[lifting, and predicate transformer]
\label{def:lifting_predtran}
Assume the following: (i) a functor $F$ on $\sets$,
(ii) a set $\Omega$, (iii) an arrow $\tau\colon F(\Omega)\rightarrow \Omega$.

The \emph{lifting} $\liftfunc{F}{\tau}$ is the functor on $\slice{\sets}{\Omega}$ given by $\liftfunc{F}{\tau}(u)\defeq \tau\circ F(u)$ and $\liftfunc{F}{\tau}(g)\defeq F(g)$ for an object $u\colon X\rightarrow \Omega$ and an arrow $g\colon u_1\rightarrow u_2$.

In addition, let $c\colon X\rightarrow F(X)$ be a coalgebra. The predicate transformer $\predtran{\tau,c}$ on $\sets(X, \Omega)$ is given by $\predtran{\tau,c}(u) \defeq \liftfunc{F}{\tau}(u)\circ c$. 
\end{definition}

In fibrational point of view, the commutative diagram in~\cref{thm:correctness_product} is a sufficient condition for the lifting $\liftdist{\lambda}$ of a distributive law $\lambda$ along the predicate fibration as illustrated in~\cref{fig:liftingNaturalTransformation}. 
\begin{proposition}[lifting   $\liftdist{\lambda}$ of a distributive law $\lambda$]
\label{prop:lifting_of_distributive_law}
Assume the diagram in~\cref{thm:correctness_product} commutes. 
Then, there is the lifting $\liftdist{\lambda}\colon \liftdist{q}\circ \liftdist{\times}\circ \big(\liftfunc{F_S}{\tau_S} \times \liftfunc{F_R}{\tau_R}\big) \Rightarrow  \liftfunc{F_{S\otimes R}}{\tau_{S\otimes R}}\circ\liftdist{q}\circ \liftdist{\times}$ of the distributive law $\lambda\colon \times \circ F_S\times F_R\Rightarrow F_{S\otimes R}\circ \times $ shown in~\cref{fig:liftingNaturalTransformation}, where the functor $\liftdist{q}\colon \Frac{\sets}{\Omega_S\times \Omega_R} \rightarrow \Frac{\sets}{\Omega_{S\otimes R}}$ is given by $\liftdist{q}(u) \defeq q \circ u $, and the functor $\liftdist{\times}\colon \Frac{\sets}{\Omega_S}\times  \Frac{\sets}{\Omega_R} \rightarrow  \Frac{\sets}{\Omega_S\times \Omega_R} $ is given by $\liftdist{\times}(u_S, u_R) \defeq  u_S\times u_R$. 
\qed
\end{proposition}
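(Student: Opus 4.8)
The plan is to define $\liftdist{\lambda}$ componentwise by the only reasonable choice and then verify the two things required of it: that each component is a morphism in the appropriate slice category, and that the family is natural. Fix objects $u_S\colon X\to\Omega_S$ and $u_R\colon Y\to\Omega_R$. Unwinding \cref{def:lifting_predtran} together with the definitions of $\liftdist{\times}$ and $\liftdist{q}$ from the statement, the source object $\liftdist{q}\circ\liftdist{\times}\circ(\liftfunc{F_S}{\tau_S}\times\liftfunc{F_R}{\tau_R})(u_S,u_R)$ is the function with carrier $F_S(X)\times F_R(Y)$ given by $q\circ\bigl((\tau_S\circ F_S(u_S))\times(\tau_R\circ F_R(u_R))\bigr)$, while the target object $\liftfunc{F_{S\otimes R}}{\tau_{S\otimes R}}\circ\liftdist{q}\circ\liftdist{\times}(u_S,u_R)$ is the function with carrier $F_{S\otimes R}(X\times Y)$ given by $\tau_{S\otimes R}\circ F_{S\otimes R}\bigl(q\circ(u_S\times u_R)\bigr)$. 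I would then set $\liftdist{\lambda}_{(u_S,u_R)}\defeq\lambda_{X,Y}$, the component of the given distributive law at $(X,Y)$; note $\lambda_{X,Y}$ has exactly the right domain and codomain carriers.

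The main step is to check that $\lambda_{X,Y}$ is a legitimate morphism in $\Frac{\sets}{\Omega_{S\otimes R}}$ between the two objects above, i.e. that
\[
\tau_{S\otimes R}\circ F_{S\otimes R}\bigl(q\circ(u_S\times u_R)\bigr)\circ\lambda_{X,Y}
\;=\;
q\circ\bigl((\tau_S\circ F_S(u_S))\times(\tau_R\circ F_R(u_R))\bigr).
\]
I would prove this by the chain: use functoriality of $F_{S\otimes R}$ to split the left side as $\tau_{S\otimes R}\circ F_{S\otimes R}(q)\circ F_{S\otimes R}(u_S\times u_R)\circ\lambda_{X,Y}$; then apply the naturality of $\lambda$ (\cref{def:distributive_law}) at the functions $u_S,u_R$ to rewrite $F_{S\otimes R}(u_S\times u_R)\circ\lambda_{X,Y}$ as $\lambda_{\Omega_S,\Omega_R}\circ(F_S(u_S)\times F_R(u_R))$; finally invoke the hypothesis that \cref{eq:coherence_evaluation} commutes, which reads precisely $\tau_{S\otimes R}\circ F_{S\otimes R}(q)\circ\lambda_{\Omega_S,\Omega_R}=q\circ(\tau_S\times\tau_R)$, and reassociate. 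This is the only point where the assumption of \cref{thm:correctness_product} enters, and it is the conceptual heart of the proof, even though the manipulation itself is routine diagram chasing.

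It remains to check naturality of $\liftdist{\lambda}$. A morphism $(u_S,u_R)\to(u_S',u_R')$ in $\Frac{\sets}{\Omega_S}\times\Frac{\sets}{\Omega_R}$ is a pair of base functions $(f,g)$ compatible with the structure maps, and every functor appearing in the naturality square of $\liftdist{\lambda}$ acts on $(f,g)$ simply by applying $F_S,F_R,F_{S\otimes R}$ and $\times$ to $f$ and $g$ in $\sets$. Hence, after applying the predicate fibration $p\colon\Frac{\sets}{\Omega_{S\otimes R}}\to\sets$ (which is faithful and carries a slice morphism to its underlying function), the naturality square of $\liftdist{\lambda}$ at $(f,g)$ becomes exactly the naturality square of $\lambda$ at $(f,g)$, which commutes by \cref{def:distributive_law}; since slice morphisms are determined by their underlying functions, the original square commutes as well. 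The same bookkeeping shows that $p$ carries $\liftdist{\lambda}$, and its source and target functors, back to $\lambda$ and its source and target (after the evident whiskering), so $\liftdist{\lambda}$ genuinely lies above $\lambda$ along the predicate fibrations as depicted in \cref{fig:liftingNaturalTransformation}. I do not anticipate a real obstacle; the only things requiring care are tracking the domains and codomains of the various slice objects so that the appeal to naturality of $\lambda$ at $(u_S,u_R)$ type-checks, and confirming that $\liftdist{\times}$ and $\liftdist{q}$ are themselves functors, which is immediate from the compatibility condition defining slice morphisms.
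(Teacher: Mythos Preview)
Your proposal is correct and is precisely the argument the paper has in mind: the paper states this proposition with a bare \qed{} and no proof, so your write-up simply fills in the routine verification that defining $\liftdist{\lambda}_{(u_S,u_R)}\defeq\lambda_{X,Y}$ works, using naturality of $\lambda$ together with the commuting square~\cref{eq:coherence_evaluation}. There is no alternative route to compare against; your decomposition into (i) well-definedness as a slice morphism and (ii) naturality over the fibration is the canonical one.
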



\begin{figure}
\centering
\begin{tikzcd}
     (\Frac{\sets}{\Omega_S})\times   (\Frac{\sets}{\Omega_R})
      \arrow[rrrr, "\widetilde{q}\circ \widetilde{\times}\circ \big(\liftfunc{F_S}{\tau_S} \times \liftfunc{F_R}{\tau_R}\big)"{name=U, above}, bend left=20]
      \arrow[rrrr, bend right=20, "\liftfunc{F_{S\otimes R}}{\tau_{S\otimes R}}\circ\widetilde{q}\circ \widetilde{\times}"{name=D, below}] \arrow[ddd, "p_S\times p_R"] 
     &
     &
     &
     &\Frac{\sets}{\Omega_{S\otimes R}} 
     \arrow[Rightarrow, "\liftdist{\lambda}", from=U, to=D] \arrow[ddd, "p_{S\otimes R}"] \\
     &
     &
     &
     &
     \\
     &
     &
     &
     &
     \\
     \sets\times \sets \arrow[rrrr, "\times \circ F_S \times F_R"{name=A, above}, bend left=20]
      \arrow[rrrr, bend right=20, "F_{S\otimes R}\circ \times "{name=B, below}] 
     &
     &
     & 
     & \sets \arrow[Rightarrow, "\lambda", from=A, to=B]
\end{tikzcd}
\caption{The lifting $\liftdist{\lambda}$ of the distributive law $\lambda$}
\label{fig:liftingNaturalTransformation}
\end{figure}

\else 
\fi

\end{document}

\endinput